\def\B{\mathscr B}
\def\C{\mathbb C}
\def\D{\mathscr D}
\def\d{\mathrm d}
\def\F{\mathscr F}
\def\G{\mathcal G}
\def\H{\mathcal H}
\def\K{\mathcal K}
\def\M{\mathsf M}
\def\N{\mathbb N}
\def\O{\mathcal O}
\def\Oa{{\mathcal O}_{\rm as}}
\def\P{\mathcal P}
\def\R{\mathbb R}
\def\S{\mathscr S}
\def\T{\mathbb T}
\def\U{\mathscr U}
\def\Z{\mathbb Z}
\def\Hrond{\mathscr H}
\def\Pv{\mathrm{Pv}}
\def\Fc{\F_{\rm c}}
\def\dom{\mathcal D}
\def\Rem{\mathsf{Rem}}
\def\lone{\mathsf{L}^{\:\!\!1}}
\def\ltwo{\mathsf{L}^{\:\!\!2}}
\def\linf{\mathsf{L}^{\:\!\!\infty}}
\def\e{\mathop{\mathrm{e}}\nolimits}
\DeclareMathOperator*{\im}{Im}
\DeclareMathOperator*{\re}{Re}
\DeclareMathOperator*{\slim}{s\;\!-lim\;\!}
\DeclareMathOperator*{\ulim}{u\;\!-lim\;\!}
\newtheorem{Theorem}{Theorem}[section]
\newtheorem{Remark}[Theorem]{Remark}
\newtheorem{Lemma}[Theorem]{Lemma}
\newtheorem{Corollary}[Theorem]{Corollary}
\newtheorem{Proposition}[Theorem]{Proposition}
\begin{document}


\title{Spectral and scattering properties at thresholds for the Laplacian in a half-space
with a periodic boundary condition
}

\author{S. Richard$^1$\footnote{Supported by JSPS Grant-in-Aid for Young Scientists A no 26707005.}~~and R. Tiedra de
Aldecoa$^2$\footnote{Supported by the Chilean Fondecyt Grant 1130168 and by the
Iniciativa Cientifica Milenio ICM RC120002 ``Mathematical Physics'' from the Chilean
Ministry of Economy.}}

\date{\small}
\maketitle \vspace{-1cm}

\begin{quote}
\emph{
\begin{itemize}
\item[$^1$] Graduate school of mathematics, Nagoya University,
Chikusa-ku, Nagoya 464-8602, Japan; On leave of absence from
Universit\'e de Lyon; Universit\'e
Lyon 1; CNRS, UMR5208, Institut Camille Jordan,
43 blvd du 11 novembre 1918, F-69622
Villeurbanne-Cedex, France
\item[$^2$] Facultad de Matem\'aticas, Pontificia Universidad Cat\'olica de Chile,\\
Av. Vicu\~na Mackenna 4860, Santiago, Chile
\item[] \emph{E-mails:} richard@math.univ-lyon1.fr, rtiedra@mat.puc.cl
\end{itemize}
}
\end{quote}


\begin{abstract}
For the scattering system given by the Laplacian in a half-space with a periodic
boundary condition, we derive resolvent expansions at embedded thresholds, we prove
the continuity of the scattering matrix, and we establish new formulas for the wave
operators.
\end{abstract}

\textbf{2010 Mathematics Subject Classification:} 47A10, 81U35, 35J10.

\smallskip

\textbf{Keywords:} Thresholds, resolvent expansions, scattering matrix, wave operators.


\section{Introduction}\label{Intro}
\setcounter{equation}{0}

We present in this paper new results for the Laplacian in a half-space subject to a
periodic boundary condition, as introduced and described by R. L. Frank and R. G.
Shterenberg in \cite{Fra03,Fra06,FS04}. We derive resolvent expansions at embedded
thresholds (which occur in an infinite number after a Floquet decomposition), we prove
the continuity of the scattering matrix at thresholds, and we establish new
representation formulas for the wave operators. These results belong to the
intersection of two active research topics in spectral and scattering theory. On one
hand, resolvent expansions at thresholds (which have a long history, but which have
been more systematically
developed since the seminal paper of A. Jensen and G. Nenciu \cite{JN01}, see also \cite{ES04,IJ13,JN04,RT14}).
On the second hand, representation formulas for the wave operators and their application to the proof of
index theorems in scattering theory (see \cite{BSB12,IR12,KR08,KR12,RT10,RT13_4,SB13}
and references therein). These results also furnish a new contribution to the very short
list of papers devoted to the subtle, and still poorly understood, topic of spectral and
scattering theory at embedded thresholds
(to our knowledge only the references \cite{BDG00,BDM06,CGDG04,Dur03,GJY04,RT14}
deal specifically with this issue).

Before giving a more precise description of our results, we recall the definition and
some of the properties (established in \cite{Fra03,Fra06,FS04}) of the model we
consider. The model consists in a scattering system $\{H^0,H^V\}$, where $H^V$ (the
perturbed operator) is the Laplacian on the half-space $\R\times\R_+$ subject to a
boundary condition on $\R\times\{0\}$ given in terms of a $2\pi$-periodic function
$V:\R\to\R$, and where $H^0$ (the unperturbed operator) is the Neumann Laplacian on
$\R\times\R_+$. An application of a Bloch-Floquet-Gelfand transform in the periodic
variable shows that the pair $\{H^0,H^V\}$ is unitarily equivalent to a family of
self-adjoint operators $\{H^0_k,H^V_k\}_{k\in[-1/2,1/2]}$ acting in the Hilbert space
$\ltwo\big((-\pi,\pi)\times\R_+\big)$. The operators $H^0_k$ have purely absolutely
continuous spectrum, whereas the operators $H^V_k$ have no singular continuous
spectrum but can have discrete spectrum (with only possible accumulation point at
$+\infty$). Under suitable conditions on $V$, it is known that the wave operators
$W_{k,\pm}:=W_\pm(H^0_k,H^V_k)$ exist and are complete, and that the full wave
operators $W_\pm:=W_\pm(H^0,H^V)$ exist, but may be not complete. The states belonging
to the cokernel of $W_\pm$ are interpreted as \emph{surface states}; that is, states
which propagate along the boundary $\R\times\{0\}$.

The completeness of the wave operators $W_{k,\pm}$ and the intertwining property imply
that the scattering operator $S_k:=W_{k,+}^*W_{k,-}$ is unitary and decomposable in
the spectral representation of $H^0_k$. However, since the spectral multiplicity of
$H^0_k$ is piecewise constant with a jump at each point of the threshold set
$$
\tau_k:=\big\{\lambda_{k,n}:=(n+k)^2\mid n\in\Z\big\},
$$
the scattering matrix $S_k(\lambda)$ can only be defined for $\lambda\notin\tau_k$.
Therefore, the continuity of $S_k(\lambda)$ in $\lambda$ can only be proved in a
suitable sense. By introducing channels corresponding to the transverse modes on the
interval $(-\pi,\pi)$, we show that $S_k(\lambda)$ is continuous at the thresholds if
the channels we consider are already open, and that $S_k(\lambda)$ has a limit from
the right at the thresholds if a channel precisely opens at these thresholds (see
Proposition \ref{propcont} for a more precise statement). Also, we give explicit
formulas for $S_k(\lambda)$ at thresholds. To our knowledge, this type of results has
never been obtained before except in \cite{RT14}, in the context of quantum
waveguides. Our proof of the continuity properties relies on a stationary
representation for $S_k(\lambda)$ and on resolvent expansions for $H^V_k$ at embedded
thresholds. The resolvent expansions are proved in Proposition \ref{Prop_Asymp} under
the single assumption that $V\in\linf(\R;\R)$. Information about the localization of
the possible eigenvalues of $H^V_k$ is also given in Section \ref{sec_spectral}.

Section \ref{sec5} is devoted to the derivation of representation formulas for the
wave operators $W_{k,\pm}$. The main result of the section are formulas
$$
W_{k,-}-1=\big(1\otimes R(A_+)\big)(S_k-1)+\Rem
\quad\hbox{and}\quad
W_{k,+}-1=\big(1-1\otimes R(A_+)\big)(S_k^*-1)+\Rem,
$$
where $R$ is the function given by
$R(x):=\frac12\big(1+\tanh(\pi x)+i\cosh(\pi x)^{-1}\big)$, $A_+$ is the generator of
dilations in $\R_+$, and $\Rem$ is a remainder term which is small in a suitable sense
(see Corollary \ref{cor_wave_k}). This type of formulas has recently been derived for
various scattering systems and is at the root of a topological approach of Levinson's
theorem (see \cite{KR12} for more explanations on this approach). Finally, collecting
the previous identities for all values $k$, we obtain similar representations formulas
for the full wave operators $W_\pm(H^0,H^V)$ (see Corollary \ref{cor_full_wave}).

The content of this paper stops here and corresponds to the analytical part of a
larger research project. As a motivation for further studies, we briefly sketch the sequel of the project here.
Under some stronger assumption on $V$, for instance if $V$ is a trigonometric polynomial, we expect the remainder term $\Rem$ to be a compact operator.
In such a case, by using appropriate techniques of K-theory and $C^*$-algebras,
one could relate the orthogonal projection on the bound states of $H^V_k$ to the scattering operator $S_k$ plus some correction terms due to threshold effects
(see for example \cite[Sec.~3]{RT10} for a presentation of the algebraic techniques
in a much simpler setting). Then, using direct integrals to collect
the results for all values of $k$, one would automatically obtain
a relation between the orthogonal projection
on the surface states of $H^V$ and operators involved in the scattering process.
This relation would be of a topological nature,
it would have an interpretation in the general context of bulk-edge correspondence,
and it would be completely new for such a continuous model.
For discrete models, related results have been obtained in \cite{Cha99}
for ergodic operators and in \cite{SB13} for deterministic operators.\\

\noindent
{\bf Acknowledgements.} The first author thanks H. Bustos and D. Parra for useful
discussions at a preliminary stage of this work.

\section{Laplacian in a half-space}\label{Sec_half}
\setcounter{equation}{0}

In this section, we recall the basic properties of the model we consider, which
consists in a Laplacian on the half-space $\R\times\R_+$, with $\R_+:=(0,\infty)$,
subject to a periodic boundary condition on $\R\times\{0\}$. Most of the material we
present here is borrowed from the papers \cite{Fra03,Fra06} to which we refer for
further information.

We choose a $2\pi$-periodic function $V\in\linf(\R;\R)$, and for each non-empty open
set $\Omega\subset\R^n$, $n\in\N^*$, and each $m\in\N$, we denote by $\H^m(\Omega)$
the usual Sobolev space of order $m$ on $\Omega$. Then, we consider the sesquilinear
form $h^V:\H^1(\R\times\R_+)\times\H^1(\R\times\R_+)\to\C$ given by
\begin{align*}
h^V(\varphi,\psi)
&:=\int_{\R\times\R_+}
\big\{\overline{(\partial_1\varphi)(x_1,x_2)}\;\!(\partial_1\psi)(x_1,x_2)
+\overline{(\partial_2\varphi)(x_1,x_2)}\;\!(\partial_2\psi)(x_1,x_2)\big\}\,
\d x_1\d x_2\\
&\quad+\int_\R V(x_1)\;\!\overline{\varphi(x_1,0)}\;\!\psi(x_1,0)\,\d x_1,
\end{align*}
where the last integral is well defined thanks to the boundary trace imbedding theorem
\cite[Thm.~5.36]{AF03}. This sesquilinear form is lower semibounded and closed, and
therefore induces in $\ltwo(\R\times\R_+)$ a lower semibounded self-adjoint operator
$H^V$ with domain $\dom(H^V)$ satisfying the equation
$$
\big\langle H^V\varphi,\psi\big\rangle_{\ltwo(\R\times\R_+)}
=h^V(\varphi,\psi),\quad\varphi\in\dom(H^V)\subset\H^1(\R\times\R_+),
~\psi\in\H^1(\R\times\R_+).
$$
In the case $V\equiv0$, the operator $H^0$ is the Neumann Laplacian on $\R\times\R_+$.

\subsection{Direct integral decomposition of $\boldsymbol{H^V}$}\label{section_direct}

Let $\S(\R^2)$ be the Schwartz space on $\R^2$ and
$
\S(\R\times\R_+)
:=\big\{\varphi\mid\varphi=\psi|_{\R\times\R_+}\hbox{ for some }\psi\in\S(\R^2)\big\}.
$
Let $\T:=(-\pi,\pi)$, set $\Pi:=\T\times\R_+$, let $\widetilde C^\infty(\Pi)$
be the set of functions in $C^\infty(\Pi)$ which can be extended $2\pi$-periodically to
functions in $C^\infty(\R\times\R_+)$, and for each $m \in \N$ let $\widetilde\H^m(\Pi)$
be the closure of $\widetilde C^\infty(\Pi)\cap \H^m(\Pi)$ in $\H^m(\Pi)$.
Then, the Gelfand transform
$\G:\S(\R\times\R_+)\to\int_{[-1/2,1/2]}^\oplus\ltwo(\Pi)\,\d k$ given by
\cite[Sec.~2.2]{Fra03}
$$
(\G\varphi)(k,\theta,x_2)
:=\sum_{n\in\Z}\e^{-ik(\theta+2\pi n)}\varphi(\theta+2\pi n,x_2),
\quad\varphi\in\S(\R\times\R_+),~k\in[-1/2,1/2],~(\theta,x_2)\in\Pi,
$$
extends to a unitary operator
$\G:\ltwo(\R\times\R_+)\to\int_{[-1/2,1/2]}^\oplus\ltwo(\Pi)\,\d k$. Moreover, one has
$$
\G H^V\G^{-1}=\int_{[-1/2,1/2]}^\oplus H^V_k\,\d k,
$$
with $H^V_k$ the lower semibounded self-adjoint operator in $\ltwo(\Pi)$ associated
with the lower semibounded and closed sesquilinear form
$
h^V_k:\widetilde\H^1(\Pi) \times \widetilde\H^1(\Pi)\to\C
$
given by
\begin{align*}
h^V_k(\varphi,\psi)
&=\int_{\Pi}\big\{\overline{\big((-i\partial_1+k)\varphi)\big)(\theta,x_2)}\;\!
\big((-i\partial_1+k)\psi)\big)(\theta,x_2)
+\overline{(\partial_2\varphi)(\theta,x_2)}\;\!(\partial_2\psi)(\theta,x_2)\big\}
\,\d \theta\;\!\d x_2\\
&\quad+\int_\T V(\theta)\;\!\overline{\varphi(\theta,0)}\;\!\psi(\theta,0)\,\d \theta.
\end{align*}

In the case $V\equiv0$, the operator $H^0_k$ reduces to
\begin{equation}\label{defH0}
H^0_k=(P+k)^2\otimes1+1\otimes(-\triangle_{\rm N}),
\end{equation}
with $P$ the self-adjoint operator of differentiation on $\T$ with periodic boundary
condition and $-\triangle_{\rm N}$ the Neumann Laplacian on $\R_+$. Since $(P+k)^2$
has purely discrete spectrum given by eigenvalues $\lambda_{k,n}:=(n+k)^2$, $n\in\Z$,
and since $-\triangle_{\rm N}$ has purely absolutely continuous spectrum
$\sigma(-\triangle_{\rm N})=[0,\infty)$, the operator $H^0_k$ has purely absolutely
continuous spectrum $\sigma(H^0_k)=[k^2,\infty)$ and its spectral multiplicity is
piecewise constant with a jump at each point of the threshold set
$$
\tau_k:=\big\{\lambda_{k,n}\big\}_{n\in\Z}\;\!.
$$
A set of normalized eigenvectors for the operator $(P+k)^2$ is given by the family
$\big\{\frac1{\sqrt{2\pi}}\e^{in(\,\cdot\,)}\big\}_{n\in\Z}\subset\ltwo(\T)$.
Since this family is independent of $k$, we simply write $\{\P_n\}_{n\in\Z}$ for the
corresponding set of one-dimensional orthogonal projections in $\ltwo(\T)$.

\subsection{Spectral representation for $\boldsymbol{H_k^0}$}

We now give a spectral representation of the operator $H^0_k$ defined in \eqref{defH0}
(see \cite[Sec.~2.2]{Fra06} for the original representation). For that purpose, we fix
$k\in[-1/2,1/2]$ and define the Hilbert spaces
$$
\Hrond_{k,n}:=\ltwo\big([\lambda_{k,n},\infty);\P_n\;\!\ltwo(\T)\big)
\quad\hbox{and}\quad
\Hrond_k:=\bigoplus_{n\in\Z}\Hrond_{k,n}.
$$
We set
$
\S(\R_+):=\big\{\eta\mid\eta=\zeta|_{\R_+}\hbox{ for some }\zeta\in\S(\R)\big\},
$
we let $\Fc:\ltwo(\R_+)\to\ltwo(\R_+)$ be the unitary cosine transform given by
\begin{equation}\label{eq_cosine}
(\Fc\;\!\eta)(y)
:=\left(\frac2{\pi}\right)^{1/2}\int_0^\infty\cos(yx)\;\!\eta(x)\,\d x,
\quad\eta\in\S(\R_+),~y\in\R_+,
\end{equation}
and we let $\U_k:\ltwo(\Pi)\to\Hrond_k$ be the unitary operator given for each
$\varphi\in\ltwo(\T)\odot\S(\R_+)$ by
$$
(\U_k\;\!\varphi)_n(\lambda)
:=2^{-1/2}(\lambda-\lambda_{k,n})^{-1/4}
\big((\P_n\otimes \Fc)\;\!\varphi\big)
\big(\;\cdot\;,\sqrt{\lambda-\lambda_{k,n}}\big),
\quad n\in\Z,~\lambda>\lambda_{k,n}.
$$
Then, the operator $\U_k$ is a spectral transformation for $H^0_k$ in the sense that
$\U_kH^0_k\;\!\U_k^*=L_k$, with $L_k$ the maximal multiplication operator in
$\Hrond_k$ given by
$$
(L_k\xi)_n(\lambda):=\lambda\;\!\xi_n(\lambda),
\quad\xi\in\dom(L_k)
:=\Bigg\{\xi\in\Hrond_k\mid\sum_{n\in\Z}\int_{\lambda_{k,n}}^\infty
\lambda^2\;\!\|\xi_n(\lambda)\|^2_{\ltwo(\T)}\,\d\lambda<\infty\Bigg\},
~n\in\Z,~\lambda >\lambda_{k,n}.
$$

The operator $\U_k$ satisfies the following regularity properties\hspace{1pt}: If we
define the weighted spaces
$$
\H_s(\R_+)
:=\big\{\eta\in\ltwo(\R_+)\mid\langle X\rangle^s\;\!\eta\in\ltwo(\R_+)\big\},
\quad s\ge0,
$$
with $X$ the maximal operator of multiplication by the variable in $\ltwo(\R_+)$ and
$\langle x\rangle:=(1+x^2)^{1/2}$, then the operator
$$
\U_k(n,\lambda)\;\!\varphi
:=(\U_k\;\!\varphi)_n(\lambda),
\quad n\in\Z,~\lambda>\lambda_{k,n},~\varphi\in\ltwo(\T)\odot\S(\R_+),
$$
extends to an element of $\B\big(\ltwo(\T)\otimes\H_s(\R_+);\P_n\;\!\ltwo(\T)\big)$
for each $s>1/2$, and the map
$$
(\lambda_{k,n},\infty)\ni\lambda\mapsto\U_k(n,\lambda)
\in\B\big(\ltwo(\T)\otimes\H_s(\R_+);\P_n\;\!\ltwo(\T)\big)
$$
is continuous (see for example \cite[Prop.~2.5]{Tie06} for an analogue of these
results on $\R$ instead of $\R_+$).

\section{Spectral analysis of $\boldsymbol{H^V_k}$}\label{sec_spectral}
\setcounter{equation}{0}

In this section, we give some information on the eigenvalues of $H^V_k$, and we derive
resolvent expansions at embedded thresholds and eigenvalues for $H^V_k$ for any fixed
value of $k\in[-1/2,1/2]$.

Following the standard idea of decomposing the perturbation into factors, we define
the functions
$$
v:\T\to\R,\quad\theta\mapsto|V(\theta)|^{1/2}
\qquad\hbox{and}\qquad
u:\T\to\{-1,1\},\quad\theta\mapsto
\begin{cases}
1  & \hbox{if}~~V(\theta)\ge0\\
-1 & \hbox{if}~~V(\theta)<0.
\end{cases}
$$
Also, we use the same notation for a function and for the corresponding operator
of multiplication, and we note that $u$ is both unitary and self-adjoint as a
multiplication operator in $\ltwo(\T)$. Moreover, we set $R^0_k(z):=(H^0_k-z)^{-1}$ and
$R^V_k(z)=(H^V_k-z)^{-1}$ for $z\in\C\setminus\R$, and we define the operator
$G\in\B\big(\widetilde\H^1(\Pi);\ltwo(\T)\big)$ by
$$
(G\varphi)(\theta):= v(\theta)\;\!\varphi(\theta,0),
\quad\theta\in\T.
$$
Then, the operator $u+GR^0_k(z)G^*$ has a
bounded inverse in $\ltwo(\T)$ for each $z\in\C\setminus\R$,
and the resolvent equation may be written as (see \cite[Prop.~3.1]{Fra03})
\begin{equation}\label{eq_resolv}
R^V_k(z)
=R^0_k(z)-R^0_k(z)G^*\big(u+GR^0_k(z)G^*\big)^{-1}GR^0_k(z),\quad z\in\C\setminus\R.
\end{equation}
Alternatively, one can deduce from \cite[Eq.~(1.9.14)]{Yaf92} the equivalent formula
\begin{equation}\label{eqlink}
G R^V_k(z) G^*
=u-u\;\!\big(u+GR^0_k(z)G^*\big)^{-1}u.
\end{equation}

In view of these equalities, our goal reduces to derive asymptotic expansions for the
operator $\big(u+GR^0_k(z)G^*\big)^{-1}$ as $z\to z_0\in\tau_k\cup\sigma_{\rm p}(H^V_k)$.
For this, we first choose the square root $\sqrt z$
of $z\in\C\setminus[0,\infty)$ such that $\im(\sqrt z)>0$, and then use this convention to
compute explicitly the kernel of the operator $R^0_k(z):$
$$
\big(R^0_k(z)\big)(\theta,x,\theta',x')
=\frac i{4\pi}\sum_{n\in\Z}\frac{\e^{in(\theta-\theta')}}{\sqrt{z-\lambda_{k,n}}}
\big(\e^{i\sqrt{z-\lambda_{k,n}}\;\!(x+x')}+\e^{i\sqrt{z-\lambda_{k,n}}\;\!|x-x'|}\big),
\quad(\theta,x),(\theta',x')\in\Pi,
$$
(see \cite[Eq.~(3.1)]{Fra03} for a similar formula).
A straightforward computation then leads to the equality
\begin{equation}\label{toutreduit}
GR^0_k(z)G^*=i\;\!\sum_{n\in\Z}\frac{v\;\!\P_nv}{\sqrt{z-\lambda_{k,n}}}\;\!,
\quad z\in\C\setminus\R.
\end{equation}

In the sequel, we also use for $\lambda\in[k^2,\infty)$ the definitions
$$
\Z_k(\lambda):=\big\{n\in\Z\mid\lambda_{k,n}\le\lambda\big\},\quad
\Z_k(\lambda)^\bot:=\Z\setminus\Z_k(\lambda)\quad\hbox{and}\quad
\beta_{k,n}(\lambda):=|\lambda-\lambda_{k,n}|^{1/4},
$$
whose interest come from the following equalities\hspace{1pt}:
\begin{equation}\label{broccoli}
GR^0_k(\lambda+i\;\!0)G^*
:=\ulim_{\varepsilon\searrow0}GR^0_k(\lambda+i\;\!\varepsilon)G^*
=\sum_{n\in\Z_k(\lambda)^\bot}\frac{v\;\!\P_n v}{\beta_{k,n}(\lambda)^2}
+i\sum_{n\in\Z_k(\lambda)}\frac{v\;\!\P_n v}{\beta_{k,n}(\lambda)^2}\;\!,
\quad\lambda\in\R\setminus\tau_k,
\end{equation}
where the convergence is uniform on compact subsets of $\R\setminus\tau_k$.

\begin{Lemma}\label{lem_vp}
Assume that $V\in\linf(\R;\R)$ is $2\pi$-periodic. Then, a value
$\lambda\in\R\setminus\tau_k$ is an eigenvalue of $H^V_k$ if and only if
$$
\K:=\ker\Bigg(u+\sum_{n\in\Z_k(\lambda)^\bot}\frac{v\;\!\P_n v}
{\beta_{k,n}(\lambda)^2}\Bigg)
\bigcap\big(\cap_{n\in\Z_k(\lambda)}\ker(\P_nv)\big)\ne\{0\},
$$
and in this case the multiplicity of $\lambda$ is equal to the dimension of $\K$.
\end{Lemma}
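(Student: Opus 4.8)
The plan is to characterize eigenvalues of $H^V_k$ via the Birman--Schwinger-type principle encoded in the resolvent equation \eqref{eq_resolv}. First I would observe that, for $\lambda\in\R\setminus\tau_k$, the value $\lambda$ is an eigenvalue of $H^V_k$ precisely when the bounded operator $u+GR^0_k(\lambda+i\;\!0)G^*$ fails to be injective on $\ltwo(\T)$; this is a standard consequence of \eqref{eqlink} together with the limiting absorption principle for $H^0_k$ (which holds away from $\tau_k$, where $R^0_k(z)$ has well-behaved boundary values in the weighted spaces of Section 2, as recorded around \eqref{broccoli}). Concretely, if $\psi\in\ker\big(u+GR^0_k(\lambda+i\;\!0)G^*\big)$, then $\varphi:=-R^0_k(\lambda+i\;\!0)G^*\psi$ should be a candidate eigenvector, and one checks $(H^V_k-\lambda)\varphi=0$ by the sesquilinear-form definition of $H^V_k$; conversely an eigenvector $\varphi$ of $H^V_k$ yields $\psi:=u\;\!G\varphi$ in that kernel. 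The multiplicity statement would follow because these two maps are mutually inverse bijections between the eigenspace $\ker(H^V_k-\lambda)$ and $\ker\big(u+GR^0_k(\lambda+i\;\!0)G^*\big)$.

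Next I would split the operator $u+GR^0_k(\lambda+i\;\!0)G^*$ using the explicit expression \eqref{broccoli}: it equals
$$
u+\sum_{n\in\Z_k(\lambda)^\bot}\frac{v\;\!\P_n v}{\beta_{k,n}(\lambda)^2}
+i\sum_{n\in\Z_k(\lambda)}\frac{v\;\!\P_n v}{\beta_{k,n}(\lambda)^2}.
$$
The first two terms form a bounded \emph{self-adjoint} operator, call it $T_{\rm r}$, and the last term is $i\;\!T_{\rm i}$ with $T_{\rm i}=\sum_{n\in\Z_k(\lambda)}v\;\!\P_n v/\beta_{k,n}(\lambda)^2$ bounded, self-adjoint and, crucially, \emph{nonnegative} (each $v\;\!\P_n v$ is a rank-one nonnegative operator and $\beta_{k,n}(\lambda)^2>0$). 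For $\psi\in\ker(T_{\rm r}+i\;\!T_{\rm i})$, taking the inner product with $\psi$ and comparing real and imaginary parts gives $\langle\psi,T_{\rm r}\psi\rangle=0$ and $\langle\psi,T_{\rm i}\psi\rangle=0$; the latter forces $T_{\rm i}^{1/2}\psi=0$, hence $T_{\rm i}\psi=0$, and then $T_{\rm r}\psi=0$ as well. So the kernel of $T_{\rm r}+i\;\!T_{\rm i}$ equals $\ker T_{\rm r}\cap\ker T_{\rm i}$. Since $\Z_k(\lambda)$ is a finite set and the projections $\{\P_n\}$ are mutually orthogonal, $\ker T_{\rm i}=\bigcap_{n\in\Z_k(\lambda)}\ker(v\;\!\P_n v)$; and because $v\;\!\P_n v\;\!\psi=0$ is equivalent to $\|\P_n v\;\!\psi\|=0$ (as $\langle\psi,v\;\!\P_n v\;\!\psi\rangle=\|\P_n v\;\!\psi\|^2$), we get $\ker T_{\rm i}=\bigcap_{n\in\Z_k(\lambda)}\ker(\P_n v)$. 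Finally, on this intersection the terms $v\;\!\P_n v$ with $n\in\Z_k(\lambda)$ annihilate $\psi$, so $T_{\rm r}\psi=0$ reduces to $\big(u+\sum_{n\in\Z_k(\lambda)^\bot}v\;\!\P_n v/\beta_{k,n}(\lambda)^2\big)\psi=0$, which is exactly the condition defining $\K$. Thus $\ker\big(u+GR^0_k(\lambda+i\;\!0)G^*\big)=\K$, giving both the characterization and the multiplicity equality.

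The main technical obstacle is the first step: making rigorous the equivalence between $\lambda$ being an eigenvalue of $H^V_k$ and the non-injectivity of $u+GR^0_k(\lambda+i\;\!0)G^*$, i.e.\ verifying that $R^0_k(\lambda+i\;\!0)G^*\psi$ genuinely lies in $\dom(H^V_k)$ and satisfies the eigenvalue equation, and that no eigenvectors are lost when passing $\varepsilon\searrow0$. One must control the boundary values of $R^0_k(z)$ applied to $G^*\psi$ in the relevant weighted/Sobolev scales — here the continuity of $\U_k(n,\lambda)$ and the mapping properties of $G$ from Section 2 do the work, together with the uniform convergence in \eqref{broccoli}. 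The remaining algebraic manipulations (the real/imaginary part argument and the kernel identifications) are routine once the finiteness of $\Z_k(\lambda)$ and the orthogonality of the $\P_n$ are invoked. I would also remark that the same computation shows $u+GR^0_k(z)G^*$ is invertible for $z\in\C\setminus\R$, consistent with the text, and that outside $\tau_k$ the sum over $\Z_k(\lambda)^\bot$ converges in operator norm since $\|v\;\!\P_n v\|\le\|v\|_\infty^2$ and $\beta_{k,n}(\lambda)^2\sim|\lambda_{k,n}|^{1/2}\to\infty$.
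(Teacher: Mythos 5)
Your plan has the same Birman--Schwinger shape as the paper's proof, and the second half of your argument --- the identification of $\ker\big(u+GR^0_k(\lambda+i0)G^*\big)$ with $\K$ via the real/imaginary-part split and the positivity of $T_{\rm i}=\sum_{n\in\Z_k(\lambda)}v\P_n v/\beta_{k,n}(\lambda)^2$ --- is correct and in fact spells out a step that the paper leaves implicit. (Minor slip: once you have defined $T_{\rm r}=u+\sum_{n\in\Z_k(\lambda)^\bot}v\P_n v/\beta_{k,n}(\lambda)^2$, the equation $T_{\rm r}\psi=0$ \emph{is} the first condition in $\K$; there is nothing further to ``reduce'', since the terms with $n\in\Z_k(\lambda)$ were never in $T_{\rm r}$.)

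The genuine gap is the first step, which you flag as the ``main technical obstacle'' but do not actually close. You assert that $\psi\mapsto -R^0_k(\lambda+i0)G^*\psi$ and $\varphi\mapsto u\;\!G\varphi$ are mutually inverse bijections between $\ker\big(u+GR^0_k(\lambda+i0)G^*\big)$ and $\ker(H^V_k-\lambda)$. But $R^0_k(\lambda+i0)$ is not a bounded operator on $\ltwo(\Pi)$ at an energy $\lambda>k^2$ lying in the continuous spectrum, and mere continuity of $\U_k(n,\lambda)$ together with boundedness of $G$ does not make $R^0_k(\lambda+i0)G^*\psi$ an $\ltwo$ vector. In the spectral representation one has $(\U_k R^0_k(\lambda+i0)G^*\psi)_n(\mu)=\pi^{-1/2}(\mu-\lambda-i0)^{-1}\beta_{k,n}(\mu)^{-1}\P_n(v\psi)$; for each open channel $n\in\Z_k(\lambda)$ this has a non-integrable singularity at $\mu=\lambda$ unless $\P_n(v\psi)=0$. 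So the vanishing conditions $\P_n v\psi=0$ for $n\in\Z_k(\lambda)$ --- which you derive only afterwards, as an algebraic consequence of the kernel --- are in fact the enabling hypothesis that makes your candidate eigenvector an element of $\ltwo(\Pi)$; your sketch does not make this connection, and the claim that the two maps are bijective (in particular that multiplicities match exactly) is not established. The paper sidesteps all of this by invoking \cite[Lemma~4.7.8]{Yaf92}, which is precisely the Birman--Schwinger statement with multiplicities for embedded energies, and then the only real work is verifying that lemma's hypothesis: strong $H^0_k$-smoothness of $G$ and $u\;\!G$ with exponent $\alpha>1/2$ on compact subintervals of $\R\setminus\tau_k$, which the paper checks via the explicit formula $(\U_kG^*q)_n(\lambda)=\pi^{-1/2}\beta_{k,n}(\lambda)^{-1}\P_n(vq)$. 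To repair your version without quoting Yafaev you would need to carry out the smoothness verification and the full bijection argument yourself, including the point about $G\varphi\neq0$ for any eigenvector $\varphi$ (since $H^0_k$ has no point spectrum) to ensure the map $\varphi\mapsto uG\varphi$ is injective.
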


\begin{proof}
We apply \cite[Lemma~4.7.8]{Yaf92}. Once the assumptions of this lemma are checked, it
implies that the multiplicity of an eigenvalue
$\lambda\in\sigma_{\rm p}(H^V_k)\setminus\tau_k$ is equal to the multiplicity of the
eigenvalue $1$ of the operator $-GR^0_k(\lambda+i0)G^*u$. But, the unitarity and the
self-adjointness of $u$ together with the equality \eqref{broccoli} imply that the
following conditions are equivalent for $q\in\ltwo(\T)$\hspace{1pt}:
$$
-GR^0_k(\lambda+i\hspace{1pt}0)G^*uq=q
\iff uq\in\ker\Bigg(u+\sum_{n\in\Z_k(\lambda)^\bot}\frac{v\;\!\P_n v}
{\beta_{k,n}(\lambda)^2}+i\sum_{n\in\Z_k(\lambda)}\frac{v\;\!\P_n v}
{\beta_{k,n}(\lambda)^2}\Bigg),
$$
and the second condition is in turn equivalent to the inclusion $uq\in\mathcal K$.
Thus, since $u$ is unitary we are left in proving that the assumptions of
\cite[Lemma~4.7.8]{Yaf92} hold in a neighbourhood of
$\lambda\in\sigma_{\rm p}(H^V_k)\setminus\tau_k$.

Since the multiplicity of the spectrum of $H^0_k$ is constant in each small enough
neighbourhood of $\lambda\in\sigma_{\rm p}(H^V_k)\setminus\tau_k$, it is sufficient to
prove that the operators $G$ and $u\;\!G$ are strongly $H^0_k$-smooth with some
exponent $\alpha>1/2$ on any compact subinterval of $\R\setminus\tau_k$ (see
\cite[Def.~4.4.5]{Yaf92} for the definition of strong $H_0$-smoothness). However, such
a property can be checked either by using \cite[Lemma 2.3]{Fra06} or by using the
explicit formula
$$
(\U_kG^*q)_n(\lambda)
=\pi^{-1/2}\beta_{k,n}(\lambda)^{-1}\;\!\P_n(vq)\in\P_n\;\!\ltwo(\T),
\quad n\in\Z,~\lambda >\lambda_{k,n},~q\in\ltwo(\T),
$$
and the same formula with $G^*$ replaced by $G^*u$.
\end{proof}

Lemma \ref{lem_vp} has simple, but interesting, consequences on the localization of
the eigenvalues of $H^V_k$. Indeed, one has for each $\lambda\in\R\setminus\tau_k$ the
inequality
\begin{equation}\label{ineqrough}
\left\|\sum_{n\in\Z_k(\lambda)^\bot}\frac{v\;\!\P_n v}
{\beta_{k,n}(\lambda)^2}\right\|
\le\sup_{n\in\Z_k(\lambda)^\bot}\frac{\|V\|_\infty}{\beta_{k,n}(\lambda)^2}\;\!.
\end{equation}
Therefore, if $m\in\Z$ is such that $[\lambda,\lambda_{k,m})\cap\tau_k=\varnothing$
and $\lambda_{k,m}-\lambda>\|V\|_\infty^2$, one infers from \eqref{ineqrough} and
\cite[Thm.~IV.1.16]{Kato} that the subspace $\K\equiv\K(\lambda)$ of Lemma
\ref{lem_vp} is trivial. In other words, the possible eigenvalues of $H^V_k$ can only
be located at a finite distance (independent of $m$) on the left of each threshold. On
the other hand, since the distance between two consecutive thresholds $\lambda_{k,m}$
and $\lambda_{k,m'}$ is proportional to $|m|$, the interval free of possible
eigenvalues between two consecutive thresholds is increasing as $|m|\to\infty$.

\begin{Remark}
The above localization result is sharp. Indeed, if $V$ is a constant function with
$V<0$, then we know from \cite[Ex.~4.2]{FS04} that
$\sigma_{\rm p}(H^V_k)=\big\{\lambda_{k,m}-V^2\mid m\in\Z\big\}$.
\end{Remark}

\subsection{Resolvent expansions for $\boldsymbol{H^V_k}$}\label{secresolvex}

We are now ready to derive the resolvent expansions at all points of interest by using
the iterative procedure of \cite[Sec.~3.1]{RT14} and the associated inversion
formulas. For that purpose, we set $\C_+:=\{z\in\C\mid\im(z)>0\}$ and we adapt a
convention of \cite{JN01} by considering values $z=\lambda-\kappa^2$ with $\kappa$
belonging to the set
$$
O(\varepsilon)
:=\big\{\kappa\in\C\mid|\kappa|\in(0,\varepsilon),~\re(\kappa)>0\hbox{ and }
\im(\kappa)<0\big\},\quad\varepsilon>0,
$$
or the set
$$
\widetilde O(\varepsilon)
:=\big\{\kappa\in\C\mid|\kappa|\in(0,\varepsilon),~\re(\kappa)\ge0\hbox{ and }
\im(\kappa)\le0\big\},\quad\varepsilon>0.
$$
Note that if $\kappa \in O(\varepsilon)$, then $-\kappa^2\in \C_+$ while if
$\kappa\in\widetilde O(\varepsilon)$, then $-\kappa^2\in \overline{\C_+}$. With these
notations at hand, the main result of this section reads as follows\hspace{1pt}:

\begin{Proposition}\label{Prop_Asymp}
Suppose that $V\in\linf(\R;\R)$ is $2\pi$-periodic, fix
$\lambda\in\tau_k\cup\sigma_{\rm p}(H^V_k)$, and take $\kappa\in O(\varepsilon)$ with
$\varepsilon>0$ small enough. Then, the operator
$\big(u+GR^0_k(\lambda-\kappa^2)G^*\big)^{-1}$ belongs to $\B\big(\ltwo(\T)\big)$ and
is continuous in the variable $\kappa\in O(\varepsilon)$. Moreover, the continuous
function
$$
O(\varepsilon)\ni\kappa\mapsto
\big(u+GR^0_k(\lambda-\kappa^2)G^*\big)^{-1}\in\B\big(\ltwo(\T)\big)
$$
extends continuously to a function
$
\widetilde O(\varepsilon)\ni\kappa\mapsto
\M_k(\lambda,\kappa)\in\B\big(\ltwo(\T)\big)
$,
and for each $\kappa\in\widetilde O(\varepsilon)$ the operator $\M_k(\lambda,\kappa)$
admits an asymptotic expansion in $\kappa$. The precise form of this expansion is
given on the r.h.s. of the equations \eqref{sol1} and \eqref{eq_expansion_2} below.
\end{Proposition}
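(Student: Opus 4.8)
The plan is to analyze the operator $u+GR^0_k(\lambda-\kappa^2)G^*$ by splitting the sum in \eqref{toutreduit} according to whether $\lambda_{k,n}$ lies below, at, or above the fixed threshold/eigenvalue level $\lambda$. For a fixed $\lambda\in\tau_k\cup\sigma_{\rm p}(H^V_k)$ there are only finitely many indices $n$ with $\lambda_{k,n}\le\lambda$; call this set $\Z_k(\lambda)$. For $n\in\Z_k(\lambda)^\bot$ with $\lambda_{k,n}>\lambda$ we have $z-\lambda_{k,n}=\lambda-\kappa^2-\lambda_{k,n}$ staying in a region where $\kappa\mapsto(z-\lambda_{k,n})^{-1/2}$ is analytic near $\kappa=0$, uniformly in $n$ for large $|n|$ because of the decay $\beta_{k,n}(\lambda)^{-2}=|\lambda-\lambda_{k,n}|^{-1/2}\to0$; hence the corresponding part of the sum, together with $u$, is a norm-continuous (indeed real-analytic) $\B(\ltwo(\T))$-valued function of $\kappa$ near $0$, converging to the self-adjoint, invertible-or-not operator appearing in Lemma~\ref{lem_vp}. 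The singular behaviour comes entirely from the finitely many terms with $\lambda_{k,n}\le\lambda$: for $n$ with $\lambda_{k,n}<\lambda$ one gets $\sqrt{z-\lambda_{k,n}}=\sqrt{(\lambda-\lambda_{k,n})-\kappa^2}\to i\beta_{k,n}(\lambda)^2$ smoothly, while for $n$ with $\lambda_{k,n}=\lambda$ (which is exactly the case $\lambda\in\tau_k$) one gets $\sqrt{z-\lambda_{k,n}}=\sqrt{-\kappa^2}=i\kappa$, so the term $v\P_n v/(i\kappa)$ blows up like $\kappa^{-1}$. Thus
$$
u+GR^0_k(\lambda-\kappa^2)G^*=A_0(\lambda)+\sum_{j\ge1}\kappa^j A_j(\lambda)+\kappa^{-1}\,T(\lambda)+(\hbox{higher-order analytic corrections}),
$$
where $T(\lambda)=-i\sum_{n:\,\lambda_{k,n}=\lambda}v\P_n v$ is a finite-rank, non-negative operator and $A_0(\lambda)$ is the operator whose kernel enters Lemma~\ref{lem_vp}.

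Next I would feed this expansion into the iterative inversion scheme of \cite[Sec.~3.1]{RT14}, which is the Jensen--Nenciu Feshbach-type procedure for inverting an operator of the form ``(leading singular part) $+$ (regular remainder with a prescribed expansion in $\kappa$)''. Concretely: the leading singular term is $\kappa^{-1}T(\lambda)$ with $T$ of finite rank; one writes $Q$ for the orthogonal projection onto $\ker T=\bigcap_{n:\lambda_{k,n}=\lambda}\ker(\P_n v)$ and $Q^\bot=1-Q$ for the complementary finite-rank projection. On the range of $Q^\bot$ the operator $\kappa^{-1}T(\lambda)+Q^\bot A_0(\lambda)Q^\bot+\cdots$ is, for small $\kappa\neq0$, invertible with inverse of order $\kappa$ (the standard ``$\kappa\,(T+\kappa(\cdots))^{-1}$'' step). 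The Feshbach reduction then replaces the inversion problem by the inversion, on $Q\ltwo(\T)$, of a Schur complement whose leading term is precisely $Q\big(A_0(\lambda)-A_0(\lambda)Q^\bot(\kappa^{-1}T+\cdots)^{-1}Q^\bot A_0(\lambda)\big)Q=QA_0(\lambda)Q+\O(\kappa)$. Here Lemma~\ref{lem_vp} guarantees that when $\lambda\notin\sigma_{\rm p}(H^V_k)$ this leading term is invertible on $Q\ltwo(\T)$, so the iteration terminates after one step and one obtains $\M_k(\lambda,\kappa)=\M_k(\lambda,0)+\O(\kappa)$ with $\M_k(\lambda,0)$ bounded — giving formula \eqref{sol1}. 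When $\lambda\in\sigma_{\rm p}(H^V_k)$ the space $\K$ of Lemma~\ref{lem_vp} is nontrivial and one performs a second Feshbach reduction onto $\K$; the relevant next coefficient (built from $A_1(\lambda)$ and the off-diagonal blocks, essentially an effective-operator computation à la Jensen--Nenciu) is invertible on $\K$, so the procedure again terminates, now producing a genuinely singular leading term of order $\kappa^{-1}$ or so — this is formula \eqref{eq_expansion_2}. In both cases, continuity in $\kappa\in\widetilde O(\varepsilon)$ (including the boundary $\kappa\in i\R_-\cup\R_+$, i.e.\ the limit $\varepsilon\searrow0$ of real spectral parameter) follows because every ingredient — the finitely many singular terms, the analytic tail $A_0(\lambda)+\O(\kappa)$, and the finitely many matrix inversions in the Feshbach steps — is continuous on $\widetilde O(\varepsilon)$.

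The analytic tail deserves a separate, careful argument, and I expect it to be the main technical obstacle. One must show that $\sum_{n\in\Z_k(\lambda)^\bot}v\P_n v/\sqrt{z-\lambda_{k,n}}$ extends from $z\in\C_+$ to a $\B(\ltwo(\T))$-valued function that is continuous (uniformly on compacts) up to $\kappa=0$ and, ideally, admits as many terms of a convergent $\kappa$-expansion as the inversion scheme of \cite{RT14} requires. The subtlety is that the sum is infinite and $\|v\P_n v\|=\|V\|_\infty$ does not decay; convergence and the uniform expansion rely on the decay $|\lambda-\lambda_{k,n}|^{-1/2}$ of the coefficients together with, for the higher-order terms, bounds of the type $\big|\partial_\kappa^m(z-\lambda_{k,n})^{-1/2}\big|\lesssim |\lambda_{k,n}|^{-1/2-m}$ for large $|n|$ and $\kappa$ near $0$; summing these over $n$ is what makes the tail $C^\infty$ in $\kappa$ near $0$. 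This is essentially the content of \eqref{broccoli} (norm convergence, uniform on compacts of $\R\setminus\tau_k$) pushed to the threshold points by splitting off the finite offending set, and it is the place where the hypothesis $V\in\linf$ (hence boundedness of all $v\P_n v$) is exactly what is needed. Once this tail analysis is in place, the rest is the mechanical, if lengthy, unwinding of the two-step Feshbach inversion, and the explicit forms \eqref{sol1} and \eqref{eq_expansion_2} are read off from the resulting expressions.
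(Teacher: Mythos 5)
Your high-level strategy — split the sum in \eqref{toutreduit} into a finite singular part at $\lambda_{k,n}=\lambda$, a finite smooth part at $\lambda_{k,n}<\lambda$, and an infinite but well-behaved tail at $\lambda_{k,n}>\lambda$, then feed the resulting decomposition into the iterative Feshbach-type inversion of \cite[Sec.~3.1]{RT14} — is indeed the approach the paper takes. However, your account of how the iteration actually plays out contains several substantive errors that would sink the proof.

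First, your case split is wrong. You separate ``$\lambda\notin\sigma_{\rm p}(H^V_k)$'' from ``$\lambda\in\sigma_{\rm p}(H^V_k)$''. The paper instead separates $\lambda\in\tau_k$ (whether or not it is an eigenvalue) from $\lambda\in\sigma_{\rm p}(H^V_k)\setminus\tau_k$, and this is not cosmetic: the threshold case is the one with the genuine $\kappa^{-1}$-singularity $\kappa^{-1}v\P v$ in $u+GR^0_k(\lambda-\kappa^2)G^*$ and it must be handled on its own. Relatedly, you invoke Lemma~\ref{lem_vp} to justify invertibility of the leading Schur complement, but that lemma is stated and proved only for $\lambda\in\R\setminus\tau_k$; it gives no information about kernels at threshold points.

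Second, and more seriously, you assert that the iteration terminates after one or two steps and that formula \eqref{sol1} is of the form $\M_k(\lambda,0)+\O(\kappa)$. It is not: \eqref{sol1} carries a $\kappa^{-2}$-singular leading term, and to obtain it the paper performs four nested inversions ($I_0\to I_1\to I_2\to I_3$ with projections $S_0,S_1,S_2,S_3$), since each application of Proposition~\ref{propourversion} can produce a new $\kappa^{-1}$-singularity if the relevant kernel is nontrivial. There is nothing in the structure of the operators that forces these kernels to be trivial, so the termination of the iteration cannot be argued by ``inspection'' of a Schur complement as you suggest. The argument the paper actually needs, and which is entirely absent from your proposal, is the a priori bound coming from the self-adjointness of $H^V_k$: choosing $\kappa=\tfrac{\varepsilon}{2}(1-i)$ one has $\|\kappa^2(H^V_k-\lambda+\kappa^2)^{-1}\|\le1$, hence via \eqref{eqlink} $\limsup_{\kappa\to0}\|\kappa^2\big(u+GR^0_k(\lambda-\kappa^2)G^*\big)^{-1}\|<\infty$. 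Feeding this into \eqref{sol1} forces $\limsup\|I_3(\kappa)^{-1}\|<\infty$ along that ray, and a final Riesz-projection argument extends this boundedness to all $\kappa\in\widetilde O(\varepsilon)$. Without this step the proof has no mechanism to stop the iteration, and the claimed continuity up to $\kappa=0$ does not follow. (Your sign for the singular term, $T(\lambda)=-i\sum_{n:\lambda_{k,n}=\lambda}v\P_n v$, is also off — the paper obtains $\kappa^{-1}v\P v$ — but this is a minor slip compared to the structural gap.)
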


We recall that the relation between the asymptotic expansions given of Proposition
\ref{Prop_Asymp} and the resolvent of $H^V_k$ is given by formula \eqref{eqlink}. The
proof of Proposition \ref{Prop_Asymp} is mainly based on an inversion formula which we
reproduce here for completeness (see also \cite[Prop.~1]{JN04} for an earlier
version)\hspace{1pt}:

\begin{Proposition}[Prop. 2.1 of \cite{RT14}]\label{propourversion}
Let $O\subset\C$ be a subset with $0$ as an accumulation point, and let $\H$ be an
Hilbert space. For each $z\in O$, let $A(z)\in\B(\H)$ satisfy
$$
A(z)=A_0+z\hspace{1pt}A_1(z),
$$
with $A_0\in\B(\H)$ and $\|A_1(z)\|_{\B(\H)}$ uniformly bounded as $z\to0$. Let also
$S\in\B(\H)$ be a projection such that (i) $A_0 + S$ is invertible with bounded
inverse, (ii) $S(A_0+S)^{-1}S=S$. Then, for $|z|>0$ small enough the operator
$B(z):S\H\to S\H$ defined by
$$
B(z)
:=\frac1z\left(S-S\big(A(z)+S\big)^{-1}S\right)
\equiv S(A_0+S)^{-1}\Bigg(\sum_{j\ge0}(-z)^j\big(A_1(z)(A_0+S)^{-1}\big)^{j+1}\Bigg)S
$$
is uniformly bounded as $z\to0$. Also, $A(z)$ is invertible in $\H$ with bounded
inverse if and only if $B(z)$ is invertible in $S\H$ with bounded inverse, and in this
case one has
$$
A(z)^{-1}
=\big(A(z)+S\big)^{-1}+\frac1z\big(A(z)+S\big)^{-1}SB(z)^{-1}S\big(A(z)+S\big)^{-1}.
$$
\end{Proposition}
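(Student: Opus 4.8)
The plan is to prove this statement by a purely algebraic argument: first invert $A(z)+S$ by a Neumann series, then read off the asserted series for $B(z)$, and finally relate the invertibility of $A(z)$ to that of $B(z)$ by a Schur-complement (Grushin) computation, from which the resolvent formula drops out for free.

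First I would write $T(z):=A(z)+S=(A_0+S)\big(1+z\,(A_0+S)^{-1}A_1(z)\big)$. Since $(A_0+S)^{-1}\in\B(\H)$ by hypothesis (i) and $\|A_1(z)\|\le M<\infty$ for $|z|$ small, the second factor is invertible by a Neumann series whenever $|z|\,M\,\|(A_0+S)^{-1}\|<1$, so $T(z)$ is invertible with $T(z)^{-1}=\sum_{j\ge0}(-z)^j\big((A_0+S)^{-1}A_1(z)\big)^j(A_0+S)^{-1}$, the series converging in $\B(\H)$ uniformly for such $z$. Inserting this into $S-S\,T(z)^{-1}S$, the term $j=0$ equals $S(A_0+S)^{-1}S=S$ by hypothesis (ii) and cancels the leading $S$; re-indexing the remaining sum, factoring out one power of $z$, and using the trivial identity $\big((A_0+S)^{-1}A_1(z)\big)^{j+1}(A_0+S)^{-1}=(A_0+S)^{-1}\big(A_1(z)(A_0+S)^{-1}\big)^{j+1}$ yields precisely
$$
S-S\,T(z)^{-1}S
=z\,S(A_0+S)^{-1}\Bigg(\sum_{j\ge0}(-z)^j\big(A_1(z)(A_0+S)^{-1}\big)^{j+1}\Bigg)S,
$$
which is the claimed identity for $B(z)$; the obvious geometric majorant shows $B(z)$ is uniformly bounded as $z\to0$, and since $S-S\,T(z)^{-1}S$ vanishes on $(1-S)\H$ and ranges in $S\H$, it indeed defines an element of $\B(S\H)$.

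For the invertibility dichotomy and the resolvent formula I would introduce, on $\H\oplus S\H$, the bordered operator
$$
\mathcal G(z):=\begin{pmatrix}T(z)&-\iota\\ -S&1_{S\H}\end{pmatrix},
$$
where $\iota\colon S\H\hookrightarrow\H$ is the inclusion and the lower-left $S$ denotes the corestriction $\H\to S\H$ (so that $\iota S=S$ on $\H$ and $S\iota=1_{S\H}$). Since its $(2,2)$-block $1_{S\H}$ is invertible, the Schur-complement criterion shows $\mathcal G(z)$ is invertible if and only if $T(z)-\iota S=A(z)$ is invertible, and in that case the $(1,1)$-entry of $\mathcal G(z)^{-1}$ is $A(z)^{-1}$. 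Since its $(1,1)$-block $T(z)$ is also invertible, the same criterion shows $\mathcal G(z)$ is invertible if and only if $1_{S\H}-S\,T(z)^{-1}\iota=zB(z)$ is invertible on $S\H$, i.e. (as $z\neq0$) if and only if $B(z)$ is invertible; and then the $(1,1)$-entry of $\mathcal G(z)^{-1}$ equals $T(z)^{-1}+T(z)^{-1}\iota\,(zB(z))^{-1}S\,T(z)^{-1}=T(z)^{-1}+\tfrac1z T(z)^{-1}S\,B(z)^{-1}S\,T(z)^{-1}$. Equating the two expressions for this $(1,1)$-entry gives the stated formula for $A(z)^{-1}$.

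I do not expect any genuine obstacle here: once the Neumann series converges everything is elementary. The points that require care are bookkeeping ones — checking that hypothesis (ii) is exactly what forces $S-S\,T(z)^{-1}S$ to be $O(z)$ (hence $B(z)$ bounded), and being consistent in viewing $B(z)$, $B(z)^{-1}$ and $S\,B(z)^{-1}S$ as operators between $S\H$ and $\H$ via $\iota$ and the corestriction $S$ when applying the Schur-complement formulas. One may also dispense with the bordered operator and verify directly that $T(z)^{-1}+\tfrac1z T(z)^{-1}S\,B(z)^{-1}S\,T(z)^{-1}$ is a two-sided inverse of $A(z)=T(z)-S$, using $A(z)T(z)^{-1}=1-S\,T(z)^{-1}$ and $S\,T(z)^{-1}S=S-zB(z)$, together with the converse fact that $z\big(1_{S\H}+S\,A(z)^{-1}S\big)$ inverts $B(z)$ — but the Grushin formulation produces both implications and the formula at one stroke.
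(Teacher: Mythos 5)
Your argument is correct and complete. The paper itself states this proposition without proof, quoting it from [RT14]; there the first assertion is obtained exactly as you do (factor $A(z)+S=(A_0+S)\bigl(1+z(A_0+S)^{-1}A_1(z)\bigr)$, expand the Neumann series, and use hypothesis (ii) to cancel the $j=0$ term), while the invertibility equivalence and the resolvent formula are established by direct algebraic verification in the spirit of Jensen--Nenciu, i.e.\ essentially the computation you sketch in your last paragraph ($A(z)T(z)^{-1}=1-S\,T(z)^{-1}$, $S\,T(z)^{-1}S=S-zB(z)$, and the observation that $z\bigl(1_{S\H}+SA(z)^{-1}S\bigr)$ inverts $B(z)$ for the converse direction). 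Your Grushin (bordered-operator) formulation is a genuinely different, and arguably cleaner, route for that second half: by computing the $(1,1)$ entry of $\mathcal G(z)^{-1}$ via the two Schur complements you obtain both implications of the equivalence and the inversion formula in a single stroke, at the modest cost of the bookkeeping you correctly flag between $S$ as a projection on $\H$ and as the corestriction $\H\to S\H$. The direct verification avoids introducing the auxiliary space but requires checking the two-sided inverse property and the converse separately. Both are valid; no gaps.
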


\begin{proof}[Proof of Proposition \ref{Prop_Asymp}]
For each $\lambda\in\R$, $\varepsilon>0$ and $\kappa\in O(\varepsilon)$, one has
$\im(\lambda-\kappa^2)\ne0$. Thus, the operator
$\big(u+GR^0_k(\lambda-\kappa^2)G^*\big)^{-1}$ belongs to $\B\big(\ltwo(\T)\big)$ and
is continuous in $\kappa\in O(\varepsilon)$ due to \eqref{eq_resolv}. For the other
claims, we distinguish the cases $\lambda\in\tau_k$ and
$\lambda\in\sigma_{\rm p}(H^V_k)\setminus\tau_k$, treating first the case
$\lambda\in\tau_k$. All the operators defined below depend on the choice of $\lambda$,
but for simplicity we do not always mention these dependencies.

(i) Assume that $\lambda\in\tau_k$, take $\varepsilon>0$, set
$N:=\{n\in\Z\mid\lambda_{k,n}=\lambda\}$, and write $\P:=\sum_{n \in N}\P_n$ for the
(one or two-dimensional) orthogonal projection associated with the eigenvalue
$\lambda$ of the operator $(P+k)^2$. Then, \eqref{toutreduit} implies for
$\kappa\in O(\varepsilon)$ that
$$
\big(u+GR^0_k(\lambda-\kappa^2)G^*\big)^{-1}
=\kappa\left\{v\;\!\P v
+\kappa\left(u+i\;\!\sum_{n\notin N}\frac{v\;\!\P_n v}
{\sqrt{\lambda-\kappa^2-\lambda_{k,n}}}\right)\right\}^{-1}.
$$
Moreover, direct computations show that the function
$$
O(\varepsilon)\ni\kappa\mapsto
u+i\;\!\sum_{n\notin N}\frac{v\;\!\P_n v}{\sqrt{\lambda-\kappa^2-\lambda_{k,n}}}
\in\B\big(\ltwo(\T)\big)
$$
extends continuously to a function
$
\widetilde O(\varepsilon)\ni\kappa\mapsto
M_1(\kappa)\in\B\big(\ltwo(\T)\big)
$
with $\|M_1(\kappa)\|_{\B(\ltwo(\T))}$ uniformly bounded as $\kappa\to0$. Thus, one
has for each $\kappa\in O(\varepsilon)$
\begin{equation}\label{form_I_0}
\big(u+GR^0_k(\lambda-\kappa^2)G^*\big)^{-1}=\kappa\;\!I_0(\kappa)^{-1}
\quad\hbox{with}\quad
I_0(\kappa):=v\;\!\P v+\kappa\;\!M_1(\kappa).
\end{equation}
Now, since $N_0:=I_0(0)=v\;\!\P v$ is a finite-rank operator, $0$ is not a limit point
of its spectrum. Also, $N_0$ is self-adjoint, therefore the orthogonal projection
$S_0$ on $\ker(N_0)$ is equal to the Riesz projection of $N_0$ associated with the
value $0$. We can thus apply Proposition \ref{propourversion} (see
\cite[Cor.~2.8]{RT14}), and obtain for $\kappa\in\widetilde O(\varepsilon)$ with
$\varepsilon>0$ small enough that the operator
$I_1(\kappa):S_0\ltwo(\T)\to S_0\ltwo(\T)$ defined by
\begin{equation}\label{defI1}
I_1(\kappa):=\sum_{j\ge0}(-\kappa)^jS_0\;\!
\big\{M_1(\kappa)\big(I_0(0)+S_0\big)^{-1}\big\}^{j+1}S_0
\end{equation}
is uniformly bounded as $\kappa\to0$. Furthermore, $I_1(\kappa)$ is invertible in
$S_0\ltwo(\T)$ with bounded inverse satisfying the equation
$$
I_0(\kappa)^{-1}
=\big(I_0(\kappa)+S_0\big)^{-1}+\frac 1{\kappa}\;\!\big(I_0(\kappa)+S_0\big)^{-1}
S_0I_1(\kappa)^{-1}S_0\big(I_0(\kappa)+S_0\big)^{-1}.
$$
It follows that for $\kappa\in O(\varepsilon)$ with $\varepsilon>0$ small enough, one
has
\begin{equation}\label{eq18}
\big(u+GR^0_k(\lambda-\kappa^2)G^*\big)^{-1}
=\kappa\;\!\big(I_0(\kappa)+S_0\big)^{-1}+
\big(I_0(\kappa)+S_0\big)^{-1}S_0I_1(\kappa)^{-1}S_0\big(I_0(\kappa)+S_0\big)^{-1},
\end{equation}
with the first term vanishing as $\kappa \to0$.

To describe the second term of $\big(u+GR^0_k(\lambda-\kappa^2)G^*\big)^{-1}$ as
$\kappa\to0$ we note that the equality $\big(I_0(0)+S_0\big)^{-1}S_0=S_0$ and the
definition \eqref{defI1} imply for $\kappa\in\widetilde O(\varepsilon)$ with
$\varepsilon>0$ small enough that
\begin{equation}\label{form_I_1}
I_1(\kappa)=S_0M_1(0)S_0+\kappa\;\!M_2(\kappa),
\end{equation}
with
\begin{align*}
M_2(\kappa)
&:=\frac i\kappa\;\!S_0\sum_{n\notin N}
\left(\frac1{\sqrt{\lambda-\kappa^2-\lambda_{k,n}}}
-\frac1{\sqrt{\lambda-\lambda_{k,n}}}\right)v\;\!\P_n vS_0\\
&\quad-\sum_{j\ge0}(-\kappa)^jS_0\;\!
\big\{M_1(\kappa)\big(I_0(0)+S_0\big)^{-1}\big\}^{j+2}S_0.
\end{align*}
Also, we note that the expansion
\begin{equation}\label{eq23}
\frac1{\sqrt{\lambda-\kappa^2-\lambda_{k,n}}}
=\frac1{\sqrt{\lambda-\lambda_{k,n}}}
\left(1+\frac{\kappa^2}{2(\lambda-\lambda_{k,n})}+\O(\kappa^4)\right),
\quad n\notin N,
\end{equation}
implies that $\|M_2(\kappa)\|_{\B(S_0\ltwo(\T))}$ is uniformly bounded as
$\kappa\to0$.

Now, we have
\begin{equation}\label{eqM10}
M_1(0)
=u+\sum_{n\in\Z_k(\lambda)^\bot}\frac{v\;\!\P_n v}{\beta_{k,n}(\lambda)^2}
+i\sum_{n\in\Z_k(\lambda)^-}\frac{v\;\!\P_n v}{\beta_{k,n}(\lambda)^2}\;\!,
\end{equation}
with $\Z_k(\lambda)^-:=\{n\in\Z\mid\lambda_{k,n}<\lambda\}$. Therefore, $M_1(0)$ is
the sum of the unitary and self-adjoint operator $u$, the self-adjoint and compact
operator $\sum_{n\in\Z_k(\lambda)^\bot}\frac{v\;\!\P_n v}{\beta_{k,n}(\lambda)^2}$,
and the compact operator with non-negative imaginary part
$i\sum_{n\in\Z_k(\lambda)^-}\frac{v\;\!\P_n v}{\beta_{k,n}(\lambda)^2}$. So, since
$S_0$ is an orthogonal projection with finite-dimensional kernel, the operator
$I_1(0)=S_0M_1(0)S_0$ acting in the Hilbert space $S_0\ltwo(\T)$ can also be written
as the sum of a unitary and self-adjoint operator, a self-adjoint and compact
operator, and a compact operator with non-negative imaginary part. Thus, the result
\cite[Cor.~2.8]{RT14} applies with $S_1$ the finite-rank orthogonal projection on
$\ker\big(I_1(0)\big)$, and Proposition \ref{propourversion} can be applied to
$I_1(\kappa)$ as it was done for $I_0(\kappa)$.

Therefore, for $\kappa\in\widetilde O(\varepsilon)$ with $\varepsilon>0$ small enough,
the operator $I_2(\kappa):S_1\ltwo(\T)\to S_1\ltwo(\T)$ defined by
$$
I_2(\kappa)
:=\sum_{j\ge0}(-\kappa)^jS_1\big\{M_2(\kappa)\big(I_1(0)+S_1\big)^{-1}\big\}^{j+1}S_1
$$
is uniformly bounded as $\kappa\to0$. Furthermore, $I_2(\kappa)$ is invertible in
$S_1\ltwo(\T)$ with bounded inverse satisfying the equation
$$
I_1(\kappa)^{-1}
=\big(I_1(\kappa)+S_1\big)^{-1}+\frac1\kappa\;\!\big(I_1(\kappa)+S_1\big)^{-1}
S_1I_2(\kappa)^{-1}S_1\big(I_1(\kappa)+S_1\big)^{-1}.
$$
This expression for $I_1(\kappa)^{-1}$ can now be inserted in \eqref{eq18} in order to
get for $\kappa\in O(\varepsilon)$ with $\varepsilon>0$ small enough
\begin{align}
&\big(u+GR^0_k(\lambda-\kappa^2)G^*\big)^{-1}\nonumber\\
&=\kappa\;\!\big(I_0(\kappa)+S_0\big)^{-1}
+\big(I_0(\kappa)+S_0\big)^{-1}S_0\big(I_1(\kappa)+S_1\big)^{-1}S_0
\big(I_0(\kappa)+S_0\big)^{-1}\nonumber\\
&\quad+\frac1\kappa\;\!\big(I_0(\kappa)+S_0\big)^{-1}
S_0\big(I_1(\kappa)+S_1\big)^{-1}S_1I_2(\kappa)^{-1}S_1\big(I_1(\kappa)+S_1\big)^{-1}
S_0\big(I_0(\kappa)+S_0\big)^{-1},\label{eq_F_second}
\end{align}
with the first two terms bounded as $\kappa\to0$.

We now concentrate on the last term and check once more that the assumptions of
Proposition \ref{propourversion} are satisfied. For this, we recall that
$\big(I_1(0)+S_1\big)^{-1}S_1=S_1$, and observe that for
$\kappa\in\widetilde O(\varepsilon)$ with $\varepsilon>0$ small enough
\begin{equation}\label{form_I_2}
I_2(\kappa)=S_1M_2(0)S_1+\kappa\;\!M_3(\kappa),
\end{equation}
with
$$
M_2(0)=-S_0M_1(0)\big(I_0(0)+S_0\big)^{-1}M_1(0)S_0
\quad\hbox{and}\quad
M_3(\kappa)\in\O(1).
$$
The inclusion $M_3(\kappa)\in\O(1)$ follows from simple computations taking the
expansion \eqref{eq23} into account. As observed above, one has $M_1(0)=Y+iZ^*Z$, with
$Y,Z$ bounded self-adjoint operators in $\ltwo(\T)$. Therefore,
$I_1(0)=S_0M_1(0)S_0=S_0YS_0+i(ZS_0)^*(Z S_0)$, and one infers from
\cite[Cor.~2.5]{RT14} that $Z S_0S_1=0=S_1S_0 Z^*$. Since $S_1S_0=S_1=S_0S_1$, it
follows that $ZS_1=0=S_1Z^*$. Therefore, we have
\begin{align*}
I_2(0)
&=-S_1M_1(0)\big(I_0(0)+S_0\big)^{-1}M_1(0)S_1\\
&=-S_1(Y+iZ^*Z)\big(I_0(0)+S_0\big)^{-1}(Y+iZ^*Z)S_1\\
&=-S_1Y\big(I_0(0)+S_0\big)^{-1}YS_1,
\end{align*}
and thus $-I_2(0)$ is a positive operator.

Since $S_1\ltwo(\T)$ is finite-dimensional, $0$ is not a limit point of
$\sigma\big(I_2(0)\big)$. So, the orthogonal projection $S_2$ on
$\ker\big(I_2(0)\big)$ is a finite-rank operator, and Proposition \ref{propourversion}
applies to $I_2(0)+\kappa\hspace{1pt}M_3(\kappa)$. Thus, for
$\kappa\in\widetilde O(\varepsilon)$ with $\varepsilon>0$ small enough, the operator
$I_3(\kappa):S_2\ltwo(\T)\to S_2\ltwo(\T)$ defined by
$$
I_3(\kappa):=\sum_{j\ge0}(-\kappa)^jS_2
\;\!\big\{M_3(\kappa)\big(I_2(0)+S_2\big)^{-1}\big\}^{j+1}S_2
$$
is uniformly bounded as $\kappa\to0$. Furthermore, $I_3(\kappa)$ is invertible in
$S_2\ltwo(\T)$ with bounded inverse satisfying the equation
$$
I_2(\kappa)^{-1}
=\big(I_2(\kappa)+S_2\big)^{-1}
+\frac1\kappa\big(I_2(\kappa)+S_2\big)^{-1}S_2I_3(\kappa)^{-1}S_2
\big(I_2(\kappa)+S_2\big)^{-1}.
$$
This expression for $I_2(\kappa)^{-1}$ can now be inserted in \eqref{eq_F_second} in
order to get for $\kappa\in O(\varepsilon)$ with $\varepsilon>0$ small enough
\begin{align}
&\big(u+GR^0_k(\lambda-\kappa^2)G^*\big)^{-1}\nonumber\\
&=\kappa\big(I_0(\kappa)+S_0\big)^{-1}
+\big(I_0(\kappa)+S_0\big)^{-1}S_0\big(I_1(\kappa)+S_1\big)^{-1}S_0
\big(I_0(\kappa)+S_0\big)^{-1}\nonumber\\
&\quad+\frac1\kappa\big(I_0(\kappa)+S_0\big)^{-1}S_0
\big(I_1(\kappa)+S_1\big)^{-1}S_1\big(I_2(\kappa)+S_2\big)^{-1}S_1
\big(I_1(\kappa)+S_1\big)^{-1}S_0\big(I_0(\kappa)+S_0\big)^{-1}\nonumber\\
&\quad+\frac1{\kappa^2}\big(I_0(\kappa)+S_0\big)^{-1}S_0
\big(I_1(\kappa)+S_1\big)^{-1}S_1\big(I_2(\kappa)+ S_2\big)^{-1}S_2 I_3(\kappa)^{-1}
S_2\big(I_2(\kappa)+S_2\big)^{-1}S_1\nonumber\\
&\qquad\times\big(I_1(\kappa)+S_1\big)^{-1}S_0\big(I_0(\kappa)+S_0\big)^{-1}.
\label{sol1}
\end{align}

Fortunately, the iterative procedure stops here. The argument is based on the relation \eqref{eqlink}
and the fact that $H^V_k$ is a self-adjoint operator. Indeed, if we choose
$\kappa=\frac\varepsilon2(1-i)\in O(\varepsilon)$, then the inequality
$\big\|\kappa^2(H^V_k-\lambda+\kappa^2)^{-1}\big\|_{\B(\ltwo(\Pi))}\le1$ holds, and
thus
\begin{equation}\label{notresauveur}
\limsup_{\kappa\to0}
\big\|\kappa^2\big(u+GR^0_k(\lambda-\kappa^2)G^*\big)^{-1}\big\|_{\B(\ltwo(\T))}
<\infty.
\end{equation}
So, if we replace $\big(u+GR^0_k(\lambda-\kappa^2)G^*\big)^{-1}$ by the expression
\eqref{sol1} and if we take into account that all factors of the form
$\big(I_j(\kappa)+S_j\big)^{-1}$ have a finite limit as $\kappa\to0$, we infer from
\eqref{notresauveur} that
\begin{equation}\label{notresecondsauveur}
\limsup_{\kappa\to0}\big\|I_3(\kappa)^{-1}\big\|_{\B(S_2\ltwo(\T))}<\infty.
\end{equation}
Therefore, it only remains to show that this relation holds not just for
$\kappa=\frac\varepsilon2(1-i)$ but for all $\kappa\in\widetilde O(\varepsilon)$. For
that purpose, we consider $I_3(\kappa)$ once again, and note that
\begin{equation}\label{souffrance1}
I_3(\kappa)=S_2M_3(0)S_2+\kappa\;\!M_4(\kappa)
\quad\hbox{with}\quad M_4(\kappa)\in\O(1).
\end{equation}
The precise form of $M_3(0)$ can be computed explicitly, but is irrelevant.

Now, since $I_3(0)$ acts in a finite-dimensional space, $0$ is an isolated eigenvalue
of $I_3(0)$ if $0\in\sigma\big(I_3(0)\big)$, in which case we write $S_3$ for the
corresponding Riesz projection. Then, the operator $I_3(0)+S_3$ is invertible with
bounded inverse, and \eqref{souffrance1} implies that $I_3(\kappa)+S_3$ is also
invertible with bounded inverse for $\kappa\in\widetilde O(\varepsilon)$ with
$\varepsilon>0$ small enough. In addition, one has
$\big(I_3(\kappa)+S_3\big)^{-1}=\big(I_3(0)+S_3\big)^{-1}+\O(\kappa)$. By the
inversion formula given in \cite[Lemma 2.1]{JN01}, one infers that
$S_3-S_3\big(I_3(\kappa)+S_3\big)^{-1}S_3$ is invertible in $S_3\ltwo(\T)$ with
bounded inverse and that the following equalities hold
\begin{align*}
I_3(\kappa)^{-1}
&=\big(I_3(\kappa)+S_3\big)^{-1}+\big(I_3(\kappa)+S_3\big)^{-1}S_3
\big\{S_3-S_3\big(I_3(\kappa)+S_3\big)^{-1}S_3\big\}^{-1}S_3
\big(I_3(\kappa)+S_3\big)^{-1}\\
&=\big(I_3(\kappa)+S_3\big)^{-1}+\big(I_3(\kappa)+S_3\big)^{-1}S_3
\big\{S_3-S_3\big(I_3(0)+S_3\big)^{-1}S_3+\O(\kappa)\big\}^{-1}S_3
\big(I_3(\kappa)+S_3\big)^{-1}.
\end{align*}
This implies that \eqref{notresecondsauveur} holds for some
$\kappa\in\widetilde O(\varepsilon)$ if and only if the operator
$S_3-S_3\big(I_3(0)+S_3\big)^{-1}S_3$ is invertible in $S_3\ltwo(\T)$ with bounded
inverse. But, we already know from what precedes that \eqref{notresecondsauveur} holds
for $\kappa=\frac\varepsilon2(1-i)$. So, the operator
$S_3-S_3\big(I_3(0)+S_3\big)^{-1}S_3$ is invertible in $S_3\ltwo(\T)$ with bounded
inverse, and thus \eqref{notresecondsauveur} holds for all
$\kappa\in\widetilde O(\varepsilon)$. Therefore, \eqref{sol1} implies that the
function
$$
O(\varepsilon)\ni\kappa\mapsto
\big(u+GR^0_k(\lambda-\kappa^2)G^*\big)^{-1}\in\B\big(\ltwo(\T)\big)
$$
extends continuously to the function
$
\widetilde O(\varepsilon)\ni\kappa\mapsto\M_k(\lambda,\kappa)\in\B\big(\ltwo(\T)\big)
$,
with $\M_k(\lambda,\kappa)$ given by the r.h.s. of \eqref{sol1}.

(ii) Assume that $\lambda\in\sigma_{\rm p}(H^V_k)\setminus\tau_k$, take
$\varepsilon>0$, let $\kappa\in\widetilde O(\varepsilon)$, and set
$J_0(\kappa):=T_0+\kappa^2\hspace{1pt}T_1(\kappa)$ with
$$
T_0:=u+\sum_{n\in\Z_k(\lambda)^\bot}\frac{v\;\!\P_n v}{\beta_{k,n}(\lambda)^2}
+i\sum_{n\in\Z_k(\lambda)}\frac{v\;\!\P_n v}{\beta_{k,n}(\lambda)^2}
$$
and
$$
T_1(\kappa)
:=\frac i{\kappa^2}\sum_{n\in \Z}
\left(\frac1{\sqrt{\lambda-\kappa^2-\lambda_{k,n}}}
-\frac1{\sqrt{\lambda-\lambda_{k,n}}}\right)v\;\!\P_n v.
$$
Then, one infers from the expansion \eqref{eq23} that
$\|T_1(\kappa)\|_{\B(\ltwo(\T))}$ is uniformly bounded as $\kappa\to0$. Also, the
assumptions of \cite[Cor.~2.8]{RT14} hold for the operator $T_0$, and thus the Riesz
projection $S$ associated with the value $0\in\sigma(T_0)$ is an orthogonal
projection. It thus follows from Proposition \ref{propourversion} that for
$\kappa\in\widetilde O(\varepsilon)$ with $\varepsilon>0$ small enough, the operator
$J_1(\kappa):S\;\!\ltwo(\T)\to S\;\!\ltwo(\T)$ defined by
$$
J_1(\kappa)
:=\sum_{j\ge0}(-\kappa^2)^jS\;\!\big\{T_1(\kappa)(T_0+S)^{-1}\big\}^{j+1}S
$$
is uniformly bounded as $\kappa\to0$. Furthermore, $J_1(\kappa)$ is invertible in
$S\ltwo(\T)$ with bounded inverse satisfying the equation
$$
J_0(\kappa)^{-1}
=\big(J_0(\kappa)+S\big)^{-1}
+\frac1{\kappa^2}\big(J_0(\kappa)+S)^{-1}SJ_1(\kappa)^{-1}S
\big(J_0(\kappa)+S\big)^{-1}.
$$
It follows that for $\kappa\in O(\varepsilon)$ with $\varepsilon>0$ small enough one
has
\begin{equation}\label{sol2}
\big(u+GR^0_k(\lambda-\kappa^2)G^*\big)^{-1}
=\big(J_0(\kappa)+S\big)^{-1}
+\frac1{\kappa^2}\big(J_0(\kappa)+S)^{-1}SJ_1(\kappa)^{-1}S
\big(J_0(\kappa)+S\big)^{-1}.
\end{equation}
Fortunately, the iterative procedure already stops here. Indeed, the argument is
similar to the one presented above once we observe that
$$
J_1(\kappa)=ST_1(0)S+\kappa\hspace{1pt}T_2(\kappa)
\quad\hbox{with}\quad T_2(\kappa)\in\O(1).
$$
Therefore, \eqref{sol2} implies that the function
$$
O(\varepsilon)\ni\kappa\mapsto
\big(u+GR^0_k(\lambda-\kappa^2)G^*\big)^{-1}\in\B\big(\ltwo(\T)\big)
$$
extends continuously to the function
$
\widetilde O(\varepsilon)\ni\kappa\mapsto
\M_k(\lambda,\kappa)\in\B\big(\ltwo(\T)\big)
$,
with $\M_k(\lambda,\kappa)$ given by
\begin{equation}\label{eq_expansion_2}
\M_k(\lambda,\kappa)
=\big(J_0(\kappa)+S\big)^{-1}
+\frac1{\kappa^2}\big(J_0(\kappa)+S)^{-1}SJ_1(\kappa)^{-1}S
\big(J_0(\kappa)+S\big)^{-1}.
\end{equation}
\end{proof}

The non accumulation of eigenvalues of $H^V_k$ (except possibly at $+\infty$) can
easily be inferred from these asymptotic expansions (see for example
\cite[Corol.~3.3]{RT14} in the framework of quantum waveguides). However, since such a
result is already known in the present context \cite[Thm.~4.1]{FS04}, we do not prove
it again here.

We close this section with some auxiliary results which can all be deduced from the
expansions of Proposition \ref{Prop_Asymp}. The notations are borrowed from the proof
of Proposition \ref{Prop_Asymp} (with the only change that we extend by $0$ the
operators defined originally on subspaces of $\ltwo(\T)$ to get operators defined on
all of $\ltwo(\T)$). The proofs are skipped since they can be copied {\it mutatis
mutandis} from the corresponding ones in \cite[Sec.~3.1]{RT14}.

\begin{Lemma}\label{lemme_com}
Take $2\ge\ell\ge m\ge0$ and $\kappa\in\widetilde O(\varepsilon)$ with $\varepsilon>0$
small enough. Then, one has in $\B\big(\ltwo(\T)\big)$
$$
\big[S_\ell,\big(I_m(\kappa)+S_m\big)^{-1}\big]\in\O(\kappa).
$$
\end{Lemma}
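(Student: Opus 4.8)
The plan is to prove the commutator estimate by exploiting the structure of the operators $I_m(\kappa)$ revealed in the proof of Proposition \ref{Prop_Asymp}: each one has the form $I_m(\kappa)=I_m(0)+\kappa\,M_{m+1}(\kappa)$ with $M_{m+1}(\kappa)\in\O(1)$, the relevant Riesz/orthogonal projections satisfy the nesting relations $S_\ell S_m=S_\ell=S_m S_\ell$ for $\ell\ge m$, and $\big(I_m(0)+S_m\big)^{-1}S_m=S_m$. First I would record the resolvent-type identity
$$
\big(I_m(\kappa)+S_m\big)^{-1}=\big(I_m(0)+S_m\big)^{-1}-\kappa\big(I_m(0)+S_m\big)^{-1}M_{m+1}(\kappa)\big(I_m(\kappa)+S_m\big)^{-1},
$$
which shows immediately that $\big(I_m(\kappa)+S_m\big)^{-1}=\big(I_m(0)+S_m\big)^{-1}+\O(\kappa)$, since $M_{m+1}(\kappa)\in\O(1)$ and both $\big(I_m(0)+S_m\big)^{-1}$ and $\big(I_m(\kappa)+S_m\big)^{-1}$ are uniformly bounded as $\kappa\to0$. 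Hence it suffices to prove the stronger claim that $S_\ell$ commutes with $\big(I_m(0)+S_m\big)^{-1}$ exactly (for $2\ge\ell\ge m\ge0$): the $\O(\kappa)$ error in the expansion then accounts for the whole commutator.

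To establish $\big[S_\ell,\big(I_m(0)+S_m\big)^{-1}\big]=0$, I would argue that $S_\ell$ commutes with $I_m(0)+S_m$ itself, so that it commutes with its inverse. Since $S_\ell$ is a spectral/Riesz projection of $I_\ell(0)$, and the whole construction produces the chain $S_0\ge S_1\ge S_2$ with $S_\ell$ the projection onto $\ker\big(I_\ell(0)\big)$ inside $S_{\ell-1}\ltwo(\T)$, one has $S_m S_\ell=S_\ell=S_\ell S_m$ for $\ell\ge m$, and more is true: on the range of $S_m$ the operator $I_m(0)$ restricts to the operator whose next-level kernel projection is $S_{m+1}$, and iterating, $S_\ell$ is a spectral projection of (the restriction of) $I_m(0)$. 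Concretely, because $S_{m+1}\le S_m$ is the Riesz projection of $I_m(0)|_{S_m\ltwo(\T)}$ for the eigenvalue $0$, the operator $S_m I_m(0) S_m$ leaves $\mathrm{Ran}(S_{m+1})$ and its complement in $\mathrm{Ran}(S_m)$ invariant, giving $[S_{m+1},I_m(0)]=0$ once we use $[S_m,I_m(0)]$ is controlled; and then $I_{m+1}(0)=S_m M_{m+1}(0)S_m$ restricted appropriately has $S_{m+2}$ as a spectral projection, etc. Propagating this one level at a time yields $\big[S_\ell,I_m(0)+S_m\big]=0$ for every pair $2\ge\ell\ge m\ge0$, hence the exact commutation with the bounded inverse, and therefore the $\O(\kappa)$ bound for $I_m(\kappa)+S_m$.

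The main obstacle is the very first link of the chain, namely showing that $S_{m+1}$ genuinely commutes with $I_m(0)+S_m$ rather than merely being subordinate to $S_m$. The operators $I_m(0)$ are not self-adjoint — for $m=0,1$ they are of the form $Y+iZ^*Z$ with $Y,Z$ self-adjoint (as noted in the proof of Proposition \ref{Prop_Asymp}), so $S_{m+1}$ is a Riesz projection, not an orthogonal one, and one must be careful that it is nonetheless a spectral projection commuting with $I_m(0)$. The resolution is exactly the structural fact used repeatedly via \cite[Cor.~2.5]{RT14} and \cite[Cor.~2.8]{RT14}: the kernel structure forces $ZS_m S_{m+1}=0$, so on $\mathrm{Ran}(S_m)$ the imaginary part drops out and $I_m(0)$ restricted to the relevant finite-dimensional reducing subspace is self-adjoint, whence its zero-eigenspace projection $S_{m+1}$ is orthogonal and reduces it. Once this is in hand for the bottom step, the induction on $\ell$ (and the already-available expansion $\big(I_m(\kappa)+S_m\big)^{-1}=\big(I_m(0)+S_m\big)^{-1}+\O(\kappa)$) closes the proof with only routine bookkeeping, exactly as in \cite[Sec.~3.1]{RT14}.
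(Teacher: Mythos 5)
Your overall structure is correct and is essentially the argument the paper delegates to \cite[Sec.~3.1]{RT14}: write $I_m(\kappa)=I_m(0)+\kappa M_{m+1}(\kappa)$ with $M_{m+1}(\kappa)\in\O(1)$, deduce $\big(I_m(\kappa)+S_m\big)^{-1}=\big(I_m(0)+S_m\big)^{-1}+\O(\kappa)$ by the second resolvent identity together with the uniform boundedness of both inverses, and then show the exact commutation $\big[S_\ell,I_m(0)+S_m\big]=0$ so that the whole commutator is carried by the $\O(\kappa)$ correction. This is the right reduction and it closes the proof.

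However, the middle paragraph mislocates where the actual content lies, and contains a statement that does not quite parse. You identify the ``main obstacle'' as the case $\ell=m+1$, and you write that $S_{m+1}$ is ``the Riesz projection of $I_m(0)|_{S_m\ltwo(\T)}$ for the eigenvalue $0$.'' That restriction is the zero operator (since $S_m$ projects onto $\ker I_m(0)$), so this is not what $S_{m+1}$ is; $S_{m+1}$ is the projection onto $\ker I_{m+1}(0)$ with $I_{m+1}(0)=S_mM_{m+1}(0)S_m$, a different operator. In fact the case $\ell>m$ is the \emph{easy} one: once one knows $S_mI_m(0)=I_m(0)S_m=0$, the nesting $S_\ell\le S_m$ gives $I_m(0)S_\ell=I_m(0)S_mS_\ell=0$ and $S_\ell I_m(0)=S_\ell S_mI_m(0)=0$ immediately, and $[S_\ell,S_m]=0$ is trivial. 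The non-trivial step is precisely the diagonal case $\ell=m$, namely $[S_m,I_m(0)]=0$: this is where one needs $S_m$ (defined a priori as the orthogonal projection onto $\ker I_m(0)$) to coincide with the Riesz projection, so that it actually commutes with $I_m(0)$. For $m=0$ and $m=2$ this is automatic because $I_0(0)=v\P v$ is self-adjoint and $-I_2(0)$ is positive; for $m=1$, $I_1(0)=S_0YS_0+i(ZS_0)^*(ZS_0)$ is not self-adjoint, and the coincidence of orthogonal and Riesz projections is exactly what \cite[Cor.~2.8]{RT14} supplies. So the mechanism you invoke is the right one, but it should be attached to the $\ell=m$ commutator rather than to a purported spectral-projection status of $S_{m+1}$ for $I_m(0)$. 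With that relabelling, your proof is correct and complete.
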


Given $\lambda\in\tau_k$, we recall that
$N=\big\{n\in\Z\mid\lambda_{k,n}=\lambda\big\}$ and $\P=\sum_{n\in N}\P_n$.

\begin{Lemma}\label{relations_simples}
Take $\lambda\in\tau_k$ and let $Y$ be the real part of the operator $M_1(0)$.
\begin{enumerate}
\item[(a)] For each $n\in N$, one has $\P_n vS_0=0=S_0v\;\!\P_n$.
\item[(b)] For each $n\in \Z_k(\lambda)$, one has $\P_n vS_1=0=S_1v\;\!\P_n$.
\item[(c)] One has $YS_2=0=S_2Y$.
\item[(d)] One has $M_1(0)S_2=0=S_2M_1(0)$.
\end{enumerate}
\end{Lemma}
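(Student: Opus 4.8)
The plan is to exploit the iterative structure of the proof of Proposition \ref{Prop_Asymp}, where each projection $S_{j+1}$ is the orthogonal projection onto $\ker\big(I_j(0)\big)$ and each operator $I_j(0)$ is obtained from $M_{j+1}(0)$ by compression with $S_j$. The key tool is the algebraic identity, valid in the iteration, that $I_j(0) = S_j M_{j+1}(0) S_j$ together with $\big(I_{j-1}(0)+S_{j-1}\big)^{-1}S_{j-1}=S_{j-1}$ and the nesting $S_{j}S_{j-1}=S_{j}=S_{j-1}S_{j}$, all of which were recorded (or are immediate consequences of \cite[Cor.~2.5]{RT14} and \cite[Cor.~2.8]{RT14}) in that proof.

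For part (a): by definition $S_0$ is the orthogonal projection onto $\ker(N_0)=\ker(v\,\P v)$, where $\P=\sum_{n\in N}\P_n$. Since $v\,\P v = (v\,\P)(v\,\P)^* \ge 0$ is a positive operator, $\ker(v\,\P v)=\ker(\P v)=\ker(v\,\P)^*$; hence $S_0 \ltwo(\T)=\ker(\P v)$, which gives $\P v S_0 = 0$, and taking adjoints (using that $S_0$ and $\P$ are self-adjoint) yields $S_0 v\,\P=0$. Restricting to individual $n\in N$ using $\P_n = \P_n\P$ gives $\P_n v S_0 = \P_n \P v S_0 = 0$ and similarly $S_0 v\,\P_n = 0$. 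For part (b): recall from \eqref{eqM10} that $M_1(0)=u+\sum_{n\in\Z_k(\lambda)^\bot}\frac{v\,\P_n v}{\beta_{k,n}(\lambda)^2}+i\sum_{n\in\Z_k(\lambda)^-}\frac{v\,\P_n v}{\beta_{k,n}(\lambda)^2}$, so its imaginary part is $Z^*Z$ with $Z = \sum_{n\in\Z_k(\lambda)^-}\beta_{k,n}(\lambda)^{-1}\P_n v$ (up to normalization); note $\Z_k(\lambda)=\Z_k(\lambda)^- \cup N$. From $I_1(0)=S_0 M_1(0) S_0 = S_0 Y S_0 + i (ZS_0)^*(ZS_0)$ and \cite[Cor.~2.5]{RT14}, one gets $ZS_0 S_1 = 0$, hence $\P_n v S_0 S_1 = 0$ for each $n\in\Z_k(\lambda)^-$; combining with $S_0S_1=S_1$ and with part (a) for $n\in N$ gives $\P_n v S_1 = 0$ for all $n\in\Z_k(\lambda)$, and the adjoint statement follows as before. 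For part (c): this was essentially established inside the proof — from $ZS_1=0=S_1Z^*$ one computes $I_2(0)=-S_1 Y\big(I_0(0)+S_0\big)^{-1}YS_1$, which is a positive operator, so $\ker\big(I_2(0)\big)=\ker\big(YS_1\big)$ restricted appropriately; since $S_2$ projects onto that kernel and $S_2 S_1=S_2$, one obtains $YS_2 = Y S_1 S_2 = 0$, and $S_2 Y = 0$ by self-adjointness. Finally part (d): write $M_1(0)=Y+iZ^*Z$; by (c), $YS_2=0$, and since $S_2 S_1 = S_2$ and $ZS_1=0$ we get $Z S_2 = ZS_1S_2 = 0$, whence $M_1(0)S_2 = YS_2 + iZ^*ZS_2 = 0$, and the adjoint identity $S_2 M_1(0)=0$ follows likewise.

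The main obstacle is bookkeeping the nesting relations $S_2 S_1 = S_2 = S_1 S_2$ and $S_1 S_0 = S_1 = S_0 S_1$ and verifying that the operators $I_j(0)$ are genuinely of the positive or "$Y + iZ^*Z$" form required to invoke \cite[Cor.~2.5]{RT14}; once the correct factorizations of $I_1(0)$ and $I_2(0)$ are in hand, every assertion reduces to a short manipulation with orthogonal projections. No new estimates are needed — the content is purely algebraic and can be copied \emph{mutatis mutandis} from \cite[Sec.~3.1]{RT14}, as the excerpt already indicates.
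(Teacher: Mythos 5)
Your proof is correct and follows exactly the route the paper indicates: the paper itself omits the proof and refers to \cite[Sec.~3.1]{RT14}, and your reconstruction uses precisely the ingredients already displayed inside the proof of Proposition \ref{Prop_Asymp} (the factorization $v\P v=(\P v)^*(\P v)$ for part (a), the identity $I_1(0)=S_0YS_0+i(ZS_0)^*(ZS_0)$ and \cite[Cor.~2.5]{RT14} for part (b), the positivity of $-I_2(0)=S_1Y(I_0(0)+S_0)^{-1}YS_1$ for part (c), and the decomposition $M_1(0)=Y+iZ^*Z$ together with the nesting $S_2S_1=S_2$ for part (d)). The only cosmetic difference is that you take $Z=\sum_{n\in\Z_k(\lambda)^-}\beta_{k,n}(\lambda)^{-1}\P_n v$ rather than the self-adjoint square root of $\im M_1(0)$ used in the paper's intermediate computation; both choices give $Z^*Z=\im M_1(0)$ and lead to the same conclusion, and yours is in fact slightly more convenient for extracting the channel-wise vanishing $\P_n v S_1=0$.
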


\section{Continuity properties of the scattering matrix}\label{seccont}
\setcounter{equation}{0}

We prove in this section continuity properties of the channel scattering matrices
associated with the scattering pair $\{H^0_k,H^V_k\}$. As before, the value of
$k\in[-1/2,1/2]$ is fixed throughout the section.

First, we note that under the assumption that $V\in\linf(\R;\R)$ is $2\pi$-periodic
the wave operators
$$
W_{k,\pm}:=\slim_{t\to\pm\infty}\e^{itH^V_k}\e^{-itH^0_k}
$$
exist and are complete (see \cite[Thm.~2.1]{Fra03}). As a consequence, the scattering
operator
$$
S_k:=W_{k,+}^*W_{k,-}
$$
is a unitary operator in $\ltwo(\Pi)$ which commutes with $H^0_k$, and thus $S_k$ is decomposable in the spectral representation of $H^0_k$.
To give an explicit formula for $S_k$ in that representation, that is, for the operator
$\U_kS_k\U_k^*$ in $\Hrond_k$, we recall from Proposition \ref{Prop_Asymp}, Lemma \ref{lem_vp},
and formula \eqref{broccoli}, that the operator
\begin{equation}\label{def_M0}
\M_k(\lambda,0)
\equiv\lim_{\varepsilon\searrow0}\big(u+GR^0_k(\lambda+i\varepsilon)G^*\big)^{-1}
\end{equation}
belongs to $\B\big(\ltwo(\T)\big)$ for each
$\lambda\in[k^2,\infty)\setminus\{\tau_k\cup\sigma_{\rm p}(H_k^V)\}$.
We also define for $n,n'\in\Z$ the operator
$\delta_{nn'}\in\B\big(\P_{n'}\;\!\ltwo(\T);\P_n\;\!\ltwo(\T)\big)$ by
$\delta_{nn'}=1$ if $n=n'$ and $\delta_{nn'}=0$ otherwise. Then, a computation
using stationary formulas as presented in \cite[Sec.~2.8]{Yaf92} shows that
$$
\big(\U_kS_k\U_k^*\xi\big)_n(\lambda)
:=\sum_{n'\in\Z_k(\lambda)}S_k(\lambda)_{nn'}\;\!\xi_{n'}(\lambda),
\quad \xi\in\Hrond_k,~n\in\Z,~\lambda\in[\lambda_{k,n},\infty)
\setminus\{\tau_k\cup\sigma_{\rm p}(H_k^V)\},
$$
with $S_k(\lambda)_{nn'}$ the channel scattering matrix given by
\begin{equation}\label{formule_S_matrix}
S_k(\lambda)_{nn'}
=\delta_{nn'}-2i\hspace{1pt}\beta_{k,n}(\lambda)^{-1}\;\!\P_nv\;\!\M_k(\lambda,0)
v\;\!\P_{n'}\beta_{k,n'}(\lambda)^{-1}.
\end{equation}
Moreover, the explicit formula \eqref{broccoli} implies for each $n,n'\in\Z$ the
continuity of the map
$$
[k^2,\infty)\setminus\{\tau_k\cup\sigma_{\rm p}(H_k^V)\}
\ni\lambda\mapsto S_k(\lambda)_{nn'}\in\B\big(\P_{n'}\ltwo(\T),\P_n\ltwo(\T)\big),
\quad\lambda_{k,n},\lambda_{k,n'}<\lambda.
$$
Therefore, in order to completely determine the continuity properties of the channel
scattering matrices $S_k(\lambda)_{nn'}$, it only remains to describe the behaviour of
$S_k(\lambda)_{nn'}$ as $\lambda\to\lambda_0\in\tau_k\cup\sigma_{\rm p}(H_k^V)$. In
the sequel, we consider separately the behaviour of $S_k(\lambda)_{nn'}$ at thresholds
and at embedded eigenvalues, starting with the thresholds.

For that purpose, we first note that for each $\lambda\in\tau_k$, a channel can either
be already open (in which case one has to show the existence and the equality of the
limits from the right and from the left), or can open at the energy $\lambda$ (in
which case one has only to show the existence of the limit from the right). Therefore,
as in the previous section, we shall fix $\lambda\in\tau_k$, and consider the
expression $S_k(\lambda-\kappa^2)_{nn'}$ for suitable $\kappa$ with $|\kappa|>0$ small
enough (recall that all expressions of Section \ref{sec_spectral} were also computed at fixed
$\lambda\in \tau_k$ but that the dependence on $\lambda$ has not been explicitly
written for the simplicity).

Before considering the continuity at thresholds, we define for each fixed
$\lambda\in\tau_k$, for $\kappa\in\widetilde O(\varepsilon)$ with $\varepsilon>0$
small enough, and for $2\ge\ell\ge m\ge0$ the operators
$$
C_{\ell m}(\kappa):=\big[S_\ell,\big(I_m(\kappa)+S_m\big)^{-1}\big]
\in\B\big(\ltwo(\T)\big),
$$
and note that $C_{\ell m}(\kappa)\in\O(\kappa)$ due to Lemma \ref{lemme_com}. In fact,
the formulas \eqref{form_I_0}, \eqref{form_I_1} and \eqref{form_I_2} imply that
$C_{\ell m}'(0):=\lim_{\kappa\to0}\frac1\kappa\;\!C_{\ell m}(\kappa)$ exists in
$\B\big(\ltwo(\T)\big)$. In other cases, we use the notation
$F(\kappa)\in\Oa(\kappa^n)$, $n\in\N$, for an operator $F(\kappa)\in\O(\kappa^n)$ such
that $\lim_{\kappa\to0}\kappa^{-n}F(\kappa)$ exists in $\B\big(\ltwo(\T)\big)$. We
also note that if $\kappa\in(0,\varepsilon)$ or $i\kappa\in(0,\varepsilon)$
with $\varepsilon>0$, then
$\kappa\in\widetilde O(\varepsilon)$ and
$-\kappa^2\in(-\varepsilon^2\varepsilon^2)\setminus\{0\}$.

\begin{Proposition}\label{propcont}
Assume that $V\in\linf(\R;\R)$ is $2\pi$-periodic, let $\lambda\in\tau_k$, take
$\kappa\in(0,\varepsilon)$ or $i\kappa\in(0,\varepsilon)$ with $\varepsilon>0$ small
enough, and let $n,n'\in\Z$.
\begin{enumerate}
\item[(a)] If $\lambda_{k,n},\lambda_{k,n'}<\lambda$, then the limit
$\lim_{\kappa\to0}S_k(\lambda-\kappa^2)_{nn'}$ exists and is given by
$$
\lim_{\kappa\to0}S_k(\lambda-\kappa^2)_{nn'}
=\delta_{nn'}-2i\beta_{k,n}(\lambda)^{-1}\;\!\P_nvS_0\big(I_1(0)+S_1\big)^{-1}S_0
v\;\!\P_{n'}\beta_{k,n'}(\lambda)^{-1}.
$$
\item[(b)] If $\lambda_{k,n},\lambda_{k,n'}\le\lambda$ and $-\kappa^2>0$, then the
limit $\lim_{\kappa\to0}S_k(\lambda-\kappa^2)_{nn'}$ exists and is given by
$$
\lim_{\kappa\to0}S_k(\lambda-\kappa^2)_{nn'}=
\begin{cases}
0 & \hbox{if~~$\lambda_{k,n}<\lambda$, $\lambda_{k,n'}=\lambda$,}\\
0 & \hbox{if~~$\lambda_{k,n}=\lambda$, $\lambda_{k,n'}<\lambda$,}\\
\delta_{nn'}-2\;\!\P_nv\big(I_0(0)+S_0\big)^{-1}v\;\!\P_{n'}\\
\quad+2\;\!\P_nv\;\!C_{10}'(0)S_1\big(I_2(0)+S_2\big)^{-1}S_1C_{10}'(0)v\;\!\P_{n'}
& \hbox{if~~$\lambda_{k,n}=\lambda=\lambda_{k,n'}$.}
\end{cases}
$$
\end{enumerate}
\end{Proposition}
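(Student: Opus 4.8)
The plan is to start from the stationary formula \eqref{formule_S_matrix} for the channel scattering matrix, written at energy $\lambda-\kappa^2$ rather than $\lambda$, namely
$$
S_k(\lambda-\kappa^2)_{nn'}
=\delta_{nn'}-2i\hspace{1pt}\beta_{k,n}(\lambda-\kappa^2)^{-1}\;\!\P_nv\;\!
\big(u+GR^0_k(\lambda-\kappa^2)G^*\big)^{-1}v\;\!\P_{n'}\beta_{k,n'}(\lambda-\kappa^2)^{-1},
$$
and then to insert the asymptotic expansion \eqref{sol1} for $\big(u+GR^0_k(\lambda-\kappa^2)G^*\big)^{-1}=\M_k(\lambda,\kappa)$ obtained in Proposition \ref{Prop_Asymp}. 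The key observation is that the prefactors $\beta_{k,n}(\lambda-\kappa^2)^{-1}$ behave differently according to whether the channel $n$ is already open or opens precisely at $\lambda$: if $\lambda_{k,n}<\lambda$, then $\beta_{k,n}(\lambda-\kappa^2)\to\beta_{k,n}(\lambda)>0$, whereas if $\lambda_{k,n}=\lambda$ and $-\kappa^2>0$, then $\beta_{k,n}(\lambda-\kappa^2)=(-\kappa^2)^{1/4}\to0$, contributing a factor $|\kappa|^{-1/2}$ on each side. Matching these powers of $\kappa$ against the powers $\kappa^0$, $\kappa^{-1}$, $\kappa^{-2}$ appearing in the three terms of \eqref{sol1} is what produces the various cases.

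For part (a), both $\beta_{k,n}$ and $\beta_{k,n'}$ converge to positive constants, so one simply needs $\lim_{\kappa\to0}\P_nv\,\M_k(\lambda,\kappa)v\,\P_{n'}$. Here the crucial input is Lemma \ref{relations_simples}: part (a) of that lemma gives $\P_nvS_0=0=S_0v\P_n$ for $n\in N$, but for $n\notin N$ with $\lambda_{k,n}<\lambda$ one has no such vanishing, so the $\kappa^{-2}$ term, sandwiched by $S_0$'s on the outside, must be analyzed via the commutator structure. Actually, for a channel strictly below $\lambda$, $\P_n v$ does not annihilate $S_0$; the resolution is that the singular $\kappa^{-1}$ and $\kappa^{-2}$ terms in \eqref{sol1} each carry inner factors $S_1$, $S_2$, and by Lemma \ref{relations_simples}(b),(d) together with the commutator estimates of Lemma \ref{lemme_com}, one shows that $\P_nv\big(I_0(\kappa)+S_0\big)^{-1}S_0\big(I_1(\kappa)+S_1\big)^{-1}S_1=\P_nv\cdot\O(\kappa)$ when $\lambda_{k,n}<\lambda$, because pulling $S_1$ to the left past $\big(I_1(\kappa)+S_1\big)^{-1}$ and $\big(I_0(\kappa)+S_0\big)^{-1}$ costs a factor $\kappa$ (Lemma \ref{lemme_com}) while $\P_nvS_1=0$ (Lemma \ref{relations_simples}(b)). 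This kills the $\kappa^{-1}$ and $\kappa^{-2}$ terms in the limit, leaving only $\delta_{nn'}-2i\beta_{k,n}(\lambda)^{-1}\P_nvS_0\big(I_1(0)+S_1\big)^{-1}S_0v\P_{n'}\beta_{k,n'}(\lambda)^{-1}$, as claimed, since $\big(I_0(0)+S_0\big)^{-1}S_0=S_0$.

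For part (b), the new feature is the $|\kappa|^{-1/2}$ blow-up of $\beta_{k,n}(\lambda-\kappa^2)^{-1}$ whenever $\lambda_{k,n}=\lambda$. When exactly one of $n,n'$ opens at $\lambda$ (say $\lambda_{k,n}=\lambda>\lambda_{k,n'}$), we get one factor $|\kappa|^{-1/2}$; combined with $\P_nvS_0=0$ (Lemma \ref{relations_simples}(a)) for the new channel, the leading surviving term in $\P_nv\,\M_k(\lambda,\kappa)v\,\P_{n'}$ is of order $\kappa^{1}$ (one commutator to extract the vanishing, times the worst singularity $\kappa^{-2}$, but with $S_0$-annihilation on the $n$ side forcing an extra $\kappa$ and on the $n'$-side the argument of part (a) forcing yet another $\kappa$, actually giving $\O(\kappa^{3/2})$ after the prefactor) — in any case it vanishes, yielding $0$. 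When both channels open, $\lambda_{k,n}=\lambda=\lambda_{k,n'}$, we have two factors $|\kappa|^{-1/2}$, i.e. an overall $|\kappa|^{-1}$, which must be balanced against the three terms of \eqref{sol1}; using $\P_nvS_0=0$, the $\kappa^0$ term contributes $-2\P_nv\big(I_0(0)+S_0\big)^{-1}v\P_{n'}$ after noting $\beta_{k,n}(\lambda-\kappa^2)^{-2}=(-\kappa^2)^{-1/2}=|\kappa|^{-1}$ and $2i\cdot|\kappa|^{-1}\cdot\kappa$ has modulus $2$ (with the correct phase for $-\kappa^2>0$), the $\kappa^{-1}$ term vanishes, and the $\kappa^{-2}$ term survives precisely through the commutator identities $C_{10}(\kappa)=[S_1,\big(I_0(\kappa)+S_0\big)^{-1}]$: writing $\P_nv\big(I_0(\kappa)+S_0\big)^{-1}S_0\big(I_1(\kappa)+S_1\big)^{-1}S_1=\P_nv\big[S_1,\big(I_0(\kappa)+S_0\big)^{-1}\big]\big(I_1(\kappa)+S_1\big)^{-1}S_1+\P_nvS_1(\cdots)=\P_nvC_{10}(\kappa)S_1\big(I_1(0)+S_1\big)^{-1}+\Oa(\kappa^2)$ and using $C_{10}(\kappa)=\kappa C_{10}'(0)+\Oa(\kappa^2)$, the $\kappa^{-2}$ is cancelled by $\kappa\cdot\kappa$ from the two commutators (one on each side) to give the stated term $2\P_nvC_{10}'(0)S_1\big(I_2(0)+S_2\big)^{-1}S_1C_{10}'(0)v\P_{n'}$.

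The main obstacle, and where the bookkeeping becomes delicate, is the combinatorics of tracking exactly which powers of $\kappa$ are gained from the commutator lemma (Lemma \ref{lemme_com}) versus lost from the resolvent expansion \eqref{sol1} versus contributed by the prefactors $\beta_{k,n}(\lambda-\kappa^2)^{\pm1}$, and in particular verifying that in the ``both open'' case the surviving $\kappa^{-2}$ contribution is \emph{exactly} cancelled — no more, no less — so that a finite nonzero limit remains, and that the middle $\kappa^{-1}$ term genuinely vanishes rather than contributing. This requires carefully chaining Lemma \ref{relations_simples}(a)--(d) (which provides $\P_nvS_0=0$, $\P_nvS_1=0$ for $n\in\Z_k(\lambda)$, $YS_2=0$, $M_1(0)S_2=0$) with the $\O(\kappa)$ commutator estimates, exactly as in the proof of the analogous \cite[Prop.~4.4]{RT14}, from which the argument can be copied \emph{mutatis mutandis}; the phase factor from $(-\kappa^2)^{-1/4}$ when $-\kappa^2>0$ must also be handled, but since $-\kappa^2>0$ means $\kappa$ is purely imaginary with negative imaginary part, $(-\kappa^2)^{1/4}=|\kappa|^{1/2}$ is real positive and no spurious phase appears.
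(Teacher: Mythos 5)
Your overall plan — insert the asymptotic expansion from Proposition~\ref{Prop_Asymp} into the stationary formula \eqref{formule_S_matrix}, then match the powers of $\kappa$ coming from the $\beta_{k,n}(\lambda-\kappa^2)^{\pm 1}$ prefactors against those of the expansion, using Lemmas~\ref{lemme_com} and~\ref{relations_simples} to generate the necessary cancellations — is the same as the paper's, which moreover passes to the re-ordered expansion \eqref{eq_grosse} specifically to surface the projections $S_\ell$ at the outer ends of each term. However, there is a genuine gap in your treatment of part~(a).

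Your key estimate for open channels is that pulling a single inner projection $S_1$ (resp.\ $S_2$) through the resolvent factors costs one factor of $\kappa$ via Lemma~\ref{lemme_com}, while $\P_nvS_1=0$ (resp.\ $\P_nvS_2 = \P_nvS_1S_2=0$) kills the leading piece. For the $\kappa^{-1}$ term of \eqref{sol1} this indeed yields $\kappa^{-1}\cdot\O(\kappa)\cdot\O(\kappa)=\O(\kappa)\to 0$. But for the $\kappa^{-2}$ term the same counting gives $\kappa^{-2}\cdot\O(\kappa)\cdot\O(\kappa)=\O(1)$, which does \emph{not} vanish: the naive commutator bound leaves a potentially nonzero finite limit. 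Your assertion that the argument ``kills the $\kappa^{-1}$ and $\kappa^{-2}$ terms in the limit'' is therefore not justified by the power counting you give. The missing step is the stronger vanishing $C_{20}(\kappa), C_{21}(\kappa)\in\Oa(\kappa^2)$ (equivalently $C_{20}'(0)=C_{21}'(0)=0$), which does not follow from Lemma~\ref{lemme_com} alone but from the structural identities: since $I_0(\kappa)=v\P v+\kappa M_1(\kappa)$, one has
$$
C_{20}(\kappa)=\big(I_0(\kappa)+S_0\big)^{-1}\big[v\P v+\kappa M_1(\kappa),\,S_2\big]\big(I_0(\kappa)+S_0\big)^{-1},
$$
and then $[v\P v,S_2]=0$ (from Lemma~\ref{relations_simples}(a) and $S_0S_2=S_2$) together with $[M_1(0),S_2]=0$ (Lemma~\ref{relations_simples}(d)) upgrades the estimate from $\O(\kappa)$ to $\Oa(\kappa^2)$; a similar argument handles $C_{21}(\kappa)$. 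You cite Lemma~\ref{relations_simples}(d), so you are pointing at the right ingredient, but the chain of reasoning as written does not actually deploy it, and without it the $\kappa^{-2}$ contribution is not shown to vanish.

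For part~(b) your conclusions agree with the paper's, but there is some mislabeling that could hide further gaps: in the case $\lambda_{k,n}=\lambda=\lambda_{k,n'}$ the surviving commutator term $C_{10}'(0)S_1\big(I_2(0)+S_2\big)^{-1}S_1C_{10}'(0)$ comes from the $\kappa^{-1}$ block of \eqref{eq_grosse} (after $C_{10}(\kappa)\in\Oa(\kappa)$ gives $\kappa^{-1}\cdot\kappa\cdot\kappa=\O(\kappa)$, which the prefactor $|\kappa|^{-1}$ then promotes to $\O(1)$), not from the $\kappa^{-2}$ block as you state; the $\kappa^{-2}$ block is in fact killed by $C_{20}(\kappa)\in\Oa(\kappa^2)$. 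In the mixed case $\lambda_{k,n}<\lambda=\lambda_{k,n'}$ (or vice versa) your claim of ``$\O(\kappa^{3/2})$ after the prefactor'' is not supported by a clean derivation; here too the decisive input is $C_{20}(\kappa),C_{21}(\kappa)\in\Oa(\kappa^2)$, which your argument for part~(a) needs as well. I recommend making the $\Oa(\kappa^2)$ estimate for $C_{20}$, $C_{21}$ explicit and then re-running the bookkeeping term by term in \eqref{eq_grosse}.
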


Before the proof, we note that the r.h.s. of \eqref{sol1} can be rewritten as in
\cite[Sec.~3.3]{RT14}\hspace{1pt}:
\begin{align}
&\M_k(\lambda,\kappa)\nonumber\\
&=\kappa\big(I_0(\kappa)+S_0\big)^{-1}\nonumber\\
&\quad+\Big(S_0\big(I_0(\kappa)+S_0\big)^{-1}-C_{00}(\kappa)\Big)S_0
\big(I_1(\kappa)+S_1\big)^{-1}S_0
\Big(\big(I_0(\kappa)+S_0\big)^{-1}S_0+C_{00}(\kappa)\Big)\nonumber\\
&\quad+\frac1\kappa\Big\{\Big(S_1\big(I_0(\kappa)+S_0\big)^{-1}-C_{10}(\kappa)\Big)
\big(I_1(\kappa)+ S_1\big)^{-1}-\Big(S_0\big(I_0(\kappa)+S_0\big)^{-1}
-C_{00}(\kappa)\Big)C_{11}(\kappa)\Big\}\nonumber\\
&\qquad\times S_1\big(I_2(\kappa)+S_2\big)^{-1}S_1\Big\{\big(I_1(\kappa)+S_1\big)^{-1}
\Big(\big(I_0(\kappa)+S_0\big)^{-1}S_1+C_{10}(\kappa)\Big)\nonumber\\
&\qquad+C_{11}(\kappa)\Big(\big(I_0(\kappa)+S_0\big)^{-1}S_0
+C_{00}(\kappa)\Big)\Big\}\nonumber\\
&\quad+\frac1{\kappa^2}\Big\{\Big[\Big(S_2\big(I_0(\kappa)+S_0\big)^{-1}
-C_{20}(\kappa)\Big)\big(I_1(\kappa)+S_1\big)^{-1}\nonumber\\
&\qquad-\Big(S_0\big(I_0(\kappa)+S_0\big)^{-1}-C_{00}(\kappa)\Big)C_{21}(\kappa)\Big]
\big(I_2(\kappa)+S_2\big)^{-1}\nonumber\\
&\qquad-\Big[\Big(S_1\big(I_0(\kappa)+S_0\big)^{-1}-C_{10}(\kappa)\Big)
\big(I_1(\kappa)+S_1\big)^{-1}\nonumber\\
&\qquad-\Big(S_0\big(I_0(\kappa)+S_0\big)^{-1}
-C_{00}(\kappa)\Big)C_{11}(\kappa)\Big]C_{22}(\kappa)\Big\}S_2I_3(\kappa)^{-1}S_2\nonumber\\
&\qquad\times\Big\{\big(I_2(\kappa)+S_2\big)^{-1}\Big[\big(I_1(\kappa)+S_1\big)^{-1}
\Big(\big(I_0(\kappa)+S_0\big)^{-1}S_2+C_{20}(\kappa)\Big)\nonumber\\
&\qquad+C_{21}(\kappa)\Big(\big(I_0(\kappa)+S_0\big)^{-1}S_0+C_{00}(\kappa)\Big)\Big]\nonumber\\
&\qquad+C_{22}(\kappa)\Big[\big(I_1(\kappa)+S_1\big)^{-1}
\Big(\big(I_0(\kappa)+S_0\big)^{-1}S_1+C_{10}(\kappa)\Big)\nonumber\\
&\qquad+C_{11}(\kappa)\Big(\big(I_0(\kappa)+S_0\big)^{-1}S_0
+C_{00}(\kappa)\Big)\Big]\Big\}.\label{eq_grosse}
\end{align}
The interest in this formulation is that the projections $S_\ell$ (which lead to
simplifications in the proof below) have been put into evidence at the beginning or at
the end of each term.

\begin{proof}
(a) Some lengthy, but direct, computations taking into account the expansion
\eqref{eq_grosse}, the relation $\big(I_\ell(0)+S_\ell\big)^{-1}S_\ell=S_\ell$, the
expansion
\begin{equation}\label{expansion_beta}
\beta_{k,n}(\lambda-\kappa^2)^{-1}
=\beta_{k,n}(\lambda)^{-1}\left(1+\frac{\kappa^2}{4(\lambda-\lambda_{k,n})}
+\O(\kappa^4)\right),
\quad\lambda_{k,n}<\lambda,
\end{equation}
and Lemma \ref{relations_simples}(b) lead to the equality
\begin{align*}
&\lim_{\kappa\to0}
\beta_{k,n}(\lambda-\kappa^2)^{-1}\;\!\P_nv\;\!\M_k(\lambda,\kappa)
v\;\!\P_{n'}\beta_{k,n'}(\lambda-\kappa^2)^{-1}\\
&=\beta_{k,n}(\lambda)^{-1}\;\!\P_nvS_0\big(I_1(0)+S_1\big)^{-1}
S_0v\;\!\P_{n'}\beta_{k,n'}(\lambda)^{-1}\\
&\quad-\beta_{k,n}(\lambda)^{-1}\;\!\P_nv
\big(C_{20}'(0)+S_0 C_{21}'(0)\big)S_2I_3(0)^{-1}S_2
\big(C_{20}'(0)+C_{21}'(0)S_0\big)v\;\!\P_{n'}\beta_{k,n'}(\lambda)^{-1}.
\end{align*}
Moreover, Lemmas \ref{relations_simples}(a) and \ref{relations_simples}(d) imply that
\begin{align}
C_{20}(\kappa)
&=\big(I_0(\kappa)+S_0\big)^{-1}\big[v\;\!\P v+\kappa\;\!M_1(\kappa),S_2\big]
\big(I_0(\kappa)+S_0\big)^{-1}\nonumber\\
&=\kappa\;\!\big(I_0(0)+S_0\big)^{-1}\big[M_1(0),S_2\big]\big(I_0(0)+S_0\big)^{-1}
+\Oa(\kappa^2)\nonumber\\
&=\Oa(\kappa^2),\label{eq_C20}
\end{align}
and Lemma \ref{relations_simples}(d) and the expansion \eqref{eq23} imply that
\begin{align*}
C_{21}(\kappa)
&=\big(I_1(\kappa)+S_1\big)^{-1}\big[S_0M_1(0)S_0+\kappa\;\!M_2(\kappa),S_2\big]
\big(I_1(\kappa)+S_1\big)^{-1}\nonumber\\
&=\kappa\;\!\big(I_1(\kappa)+S_0\big)^{-1}\big[M_2(\kappa),S_2\big]
\big(I_1(\kappa)+S_0\big)^{-1}\nonumber\\
&=\kappa\;\!\big(I_1(\kappa)+S_0\big)^{-1}
\big[-S_0M_1(0)\big(I_0(0)+S_0\big)^{-1}M_1(0)S_0,S_2\big]
\big(I_1(\kappa)+S_0\big)^{-1}+\Oa(\kappa^2)\nonumber\\
&=\Oa(\kappa^2).
\end{align*}
Therefore, one has $C_{20}'(0)=C_{21}'(0)=0$, and thus
\begin{align*}
&\lim_{\kappa\to0}
\beta_{k,n}(\lambda-\kappa^2)^{-1}\;\!\P_nv\;\!\M_k(\lambda,\kappa)
v\;\!\P_{n'}\beta_{k,n'}(\lambda-\kappa^2)^{-1}\\
&=\beta_{k,n}(\lambda)^{-1}\;\!\P_nvS_0\big(I_1(0)+S_1\big)^{-1}
S_0v\;\!\P_{n'}\beta_{k,n'}(\lambda)^{-1}.
\end{align*}
Since
\begin{equation}\label{eq_start}
S_k(\lambda-\kappa^2)_{nn'}-\delta_{nn'}
=-2i\hspace{1pt}\beta_{k,n}(\lambda-\kappa^2)^{-1}\;\!\P_nv\;\!\M_k(\lambda,\kappa)
v\;\!\P_{n'}\beta_{k,n'}(\lambda-\kappa^2)^{-1},
\end{equation}
this proves the claim.

(b.1) We first consider the case $\lambda_{k,n}<\lambda$, $\lambda_{k,n'}=\lambda$
(the case $\lambda_{k,n}=\lambda$, $\lambda_{k,n'}<\lambda$ is not presented since it
is similar). An inspection of the expansion \eqref{eq_grosse} taking into account the
relation
$
\big(I_\ell(\kappa)+S_\ell\big)^{-1}=\big(I_\ell(0)+S_\ell\big)^{-1}+\Oa(\kappa)
$
and the relation $\big(I_\ell(0)+S_\ell\big)^{-1}S_\ell=S_\ell$ leads to the equation
\begin{align*}
&\beta_{k,n}(\lambda-\kappa^2)^{-1}\;\!\P_nv\;\!\M_k(\lambda,\kappa)
v\;\!\P_{n'}\beta_{k,n'}(\lambda-\kappa^2)^{-1}\\
&=\beta_{k,n}(\lambda-\kappa^2)^{-1}\;\!\P_nv\;\!\bigg\{
\Oa(\kappa)+S_0\big(I_1(\kappa)+S_1\big)^{-1}S_0\\
&\quad+\frac1\kappa\big(S_1+\Oa(\kappa)\big)S_1
\big(I_2(\kappa)+S_2\big)^{-1}S_1\big(S_1+\Oa(\kappa)\big)\\
&\quad+\frac1{\kappa^2}\left[\Oa(\kappa^2)+S_2\big(I_0(\kappa)+ S_0\big)^{-1}
\big(I_1(\kappa)+S_1\big)^{-1}\big(I_2(\kappa)+S_2\big)^{-1}
-C_{20}(\kappa)-S_0C_{21}(\kappa)-S_1C_{22}(\kappa)\right]\\
&\qquad\times S_2 I_3(\kappa)^{-1}S_2
\Big[\Oa(\kappa^2)+\big(I_2(\kappa)+S_2\big)^{-1}\big(I_1(\kappa)+S_1\big)^{-1}
\big(I_0(\kappa)+S_0\big)^{-1}S_2+C_{20}(\kappa)+C_{21}(\kappa)S_0\\
&\qquad+C_{22}(\kappa)S_1\Big]\bigg\}
\;\!v\;\!\P_{n'}\beta_{k,n'}(\lambda-\kappa^2)^{-1}.
\end{align*}
An application of Lemma \ref{relations_simples}(a)-(b) to the previous equation gives
\begin{align*}
&\beta_{k,n}(\lambda-\kappa^2)^{-1}\;\!\P_nv\;\!\M_k(\lambda,\kappa)
v\;\!\P_{n'}\beta_{k,n'}(\lambda-\kappa^2)^{-1}\\
&=\beta_{k,n}(\lambda-\kappa^2)^{-1}\;\!\P_nv
\left\{\Oa(\kappa)-\frac1{\kappa^2}\big(\Oa(\kappa^2)+C_{20}(\kappa)
+S_0C_{21}(\kappa)\big)S_2 I_3(\kappa)^{-1}S_2\big(\Oa(\kappa^2)
+C_{20}(\kappa)\big)\right\}\\
&\qquad\times v\;\!\P_{n'}\beta_{k,n'}(\lambda-\kappa^2)^{-1}.
\end{align*}
Finally, if one takes into account the expansion \eqref{expansion_beta} for
$\beta_{k,n}(\lambda-\kappa^2)^{-1}$ and the equality
$\beta_{k,n'}(\lambda-\kappa^2)^{-1}=|\kappa|^{-1/2}$, one ends up with
\begin{align*}
&\beta_{k,n}(\lambda-\kappa^2)^{-1}\;\!\P_nv\;\!\M_k(\lambda,\kappa)
v\;\!\P_{n'}\beta_{k,n'}(\lambda-\kappa^2)^{-1}\\
&=-\frac1{\kappa^2|\kappa|^{1/2}}\;\!\beta_{k,n}(\lambda)^{-1}\P_nv
\big(C_{20}(\kappa)+S_0C_{21}(\kappa)\big)S_2 I_3(\kappa)^{-1}S_2C_{20}(\kappa)
v\;\!\P_{n'}+\O(|\kappa|^{1/2}).
\end{align*}
Since $C_{20}(\kappa)=\Oa(\kappa^2)$ (see \eqref{eq_C20}), one infers that
$
\beta_{k,n}(\lambda-\kappa^2)^{-1}\P_nv\;\!\M_k(\lambda,\kappa)
v\;\!\P_{n'}\beta_{k,n'}(\lambda-\kappa^2)^{-1}
$
vanishes as $\kappa\to0$, and thus that the limit
$\lim_{\kappa\to0}S_k(\lambda-\kappa^2)_{nn'}$ also vanishes due to \eqref{eq_start}.

(b.2) We are left with the case $\lambda_{k,n}=\lambda=\lambda_{k,n'}$. An inspection
of the expansion \eqref{eq_grosse} taking into account the relation
$\big(I_\ell(\kappa)+S_\ell\big)^{-1}=\big(I_\ell(0)+S_\ell\big)^{-1}+\Oa(\kappa)$,
the relation $\big(I_\ell(0)+S_\ell\big)^{-1}S_\ell=S_\ell$ and Lemma
\ref{relations_simples}(a) leads to the equation
\begin{align*}
&\beta_{k,n}(\lambda-\kappa^2)^{-1}\;\!\P_nv\;\!\M_k(\lambda,\kappa)
v\;\!\P_{n'}\beta_{k,n'}(\lambda-\kappa^2)^{-1}\\
&=\beta_{k,n}(\lambda-\kappa^2)^{-1}\;\!\P_nv\;\!\bigg\{\Oa(\kappa^2)
+\kappa\big(I_0(\kappa)+S_0\big)^{-1}
-\frac1\kappa\;\!C_{10}(\kappa)S_1\big(I_2(\kappa)+S_2\big)^{-1}S_1C_{10}(\kappa)\\
&\quad-\frac1{\kappa^2}\;\!\big(\Oa(\kappa^2)+C_{20}(\kappa)\big)S_2I_3(\kappa)^{-1}
S_2\big(\Oa(\kappa^2)+C_{20}(\kappa)\big)\bigg\}v\;\!\P_{n'}
\beta_{k,n'}(\lambda-\kappa^2)^{-1}.
\end{align*}
Therefore, since
$
\beta_{k,n}(\lambda-\kappa^2)^{-1}
=\beta_{k,n'}(\lambda-\kappa^2)^{-1}
=|\kappa|^{-1/2}
$
and $C_{20}(\kappa)\in\Oa(\kappa^2)$, one obtains that
\begin{align*}
&\lim_{\kappa\to0}\beta_{k,n}(\lambda-\kappa^2)^{-1}\;\!\P_nv\;\!
\M_k(\lambda,\kappa)v\;\!\P_{n'}\beta_{k,n'}(\lambda-\kappa^2)^{-1} \\
&=-i\;\!\P_nv\big(I_0(0)+S_0\big)^{-1}v\;\!\P_{n'}
+i\;\!\P_nv\;\!C_{10}'(0)S_1\big(I_2(0)+S_2\big)^{-1}S_1C_{10}'(0)v\;\!\P_{n'},
\end{align*}
and thus that
$$
\lim_{\kappa\to0}S_k(\lambda-\kappa^2)_{nn'}
=\delta_{nn'}-2\;\!\P_nv\big(I_0(0)+S_0\big)^{-1}v\;\!\P_{n'}
+2\;\!\P_nv\;\!C_{10}'(0)S_1\big(I_2(0)+S_2\big)^{-1}S_1C_{10}'(0)v\;\!\P_{n'}
$$
due to \eqref{eq_start}.
\end{proof}

We finally consider the continuity of the scattering matrix at embedded eigenvalues
not located at thresholds.

\begin{Proposition}
Assume that $V\in\linf(\R;\R)$ is $2\pi$-periodic, take
$\lambda\in\sigma_{\rm p}(H_k^V)\setminus\tau_k$, $\kappa\in(0,\varepsilon)$ or
$i\kappa\in(0,\varepsilon)$ with $\varepsilon>0$ small enough, and let $n,n'\in\Z$.
Then, if $\lambda_{k,n},\lambda_{k,n'}<\lambda$, the limit
$\lim_{\kappa\to0}S_k(\lambda-\kappa^2)_{nn'}$ exists and is given by
\begin{equation}\label{eqSvp}
\lim_{\kappa\to0}S_k(\lambda-\kappa^2)_{nn'}
=\delta_{nn'}-2i\hspace{1pt}\beta_{k,n}(\lambda)^{-1}\P_nv\big(J_0(0)+S\big)^{-1}
v\;\!\P_{n'}\beta_{k,n'}(\lambda)^{-1}
\end{equation}
\end{Proposition}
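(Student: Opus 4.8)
The plan is to start from the stationary formula \eqref{eq_start}, so that the whole problem reduces to computing
$\lim_{\kappa\to0}\beta_{k,n}(\lambda-\kappa^2)^{-1}\,\P_n v\,\M_k(\lambda,\kappa)\,v\P_{n'}\,\beta_{k,n'}(\lambda-\kappa^2)^{-1}$,
where, since $\lambda\in\sigma_{\rm p}(H^V_k)\setminus\tau_k$, the operator $\M_k(\lambda,\kappa)$ is given by the expansion \eqref{eq_expansion_2} of Proposition \ref{Prop_Asymp}. Because $\lambda_{k,n},\lambda_{k,n'}<\lambda$ we have $n,n'\in\Z_k(\lambda)$, so $\beta_{k,n}(\lambda-\kappa^2)^{-1}\to\beta_{k,n}(\lambda)^{-1}$ and $\beta_{k,n'}(\lambda-\kappa^2)^{-1}\to\beta_{k,n'}(\lambda)^{-1}$ by \eqref{expansion_beta}; hence it suffices to show that $\P_n v\,\M_k(\lambda,\kappa)\,v\P_{n'}\to\P_n v\,(J_0(0)+S)^{-1}\,v\P_{n'}$, after which \eqref{eqSvp} is immediate from \eqref{eq_start}.

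The algebraic input I would prepare first is the eigenvalue analogue of Lemma \ref{relations_simples}(b), namely $\P_n v\,S=0=S\,v\P_n$ for every $n\in\Z_k(\lambda)$. Writing $T_0=Y+iZ^*Z$ with $Y:=u+\sum_{n\in\Z_k(\lambda)^\bot}\beta_{k,n}(\lambda)^{-2}v\P_n v$ bounded self-adjoint and $Z^*Z:=\sum_{n\in\Z_k(\lambda)}\beta_{k,n}(\lambda)^{-2}v\P_n v\geq0$ (each $v\P_n v=(v\P_n)(v\P_n)^*$ being non-negative), one gets, for $q\in S\ltwo(\T)=\ker T_0$, that $0=\im\langle q,T_0q\rangle=\langle q,Z^*Zq\rangle$, so $\P_n vq=0$ for all $n\in\Z_k(\lambda)$, that is $\P_n v S=0$, and $Sv\P_n=0$ follows by adjunction; this is \cite[Cor.~2.5]{RT14} applied to $T_0$. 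By \cite[Cor.~2.8]{RT14} the Riesz projection $S$ of $T_0$ at $0$ is orthogonal and the eigenvalue $0$ is semisimple, so in addition $T_0S=0=ST_0$, whence $(J_0(0)+S)^{-1}S=S=S(J_0(0)+S)^{-1}$.

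With this in hand, the two terms of \eqref{eq_expansion_2} are handled separately. Since $J_0(\kappa)=T_0+\kappa^2T_1(\kappa)$ with $T_1(\kappa)$ convergent as $\kappa\to0$ (by \eqref{eq23}), one has $(J_0(\kappa)+S)^{-1}\to(J_0(0)+S)^{-1}$, so the first term contributes exactly $\P_n v\,(J_0(0)+S)^{-1}\,v\P_{n'}$. For the singular term, the resolvent identity together with $(J_0(0)+S)^{-1}S=S$ gives $(J_0(\kappa)+S)^{-1}S=S-\kappa^2(J_0(\kappa)+S)^{-1}T_1(\kappa)S$, hence, using $\P_n v S=0$, $\P_n v\,(J_0(\kappa)+S)^{-1}S\in\Oa(\kappa^2)$, and symmetrically $S\,(J_0(\kappa)+S)^{-1}v\P_{n'}\in\Oa(\kappa^2)$; provided the middle factor $J_1(\kappa)^{-1}=SJ_1(\kappa)^{-1}S$ is $\O(1)$ as $\kappa\to0$, the singular term sandwiched between $\P_n v$ and $v\P_{n'}$ is of the form $\kappa^{-2}\,\Oa(\kappa^2)\,\O(1)\,\Oa(\kappa^2)$, hence lies in $\Oa(\kappa^2)$ and vanishes in the limit. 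Combining the two contributions and feeding in the limits of the $\beta$'s through \eqref{eq_start} yields \eqref{eqSvp}.

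The step I expect to be the main obstacle is precisely the boundedness of $J_1(\kappa)^{-1}$ as $\kappa\to0$ (note that an a priori bound $\O(\kappa^{-2})$ would not be enough), and this is the point where self-adjointness of $H^V_k$ enters, exactly as in the proof of Proposition \ref{Prop_Asymp}. On the single path $\kappa=\frac\varepsilon2(1-i)\in O(\varepsilon)$ one has $\|\kappa^2(H^V_k-\lambda+\kappa^2)^{-1}\|_{\B(\ltwo(\Pi))}\leq1$, so by \eqref{eqlink} the argument of \eqref{notresauveur} applies verbatim and gives $\limsup_{\kappa\to0}\|\kappa^2\M_k(\lambda,\kappa)\|_{\B(\ltwo(\T))}<\infty$ there; together with the bounded invertibility of $J_0(\kappa)+S$ this forces $\|J_1(\kappa)^{-1}\|$ to stay bounded along that path, and propagating the bound to all of $\widetilde O(\varepsilon)$ is done as in Proposition \ref{Prop_Asymp}, applying the inversion formula of \cite[Lemma~2.1]{JN01} to $J_1(\kappa)=ST_1(0)S+\kappa T_2(\kappa)$. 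The remaining manipulations are routine.
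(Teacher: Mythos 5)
Your proof is correct and follows essentially the same route as the paper's: insert the two-term expansion \eqref{eq_expansion_2} into \eqref{eq_start}, observe that the regular term $(J_0(\kappa)+S)^{-1}$ converges to $(J_0(0)+S)^{-1}$, and kill the singular term using $\P_n v S = 0 = S v\P_{n'}$ (the eigenvalue analogue of Lemma \ref{relations_simples}(b)) together with the boundedness of $J_1(\kappa)^{-1}$ already established in the proof of Proposition \ref{Prop_Asymp}. The only cosmetic difference is that where the paper commutes $S$ through $(J_0(\kappa)+S)^{-1}$ to extract $\Oa(\kappa^2)$ error terms, you use the second resolvent identity together with $(J_0(0)+S)^{-1}S=S$ to reach the same $\Oa(\kappa^2)$ bound on $\P_n v(J_0(\kappa)+S)^{-1}S$, an algebraically equivalent move.
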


\begin{proof}
We know from \eqref{eq_expansion_2} that
$$
\M_k(\lambda,\kappa)
=\big(J_0(\kappa)+S\big)^{-1}
+\frac1{\kappa^2}\big(J_0(\kappa)+S)^{-1}SJ_1(\kappa)^{-1}S\big(J_0(\kappa)+S\big)^{-1},
$$
with $S$ the Riesz projection associated with the value $0$ of the operator
$$
T_0=u+\sum_{m\in\Z_k(\lambda)^\bot}\frac{v\;\!\P_mv}{\beta_{k,m}(\lambda)^2}
+i\sum_{m\in\Z_k(\lambda)}\frac{v\;\!\P_mv}{\beta_{k,m}(\lambda)^2}\;\!.
$$
Now, since $J_0(\kappa)=T_0+\kappa^2T_1(\kappa)$ with $T_1(\kappa)\in \Oa(1)$, a
commutation of $S$ with $\big(J_0(\kappa)+S\big)^{-1}$ gives
$$
\M_k(\lambda,\kappa)
=\big(J_0(\kappa)+S\big)^{-1}
+\frac1{\kappa^2}\;\!\big\{S\big(J_0(\kappa)+S)^{-1}+\Oa(\kappa^2)\big\}
SJ_1(\kappa)^{-1}S\big\{\big(J_0(\kappa)+S\big)^{-1}S+\Oa(\kappa^2)\big\}.
$$
In addition, an application of \cite[Lemma~2.5]{RT14} shows that $\P_nvS=0=Sv\;\!\P_n$
for each $n\in\Z_k(\lambda)$. These relations, together with \eqref{eq_start}, imply
the equality \eqref{eqSvp}.
\end{proof}

\section{Structure of the wave operators}\label{sec5}
\setcounter{equation}{0}

In this section, we establish new stationary formulas for the wave operators
$W_{k,\pm}$ for a fixed value of $k\in[-1/2,1/2]$, and also for the full wave operators
$W_\pm(H^0,H^V)$. As before, we assume throughout the section that $V\in\linf(\R;\R)$
is $2\pi$-periodic.

We recall from \cite[Eq.~2.7.5]{Yaf92} that $W_{k,-}$ satisfies for suitable
$\varphi,\psi\in\ltwo(\Pi)$ the following equation\hspace{1pt}:
$$
\big\langle W_{k,-}\;\!\varphi,\psi\big\rangle_{\ltwo(\Pi)}
=\int_\R\d\lambda\,\lim_{\varepsilon\searrow0}\frac\varepsilon\pi
\big\langle R^0_k(\lambda- i\varepsilon)\varphi,
R^V_k(\lambda- i\varepsilon)\psi\big\rangle_{\ltwo(\Pi)}.
$$
We also recall from \cite[Sec.~1.4]{Yaf92} that if
$
\delta_\varepsilon\big(H^0_k-\lambda\big)
:=\frac{\pi^{-1}\varepsilon}{(H^0_k-\lambda)^2+\varepsilon^2}
$
for $\varepsilon>0$, then the limit
$
\lim_{\varepsilon\searrow0}
\big\langle\delta_\varepsilon\big(H^0_k-\lambda\big)\varphi,
\psi\big\rangle_{\ltwo(\Pi)}
$
exists for a.e. $\lambda\in\R$ and verifies
$$
\langle \varphi,\psi\rangle_{\ltwo(\Pi)}
=\int_\R\d\lambda\,\lim_{\varepsilon\searrow0}
\big\langle\delta_\varepsilon(H^0_k-\lambda)\varphi,\psi\big\rangle_{\ltwo(\Pi)}.
$$
So, by taking \eqref{eq_resolv} into account and by using the fact that
$
\lim_{\varepsilon\searrow0}\big\|\delta_\varepsilon\big(H^0_k-\lambda\big)\big\|
_{\B(\ltwo(\Pi))}=0
$
if $\lambda<k^2$, one infers that
$$
\big\langle\big(W_{k,-}-1\big)\varphi,\psi\big\rangle_{\ltwo(\Pi)}
=-\int_{k^2}^\infty\d\lambda\,\lim_{\varepsilon\searrow0}\big\langle
G^*M_k(\lambda+i\varepsilon)G\delta_\varepsilon(H^0_k-\lambda)\varphi,
R^0_k(\lambda-i\varepsilon)\psi\big\rangle_{\ltwo(\Pi)},
$$
with
$$
M_k(z):=\big(u+GR^0_k(z)G^*\big)^{-1},\quad z\in\C\setminus\R.
$$

Below, we derive an expression for the operator $(W_{k,-}-1)$ in the spectral
representation of $H^0_k$; that is, for the operator $\U_k(W_{k,-}-1)\U_k^*$. For that
purpose, we decompose the operator $G$ into the product $G=v\;\!\gamma_0$, with
$\gamma_0\in\B\big(\widetilde\H^1(\Pi);\ltwo(\T)\big)$ given by
$$
(\gamma_0\varphi)(\theta):=\varphi(\theta,0),
\quad\varphi\in\widetilde\H^1(\Pi),~\theta\in\T.
$$
We also define the set
\begin{align*}
\D_k
:=\Big\{\xi\in\Hrond_k\mid\xi_n=\rho_n\otimes\e^{in(\cdot)},
~\rho_n\in C^\infty_{\rm c}&\big((\lambda_{k,n},\infty)\setminus
\{\tau_k\cup\sigma_{\rm p}(H^V_k)\}\big),\\
&~\rho_n\not\equiv0\hbox{ for a finite number of $n\in\Z$}\Big\}
\end{align*}
which is dense in $\Hrond_k$ since the point spectrum of $H^V_k$ has no accumulation
point except possibly at $+\infty$. Finally, we give the short following lemma, which
will be useful in the subsequent computations for the wave operators.

\begin{Lemma}\label{lem11}
For $\xi\in\D_k$ and $\lambda\ge k^2$, one has
\begin{enumerate}
\item[(a)]
$
\gamma_0\U_k^*\xi
=\pi^{-1/2}\sum_{n\in\Z}\int_{\lambda_{k,n}}^\infty\d\mu\,
\beta_{k,n}(\mu)^{-1}\xi_n(\mu)\in\ltwo(\T)
$,
\item[(b)]
$
\slim_{\varepsilon\searrow0}\gamma_0\U_k^*\delta_\varepsilon(L_k-\lambda)\xi
=\pi^{-1/2}\sum_{n\in\Z_k(\lambda)}\beta_{k,n}(\lambda)^{-1}\xi_n(\lambda)\in\ltwo(\T)
$.
\end{enumerate}
\end{Lemma}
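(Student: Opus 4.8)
The plan is to compute both quantities by unfolding the definition of $\U_k^*$ on the dense set $\D_k$, where everything is a finite sum of smooth compactly supported pieces, so no convergence subtleties arise. First I would observe that, by the definition of $\U_k$ and the unitarity $\U_k^*=\U_k^{-1}$, applying $\U_k^*$ to $\xi\in\D_k$ and then the trace map $\gamma_0$ amounts to inverting, for each $n$, the cosine transform $\Fc$ in the $x_2$-variable and evaluating at $x_2=0$. Concretely, writing $\xi_n=\rho_n\otimes\frac1{\sqrt{2\pi}}\e^{in(\cdot)}$ and changing variables $\lambda=\lambda_{k,n}+y^2$ (so $y=\sqrt{\lambda-\lambda_{k,n}}$, $\d\lambda=2y\,\d y=2\beta_{k,n}(\lambda)^2\,\d y$), the factor $2^{-1/2}(\lambda-\lambda_{k,n})^{-1/4}=2^{-1/2}\beta_{k,n}(\lambda)^{-1}$ in the definition of $\U_k$ combines with the Jacobian to turn the $\lambda$-integral into a $y$-integral of $\Fc^*$ applied to the relevant profile. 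Evaluating the inverse cosine transform at $x_2=0$ gives the constant $(2/\pi)^{1/2}\int_0^\infty(\cdots)\,\d y$, and after undoing the substitution one lands on $\pi^{-1/2}\int_{\lambda_{k,n}}^\infty\beta_{k,n}(\mu)^{-1}\xi_n(\mu)\,\d\mu$; summing over the finitely many nonzero $n$ yields (a). The only point to check is that the resulting $\ltwo(\T)$-valued object is well defined, which follows since each $\rho_n$ is compactly supported in $(\lambda_{k,n},\infty)$, so $\beta_{k,n}(\mu)^{-1}$ is bounded on its support.

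For (b), the strategy is the same computation with $\xi$ replaced by $\delta_\varepsilon(L_k-\lambda)\xi$, whose $n$-th component is $\frac{\pi^{-1}\varepsilon}{(\mu-\lambda)^2+\varepsilon^2}\,\xi_n(\mu)$. Feeding this into part (a) gives
$$
\gamma_0\U_k^*\delta_\varepsilon(L_k-\lambda)\xi
=\pi^{-1/2}\sum_{n\in\Z}\int_{\lambda_{k,n}}^\infty\d\mu\,
\beta_{k,n}(\mu)^{-1}\,\frac{\pi^{-1}\varepsilon}{(\mu-\lambda)^2+\varepsilon^2}\,\xi_n(\mu).
$$
As $\varepsilon\searrow0$, the approximate identity $\frac{\pi^{-1}\varepsilon}{(\mu-\lambda)^2+\varepsilon^2}$ concentrates at $\mu=\lambda$, so each integral converges to $\beta_{k,n}(\lambda)^{-1}\xi_n(\lambda)$ provided $\lambda$ lies in the (open) support region of $\rho_n$, i.e. provided $\lambda_{k,n}<\lambda$, equivalently $n\in\Z_k(\lambda)$ (the thresholds themselves are excluded by the definition of $\D_k$); for $n\notin\Z_k(\lambda)$ either $\lambda<\lambda_{k,n}$, so $\lambda$ is outside $[\lambda_{k,n},\infty)$ and the contribution vanishes, or $\lambda=\lambda_{k,n}\in\tau_k$, which is excluded. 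Since the sum is finite, the limit passes through and one gets the stated formula. To upgrade pointwise-in-$\lambda$ convergence to strong convergence in $\ltwo(\T)$ one notes that $\|\gamma_0\U_k^*\delta_\varepsilon(L_k-\lambda)\xi\|_{\ltwo(\T)}$ stays bounded by a constant times $\sum_n\|\beta_{k,n}^{-1}\xi_n\|$ restricted near $\lambda$ — finitely many smooth terms — so dominated convergence applies, or more simply one works with the explicit finite sum directly.

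The main obstacle, such as it is, is purely bookkeeping: one must handle the change of variables $\lambda\mapsto\sqrt{\lambda-\lambda_{k,n}}$ consistently in both the normalization factor of $\U_k$ and the measure, keep track of the $(2/\pi)^{1/2}$ versus $\pi^{-1/2}$ constants coming from the cosine transform versus its appearance evaluated at the boundary, and verify that the boundary evaluation of $\Fc^*\eta$ at $0$ is legitimate for $\eta$ in the image of $\S(\R_+)$ under the relevant maps — which is guaranteed by the regularity properties of $\U_k$ recorded just before Section \ref{sec_spectral}, namely the boundedness of $\U_k(n,\lambda)$ on $\ltwo(\T)\otimes\H_s(\R_+)$ for $s>1/2$. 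No genuinely hard analysis is needed; the density of $\D_k$ in $\Hrond_k$ (already noted, using the absence of accumulation of $\sigma_{\rm p}(H^V_k)$ below $+\infty$) is what makes restricting to this set harmless for the later applications.
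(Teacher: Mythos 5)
Your proof is correct and follows essentially the same route as the paper: part (a) is the direct computation unfolding $\U_k^*$ and the cosine transform with the change of variables $\mu=\lambda_{k,n}+y^2$, and part (b) uses the approximate-identity convergence of $\delta_\varepsilon$ against the compactly supported continuous function $\mu\mapsto\beta_{k,n}(\mu)^{-1}\xi_n(\mu)$. (One small bookkeeping slip: the case $\lambda=\lambda_{k,n}$ belongs to $n\in\Z_k(\lambda)$, not its complement, but since $\xi_n$ vanishes near $\lambda_{k,n}$ by the definition of $\D_k$ the corresponding term is $0$ either way, so nothing breaks.)
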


\begin{proof}
The equality in (a) follows from a direct computation, and the inclusion in
$\ltwo(\T)$ follows from the fact that $\U_k^*\xi\in\widetilde\H^1(\Pi)$. For (b), it
is sufficient to note that the map $\mu\mapsto\beta_{k,n}(\mu)^{-1}\xi_n(\mu)$ extends
trivially to a continuous function on $\R$ with compact support in
$(\lambda_{k,n},\infty)$, and then to use the convergence of the Dirac delta sequence
$\delta_\varepsilon(\,\cdot\,-\lambda)$.
\end{proof}

Thus, if we let $\xi,\zeta\in\D_k$ and take the previous observations into account, we
obtain the equalities
\begin{align}
&\big\langle\U_k\big(W_{k,-}-1\big)\U_k^*\xi,\zeta\big\rangle_{\Hrond_k}\nonumber\\
&=-\int_{k^2}^\infty\d\lambda\,\lim_{\varepsilon\searrow0}
\big\langle\gamma_0^*v\;\! M_k(\lambda+i\varepsilon)v\;\!\gamma_0\U_k^*
\delta_\varepsilon(L_k-\lambda)\xi,\U_k^*(L_k-\lambda+i\varepsilon)^{-1}
\zeta\big\rangle_{\ltwo(\Pi)}\nonumber\\
&=-\int_{k^2}^\infty\d\lambda\,\lim_{\varepsilon\searrow0}
\big\langle v\;\!M_k(\lambda+i\varepsilon)v\;\!\gamma_0\U_k^*
\delta_\varepsilon(L_k-\lambda)\xi,\gamma_0\U_k^*
(L_k-\lambda+i\varepsilon)^{-1}\zeta\big\rangle_{\ltwo(\T)}\nonumber\\
&=-\int_{k^2}^\infty\d\lambda\,\lim_{\varepsilon\searrow0}
\left\langle\pi^{-1/2}v\;\! M_k(\lambda+i\varepsilon)v\;\!\gamma_0\U_k^*
\delta_\varepsilon(L_k-\lambda)\xi,
\sum_{n\in\Z}\int_{\lambda_{k,n}}^\infty\d\mu\,\frac{\beta_{k,n}(\mu)^{-1}}{\mu-\lambda+i\varepsilon}
\;\!\zeta_n(\mu)\right\rangle_{\ltwo(\T)}\nonumber\\
&=-\sum_{n\in\Z}\int_{\lambda_{k,n}}^\infty\d\lambda\,\lim_{\varepsilon\searrow0}
\left\langle\pi^{-1/2}v
 M_k(\lambda+i\varepsilon)v\;\!\gamma_0\U_k^*\delta_\varepsilon(L_k-\lambda)\xi,
\int_{\lambda_{k,n}}^\infty\d\mu\,\frac{\beta_{k,n}(\mu)^{-1}}
{\mu-\lambda+i\varepsilon}\;\!\zeta_n(\mu)\right\rangle_{\ltwo(\T)}
\label{leadingterm}\\
&\quad-\sum_{n\in\Z}\int_{k^2}^{\lambda_{k,n}}\d\lambda\,\lim_{\varepsilon\searrow0}
\left\langle\pi^{-1/2}v\;\! M_k(\lambda+i\varepsilon)v\;\!\gamma_0\U_k^*
\delta_\varepsilon(L_k-\lambda)\xi,
\int_{\lambda_{k,n}}^\infty\d\mu\,\frac{\beta_{k,n}(\mu)^{-1}}{\mu-\lambda+i\varepsilon}
\;\!\zeta_n(\mu)\right\rangle_{\ltwo(\T)}\label{remainderterm}
\end{align}
with the sums over $n$ being finite. In the next two sections, we study separately the
terms \eqref{leadingterm} and \eqref{remainderterm}.

\subsection{Wave operators\hspace{1pt}: the leading term}\label{section_leading}

We prove in this section an explicit formula for the term \eqref{leadingterm} in the
expression for $(W_{k,-}-1)$ in terms of the generator of dilations in $\R_+$. For
this, we recall that the dilation group $\{U^+_\tau\}_{\tau\in\R}$ in $\ltwo(\R_+)$ is
defined by
$$
\big(U^+_\tau\varphi\big)(\lambda):=\e^{\tau/2}\varphi(\e^\tau\lambda),
\quad \varphi\in C_{\rm c}(\R_+),~\lambda\in\R_+,~\tau\in\R,
$$
and that the self-adjoint generator of $\{U^+_\tau\}_{\tau\in\R}$ is denoted by $A_+$.

\begin{Proposition}\label{prop_leading}
Assume that $V\in\linf(\R;\R)$ is $2\pi$-periodic and take $\xi,\zeta\in\D_k$. Then,
we have
\begin{align*}
&-\sum_{n\in\Z}\int_{\lambda_{k,n}}^\infty\d\lambda\,\lim_{\varepsilon\searrow0}
\left\langle\pi^{-1/2}v\;\! M_k(\lambda+i\varepsilon)v\;\!\gamma_0\U_k^*
\delta_\varepsilon(L_k-\lambda)\xi,\int_{\lambda_{k,n}}^\infty\d\mu\,
\frac{\beta_{k,n}(\mu)^{-1}}{\mu-\lambda+i\varepsilon}\;\!\zeta_n(\mu)
\right\rangle_{\ltwo(\T)}\\
&=\big\langle\U_k\big(1\otimes R(A_+)\big)(S_k-1)\U_k^*\xi,
\zeta\big\rangle_{\Hrond_k}
\end{align*}
with
\begin{equation}\label{def_R}
R(x):=\frac12\big(1+\tanh(\pi x)+i\cosh(\pi x)^{-1}\big),\quad x\in\R.
\end{equation}
\end{Proposition}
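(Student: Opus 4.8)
The plan is to strip off, step by step, the scattering operator $S_k-1$ from the left-hand side of \eqref{leadingterm}, and to recognise the remaining ingredient as $1\otimes R(A_+)$ acting in the longitudinal variable of each channel. First I would invoke Lemma \ref{lem11}(b) to compute the strong limit of $\gamma_0\U_k^*\delta_\varepsilon(L_k-\lambda)\xi$ as $\varepsilon\searrow0$, and at the same time replace $M_k(\lambda+i\varepsilon)$ by the boundary value $\M_k(\lambda,0)$ of \eqref{def_M0}. Since the components of $\xi,\zeta\in\D_k$ are supported on compact subsets of $(\lambda_{k,n},\infty)\setminus\{\tau_k\cup\sigma_{\rm p}(H^V_k)\}$, the map $\lambda\mapsto\M_k(\lambda,0)$ is continuous and locally bounded there by \eqref{broccoli} and Proposition \ref{Prop_Asymp}, while $\int_{\lambda_{k,n}}^\infty\d\mu\,(\mu-\lambda+i\varepsilon)^{-1}\beta_{k,n}(\mu)^{-1}\zeta_n(\mu)$ has a limit in $\P_n\ltwo(\T)$ which is continuous in $\lambda$; so I can evaluate the inner limit $\varepsilon\searrow0$ termwise and reduce \eqref{leadingterm} to
$$
-\sum_{n\in\Z}\int_{\lambda_{k,n}}^\infty\d\lambda\left\langle\pi^{-1}v\;\!\M_k(\lambda,0)v
\sum_{m\in\Z_k(\lambda)}\beta_{k,m}(\lambda)^{-1}\xi_m(\lambda),
\int_{\lambda_{k,n}}^\infty\d\mu\,\frac{\beta_{k,n}(\mu)^{-1}}{\mu-\lambda+i\;\!0}\;\!\zeta_n(\mu)
\right\rangle_{\ltwo(\T)}.
$$

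Next I would use that the right-hand vector lies in $\P_n\ltwo(\T)$, so only the $\P_n$-component of the left-hand vector contributes, and insert the stationary formula \eqref{formule_S_matrix} rewritten as $\P_nv\;\!\M_k(\lambda,0)v\;\!\P_m=\tfrac{i}{2}\,\beta_{k,n}(\lambda)\big(S_k(\lambda)_{nm}-\delta_{nm}\big)\beta_{k,m}(\lambda)$. Using $\xi_m(\lambda)\in\P_m\ltwo(\T)$ and the defining relation $(\U_kS_k\U_k^*\xi)_n(\lambda)=\sum_{m\in\Z_k(\lambda)}S_k(\lambda)_{nm}\xi_m(\lambda)$, that component becomes $\tfrac{i}{2\pi}\beta_{k,n}(\lambda)\,\eta_n(\lambda)$ with $\eta:=\U_k(S_k-1)\U_k^*\xi\in\Hrond_k$. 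After moving the scalar factors out of $\langle\cdot,\cdot\rangle_{\ltwo(\T)}$ (the rearrangement conjugates $\tfrac{i}{2\pi}$ and turns the $+i0$ into $-i0$) and exchanging the two integrations — legitimate for the smooth compactly supported components of elements of $\D_k$, for instance by performing the exchange while $\varepsilon>0$ — I would obtain that \eqref{leadingterm} equals $\langle\mathcal T\eta,\zeta\rangle_{\Hrond_k}$, where $\mathcal T$ acts diagonally in $n\in\Z$ by
$$
(\mathcal T\eta)_n(\mu):=-\frac{i}{2\pi}\,\beta_{k,n}(\mu)^{-1}\!\int_{\lambda_{k,n}}^\infty\!\d\lambda\,
\frac{\beta_{k,n}(\lambda)}{\mu-\lambda-i\;\!0}\,\eta_n(\lambda).
$$

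The core of the proof is then to show $\mathcal T=\U_k\big(1\otimes R(A_+)\big)\U_k^*$, since this gives $\mathcal T\eta=\U_k\big(1\otimes R(A_+)\big)(S_k-1)\U_k^*\xi$ and hence the stated formula. Fixing $n$ and performing in $\Hrond_{k,n}$ the change of variables $\lambda=\lambda_{k,n}+y^2$, $\mu=\lambda_{k,n}+(y')^2$ — the substitution already present in the definition of $\U_k$, after which $\Hrond_{k,n}$ is identified with $\P_n\ltwo(\T)\otimes\ltwo(\R_+,\d y)$ and the cosine transform $\Fc$ still has to be applied — a straightforward computation (using $\beta_{k,n}(\lambda_{k,n}+y^2)=y^{1/2}$ and $\frac{y}{(y')^2-y^2}=\frac{1}{2}\big(\frac{1}{y'-y}-\frac{1}{y'+y}\big)$) turns the kernel of $\mathcal T$ on channel $n$ into
$$
-\frac{i}{2\pi}\left(\frac{1}{y'-y-i\;\!0}-\frac{1}{y'+y}\right)\qquad\hbox{on }\ltwo(\R_+,\d y).
$$
This kernel is dilation invariant, so the operator it defines is a function of the dilation generator; evaluating it on the generalised eigenfunctions $y\mapsto y^{-1/2-it}$ reduces the matter to the classical integral $\int_0^\infty\frac{s^{a-1}}{1-s-i\;\!0}\,\d s=\pi\cot(\pi a)+i\pi$ (here with $a=\tfrac{3}{4}-\tfrac{it}{2}$), and the elementary identity $\cot(\tfrac{3\pi}{4}-i\pi\tau)=\cosh(2\pi\tau)^{-1}\big(-1+i\sinh(2\pi\tau)\big)$ then shows that the resulting multiplier is exactly the function $R$ of \eqref{def_R} evaluated at minus the spectral variable of that generator. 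Since $\Fc$ turns $A_+$ into minus the dilation generator in the $y$-variable, this identifies the channel-$n$ part of $\mathcal T$ with that of $\U_k\big(1\otimes R(A_+)\big)\U_k^*$, and summing over $n$ finishes the argument. This last, purely one-dimensional identification also appears in the cited literature on the structure of wave operators, and may be quoted rather than redone.

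I expect Step 3 to be the genuine obstacle. Steps 1 and 2 amount to bookkeeping once the good support properties of $\D_k$ are exploited, whereas Step 3 demands that one correctly follow the change of variables and the conjugation by $\Fc$ hidden inside $\U_k$ and carry out the distributional Mellin-transform computation that singles out this particular $R$. The delicate points there are the $\mp i0$ prescription and the presence of the $\frac{1}{y'+y}$ summand: together they account for all three terms of $R$ — roughly, the constant and $\tanh(\pi x)$ parts come from $\frac{1}{y'-y-i0}$ and the $i\cosh(\pi x)^{-1}$ part from $\frac{1}{y'+y}$ — and omitting either feature would give a wrong multiplier.
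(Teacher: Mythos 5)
Your proposal reaches the right conclusion and shares Steps 1--2 with the paper (Lemma \ref{lem11}(b), the boundary value $\M_k(\lambda,0)$, and the stationary formula \eqref{formule_S_matrix}), but Step 3 takes a genuinely different computational route. The paper works \emph{forward}: it introduces the auxiliary operator $(\Theta\eta)(x):=2\int_0^\infty\!\d y\int_0^\infty\!\d z\,x\e^{i(y^2-x^2)z}\eta(y)$, changes variables to turn the $z$-integral into a dilation-group average, identifies the weight as $\F^*\chi_+$, and so obtains $\Theta=2\pi\overline R(-A_+)$; composing with $\Fc$ and $\U_k$ then yields the integral identity \eqref{eq_RA}, which is matched against the boundary value of $(\mu-\lambda+i\varepsilon)^{-1}$ using the representation $\frac1{\mu-\lambda+i\varepsilon}=-i\int_0^\infty\!\d z\,\e^{i(\mu-\lambda)z}\e^{-\varepsilon z}$ together with a careful $\langle z\rangle^{-2}$ bound to justify the limit $\varepsilon\searrow0$. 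You work \emph{backward}: you read off the explicit kernel of the leftover integral operator $\mathcal T$, pass to the half-line momentum variable, notice dilation invariance, and compute the Mellin multiplier distributionally. Both routes encode the same analytic fact (the Fourier transform of $\chi_+$, or equivalently the Mellin transform of $(1-s-i0)^{-1}$); the paper's version has the advantage of producing a reusable identity \eqref{eq_RA} and a clean dominated-convergence argument for the $\varepsilon\searrow0$ limit, while yours is shorter once one accepts the distributional Mellin computation and is closer in spirit to the way such multipliers are identified elsewhere in the wave-operator literature.

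One internal inconsistency should be corrected. For the two-term $y$-variable kernel you display, substituting $y=sy'$ into the eigenfunction $y^{-1/2-it}$ gives $a-1=-\tfrac12-it$, i.e.\ $a=\tfrac12-it$, and you then need \emph{both} $\int_0^\infty\frac{s^{a-1}}{1-s-i0}\,\d s=\pi\cot(\pi a)+i\pi$ and $\int_0^\infty\frac{s^{a-1}}{1+s}\,\d s=\frac{\pi}{\sin(\pi a)}$; with $\cot(\tfrac\pi2-i\pi t)=i\tanh(\pi t)$ and $\sin(\tfrac\pi2-i\pi t)=\cosh(\pi t)$ this gives $R(t)$. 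The value $a=\tfrac34-\tfrac{it}2$ and the $\cot(\tfrac{3\pi}4-i\pi\tau)$ identity you quote belong instead to the \emph{single-term} computation done directly in the spectral variable $\lambda$ (where the kernel is $-\tfrac{i}{2\pi}\lambda^{1/4}\mu^{-1/4}(\mu-\lambda-i0)^{-1}$ and the spectral dilation parameter is $s=t/2$); in that picture there is no second $\tfrac1{y'+y}$ term, and the attribution of the three summands of $R$ to the two fractions does not apply. Either computation is fine, but you have merged the two, which makes the intermediate arithmetic unreliable even though the endpoint is correct. You should also supply a concrete dominated-convergence bound (analogous to the paper's $\langle z\rangle^{-2}$ estimate) when you exchange the $\varepsilon\searrow0$ limit with the $\lambda$-integration and swap the $\lambda$- and $\mu$-integrals; the smoothness and support properties of $\D_k$ do make this work, but it is the one place where a real estimate is needed rather than bookkeeping.
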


\begin{proof}
(i) Take $\eta\in C^\infty_{\rm c}(\R_+)$ and $x\in\R_+$, let $\F$ be the Fourier
transform on $\R$, and write $\chi_+$ for the characteristic function for $\R_+$.
Then, we have
\begin{align*}
(\Theta\eta)(x)
&:=2\int_0^\infty\d y\int_0^\infty\d z\,x\e^{i(y^2-x^2)z}\eta(y)\\
&=2^{3/2}\pi^{1/2}\int_0^\infty\d y\,\big(\F^*\chi_+\big)(y^2-x^2)\;\!x\;\!\eta(y)\\
&=2^{3/2}\pi^{1/2}\int_\R\d z\,\big(\F^*\chi_+\big)\big(x^2(\e^{2z}-1)\big)
\;\!x^2\e^z\eta(\e^zx)\quad(y=\e^zx)\\
&=2^{3/2}\pi^{1/2}\int_\R\d z\,\big(\F^*\chi_+\big)\big(x^2(\e^{2z}-1)\big)
\;\!x^2\e^{z/2}\big(U^+_z\eta\big)(x).
\end{align*}
Then, by using the fact that
$
\F^*\chi_+=2^{-1/2}\pi^{1/2}\;\!\delta_0
+i\hspace{1pt}(2\pi)^{-1/2}\;\!\Pv\frac1{(\,\cdot\,)}
$
with $\delta_0$ the Dirac delta distribution and $\Pv$ the principal value, one gets
that
\begin{align*}
(\Theta\eta)(x)
&=2\int_\R\d z\left(\pi\;\!\delta_0(\e^{2z}-1)
+i\;\!\Pv\frac{\e^{z/2}}{\e^{2z}-1}\right)\big(U^+_z\eta\big)(x)\\
&=\int_\R\d z\left(\pi\;\!\delta_0(z)
+\frac i2\;\!\Pv\left(\frac1{\sinh(z/2)}-\frac1{\cosh(z/2)}\right)\right)
\big(U^+_z\eta\big)(x).
\end{align*}
So, by taking into account the equality \cite[Table 20.1]{Jef95}
\begin{align*}
(2\pi)^{1/2}\big(\F\overline R\big)(z)
&=\pi\;\!\delta_0(-z)
+\frac i2\;\!\Pv\left(\frac1{\sinh(-z/2)}-\frac1{\cosh(-z/2)}\right)
\end{align*}
with $R$ as in \eqref{def_R}, one infers that
$$
(\Theta\eta)(x)
=(2\pi)^{1/2}\int_\R\d z\,\big(\F\overline R\big)(-z)\big(U^+_z\eta\big)(x)
=2\pi\big(\overline R(-A_+)\eta\big)(x).
$$
Therefore, one has for each $\zeta\in\D_k$, $n\in\Z$ and $\lambda>\lambda_{k,n}$ the
following equalities in $\ltwo(\T)$\hspace{1pt}:
\begin{align}
&2\pi\big(\U_k(1\otimes\overline R(A_+))\U_k^*\zeta\big)_n(\lambda)\nonumber\\
&=2\pi\big(\U_k(1\otimes\Fc^*\overline R(-A_+)\Fc)\U_k^*\zeta\big)_n(\lambda)\nonumber\\
&=\big(\U_k(1\otimes\Fc^*\Theta\Fc)\U_k^*\zeta\big)_n(\lambda)\nonumber\\
&=2^{-1/2}(\lambda-\lambda_{k,n})^{-1/4}
\big((\P_n\otimes\Theta\Fc)\U_k^*\zeta\big)
\big(\,\cdot\,,(\lambda-\lambda_{k,n})^{1/2}\big)\nonumber\\
&=2^{1/2}(\lambda-\lambda_{k,n})^{-1/4}\int_0^\infty\d y\int_0^\infty\d z\,
(\lambda-\lambda_{k,n})^{1/2}\e^{i(y^2-\lambda+\lambda_{k,n})z}
\big((\P_n\otimes\Fc)\U_k^*\zeta\big)(\,\cdot\,,y)\nonumber\\
&=2\;\!(\lambda-\lambda_{k,n})^{1/4}\int_0^\infty\d y\int_0^\infty\d z\,
\e^{i(y^2-\lambda+\lambda_{k,n})z}y^{1/2}\zeta_n(y^2+\lambda_{k,n})\nonumber\\
&=\int_{\lambda_{k,n}}^\infty\d\mu\int_0^\infty\d z\,
\e^{i(\mu-\lambda)z}\beta_{k,n}(\lambda)\beta_{k,n}(\mu)^{-1}\;\!\zeta_n(\mu).
\label{eq_RA}
\end{align}

(ii) Let $\xi,\zeta\in\D_k$ and take $\varepsilon>0$, $n\in\Z$ and
$\lambda\in[\lambda_{k,n},\infty)\setminus\{\tau_k\cup\sigma_{\rm p}(H_k^V)\}$. Then,
Lemma \ref{lem11}(a), the formula
$
(\mu-\lambda+i\varepsilon)^{-1}
=-i\int_0^\infty\d z\,\e^{i(\mu-\lambda)z}\e^{-\varepsilon z}
$
and Fubini's theorem imply that
\begin{align}
&\lim_{\varepsilon\searrow0}\left\langle\pi^{-1/2}v\;\! M_k(\lambda+i\varepsilon)v\;\!
\gamma_0\U_k^*\delta_\varepsilon(L_k-\lambda)\xi,
\int_{\lambda_{k,n}}^\infty\d\mu\,\frac{\beta_{k,n}(\mu)^{-1}}
{\mu-\lambda+i\varepsilon}\;\!\zeta_n(\mu)\right\rangle_{\ltwo(\T)}\nonumber\\
&=\lim_{\varepsilon\searrow0}\left\langle\pi^{-1/2}\beta_{k,n}(\lambda)^{-1}\P_{n}v
\;\! M_k(\lambda+i\varepsilon)v\;\!\gamma_0\U_k^*\delta_\varepsilon(L_k-\lambda)\xi,
\int_{\lambda_{k,n}}^\infty\d\mu\,\frac{\beta_{k,n}(\lambda)\beta_{k,n}(\mu)^{-1}}
{\mu-\lambda+i\varepsilon}\;\!\zeta_n(\mu)\right\rangle_{\ltwo(\T)}\nonumber\\
&=\lim_{\varepsilon\searrow0}\int_0^\infty\d z\,\e^{-\varepsilon z}
\left\langle g_\varepsilon(n,\lambda),
\int_{\lambda_{k,n}}^\infty\d\mu\,\e^{i(\mu-\lambda)z}
\beta_{k,n}(\lambda)\beta_{k,n}(\mu)^{-1}\;\!\zeta_n(\mu)\right\rangle_{\ltwo(\T)}
\label{integrant_eps}
\end{align}
with
$$
g_\varepsilon(n,\lambda)
:=i\;\!\pi^{-1}\beta_{k,n}(\lambda)^{-1}\P_{n}v\;\!
 M_k(\lambda+i\varepsilon)v\sum_{n'\in\Z}\int_{\lambda_{k,n'}}^\infty\d\nu\,
\beta_{k,n'}(\nu)^{-1}\delta_\varepsilon(\nu-\lambda)\xi_{n'}(\nu).
$$
Now, we already know from \eqref{def_M0} that
$\lim_{\varepsilon\searrow0}M_k(\lambda+i\varepsilon)=\M_k(\lambda,0)$ in
$\B\big(\ltwo(\T)\big)$ and we have
$$
\slim_{\varepsilon\searrow0}\sum_{n'\in\Z}\int_{\lambda_{k,n'}}^\infty\d\nu\,
\beta_{k,n'}(\nu)^{-1}\delta_\varepsilon(\nu-\lambda)\xi_{n'}(\nu)
=\sum_{n'\in\Z_k(\lambda)}\beta_{k,n'}(\lambda)^{-1}\xi_{n'}(\lambda)
$$
in $\ltwo(\T)$. Therefore, we have
$$
g_0(n,\lambda)
:=\slim_{\varepsilon\searrow0}g_{\varepsilon}(n,\lambda)
=i\pi^{-1}\beta_{k,n}(\lambda)^{-1}\P_{n}v\;\!
\M_k(\lambda,0)v\sum_{n'\in\Z_k(\lambda)}\beta_{k,n'}(\lambda)^{-1}\xi_{n'}(\lambda)
$$
in $\ltwo(\T)$, and the integrant in \eqref{integrant_eps} can be bounded independently of
$\varepsilon\in(0,1)$\hspace{1pt}:
\begin{align}
&\left|\;\!\e^{-\varepsilon z}
\left\langle g_\varepsilon(n,\lambda),
\int_{\lambda_{k,n}}^\infty\d\mu\,\e^{i(\mu-\lambda)z}
\beta_{k,n}(\lambda)\beta_{k,n}(\mu)^{-1}\;\!\zeta_n(\mu)\right\rangle_{\ltwo(\T)}\right|
\nonumber\\
&\le{\rm Const.}
\left\|\int_{\lambda_{k,n}}^\infty\d\mu\,\e^{i(\mu-\lambda)z}
\beta_{k,n}(\lambda)\beta_{k,n}(\mu)^{-1}\;\!\zeta_n(\mu)\right\|_{\ltwo(\T)}.
\label{bounded_eps}
\end{align}

In order to exchange the integral over $z$ and the limit $\varepsilon\searrow0$ in
\eqref{integrant_eps}, it remains to show that the r.h.s. of \eqref{bounded_eps}
belongs to $\lone(\R_+,\d z)$. For that purpose, we note that
\begin{align*}
\left\|\int_{\lambda_{k,n}}^\infty\d\mu\,\e^{i(\mu-\lambda)z}
\beta_{k,n}(\lambda)\beta_{k,n}(\mu)^{-1}\;\!\zeta_n(\mu)\right\|_{\ltwo(\T)}
&=\left\|\int_{\lambda_{k,n}-\lambda}^\infty\d\nu\,\e^{i\nu z}
\beta_{k,n}(\lambda)\beta_{k,n}(\nu+\lambda)^{-1}\;\!\zeta_n(\nu+\lambda)\right\|_{\ltwo(\T)}\\
&=\big\|(\F^*h_{n,\lambda})(z)\big\|_{\ltwo(\T)}
\end{align*}
with $h_{n,\lambda}$ the trivial extension of the function
$$
(\lambda_{k,n}-\lambda,\infty)\ni\nu\mapsto
(2\pi)^{1/2}\beta_{k,n}(\lambda)\beta_{k,n}(\nu+\lambda)^{-1}\;\!\zeta_n(\nu+\lambda)
\in\ltwo(\T)
$$
to all of $\R$. Then, writing $P$ for the self-adjoint operator $-i\;\!\nabla$ on $\R$,
and using the fact that
$$
h_{n,\lambda}(\nu)=
\begin{cases}
(2\pi)^{1/2}\beta_{k,n}(\lambda)\beta_{k,n}(\nu+\lambda)^{-1}\;\!\rho_n(\nu+\lambda)
\e^{in(\,\cdot\,)}
& \hbox{if}~~\nu>\lambda_{k,n}-\lambda\\
0 & \hbox{if}~~\nu\le\lambda_{k,n}-\lambda
\end{cases}
$$
with
$
\rho_n\in C^\infty_{\rm c}\big((\lambda_{k,n},\infty)\setminus
\{\tau_k\cup\sigma_{\rm p}(H^V_k)\}\big)
$,
one obtains that
$$
\big\|\big(\F^*h_{n,\lambda}\big)(z)\big\|_{\ltwo(\T)}
=\langle z\rangle^{-2}
\big\|\big(\F^*\langle P\rangle^2h_{n,\lambda}\big)(z)\big\|_{\ltwo(\T)}
\le{\rm Const.}\;\!\langle z\rangle^{-2},
\quad z\in\R_+.
$$
As a consequence, one can apply Lebesgue dominated convergence theorem and Fubini's
theorem to infer that \eqref{integrant_eps} is equal to
$$
\left\langle g_0(n,\lambda),\int_{\lambda_{k,n}}^\infty\d\mu\int_0^\infty\d z\,
\e^{i(\mu-\lambda)z}\beta_{k,n}(\lambda)\beta_{k,n}(\mu)^{-1}\;\!\zeta_n(\mu)
\right\rangle_{\ltwo(\T)}.
$$
This, together with \eqref{formule_S_matrix} and \eqref{eq_RA}, implies that
\begin{align*}
&\lim_{\varepsilon\searrow0}\left\langle\pi^{-1/2}v\;\! M_k(\lambda+i\varepsilon)v\;\!
\gamma_0\U_k^*\delta_\varepsilon(L_k-\lambda)\xi,
\int_{\lambda_{k,n}}^\infty\d\mu\,\frac{\beta_{k,n}(\mu)^{-1}}
{\mu-\lambda+i\varepsilon}\;\!\zeta_n(\mu)\right\rangle_{\ltwo(\T)}\\
&=\left\langle\sum_{n'\in\Z_k(\lambda)}2i\hspace{1pt}\beta_{k,n}(\lambda)^{-1}
\P_{n}v\;\!\M_k(\lambda,0)v\;\!\P_{n'}\beta_{k,n'}(\lambda)^{-1}\xi_{n'}(\lambda),
\big(\U_k(1\otimes\overline R(A_+))\U_k^*\zeta\big)_n(\lambda)
\right\rangle_{\ltwo(\T)}\\
&=-\big\langle\big(\U_k(S_k-1)\U_k^*\xi\big)_n(\lambda),
\big(\U_k(1\otimes\overline R(A_+))\U_k^*\zeta\big)_n(\lambda)
\big\rangle_{\ltwo(\T)}.
\end{align*}
Now, the last equality holds not only for
$
\lambda\in[\lambda_{k,n},\infty)\setminus\{\tau_k\cup\sigma_{\rm p}(H_k^V)\}
$
but for all $\lambda\in[\lambda_{k,n},\infty)$, since for each $n\in\Z$ and all
$\lambda\in\tau_k\cup\sigma_{\rm p}(H_k^V)$ we have $\xi_n(\lambda)=0$. So, we finally
obtain that
\begin{align*}
&-\sum_{n\in\Z}\int_{\lambda_{k,n}}^\infty\d\lambda\,\lim_{\varepsilon\searrow0}
\left\langle\pi^{-1/2}v\;\! M_k(\lambda+i\varepsilon)v\;\!\gamma_0\U_k^*
\delta_\varepsilon(L_k-\lambda)\xi,\int_{\lambda_{k,n}}^\infty\d\mu\,
\frac{\beta_{k,n}(\mu)^{-1}}{\mu-\lambda+i\varepsilon}\;\!\zeta_n(\mu)
\right\rangle_{\ltwo(\T)}\\
&=\left\langle\U_k\big(1\otimes R(A_+)\big)(S_k-1)\U_k^*\xi,\zeta
\right\rangle_{\Hrond_k},
\end{align*}
as desired.
\end{proof}

\subsection{Wave operators\hspace{1pt}: the remainder term}\label{section_remainder}

We prove in this section that the remaining term \eqref{remainderterm} in the
expression for $(W_{k,-}-1)$ can be written as a matrix operator in $\Hrond_k$ with
Hilbert-Schmidt components. For this, we start with a lemma which complements the
continuity properties obtained Section \ref{seccont}.

\begin{Lemma}\label{lemcont}
Assume that $V\in\linf(\R;\R)$ is $2\pi$-periodic, and choose $n,n'\in\Z$ such that
$\lambda_{k,n'}<\lambda_{k,n}$. Then, the function
\begin{equation}\label{defmap}
[\lambda_{k,n'},\lambda_{k,n}]\setminus\{\tau_k \cup \sigma_{\rm p}(H^V_k)\}
\ni\lambda\mapsto\beta_{k,n}(\lambda)^{-2}\;\!\P_nv\;\!\M_k(\lambda,0)v\;\!\P_{n'}
\in\B\big(\ltwo(\T)\big)
\end{equation}
extends to a continuous function on $[\lambda_{k,n'},\lambda_{k,n}]$.
\end{Lemma}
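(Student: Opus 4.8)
The plan is to isolate the finitely many exceptional points and dispose of everything else by direct continuity. I would set $E:=(\tau_k\cup\sigma_{\rm p}(H^V_k))\cap[\lambda_{k,n'},\lambda_{k,n}]$; this set is finite since $\tau_k$ is discrete and $\sigma_{\rm p}(H^V_k)$ has no finite accumulation point, and it contains both endpoints $\lambda_{k,n'},\lambda_{k,n}$. On $[\lambda_{k,n'},\lambda_{k,n}]\setminus E$ the map \eqref{defmap} is already continuous: \eqref{broccoli} shows that $\lambda\mapsto GR^0_k(\lambda+i0)G^*$ is continuous off $\tau_k$, Lemma \ref{lem_vp} shows that $u+GR^0_k(\lambda+i0)G^*$ is boundedly invertible off $\sigma_{\rm p}(H^V_k)$ so that $\M_k(\cdot,0)$ is continuous there, and $\beta_{k,n}(\cdot)^{-2}$ is continuous away from $\lambda_{k,n}$. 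Hence it only remains to show that \eqref{defmap} admits the appropriate one-sided (at $\lambda_{k,n'}$ and $\lambda_{k,n}$) or two-sided (at interior points of $E$) limit at each point of $E$, these limits defining the continuous extension.

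To do this I would fix $\lambda_0\in E$ and write $\lambda=\lambda_0-\kappa^2$ with $\kappa\in(0,\varepsilon)$ (giving $\lambda<\lambda_0$) or $i\kappa\in(0,\varepsilon)$ (giving $\lambda>\lambda_0$); by the sign conventions preceding Proposition \ref{Prop_Asymp} one has $\M_k(\lambda,0)=\M_k(\lambda_0,\kappa)$, and $\beta_{k,n}(\lambda)^{-2}=|\kappa|^{-1}$ when $\lambda_0=\lambda_{k,n}$. The structural point is that $\lambda_0\ge\lambda_{k,n'}$ forces $n'\in\Z_k(\lambda_0)$ in every case. If $\lambda_0\in\sigma_{\rm p}(H^V_k)\setminus\tau_k$ (necessarily interior), I would use the expansion \eqref{eq_expansion_2}: commuting the Riesz projection $S$ of $T_0$ past $(J_0(\kappa)+S)^{-1}=(J_0(0)+S)^{-1}+\Oa(\kappa^2)$ and using $Sv\P_{n'}=0$ (exactly as in the last proof of Section \ref{seccont}) turns the $\kappa^{-2}$ term into a bounded term with a limit, whence $\P_nv\M_k(\lambda_0,\kappa)v\P_{n'}\to\P_nv(J_0(0)+S)^{-1}v\P_{n'}$ independently of the direction of approach. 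If $\lambda_0\in\tau_k$ I would instead use the rearranged expansion \eqref{eq_grosse} and split into: (i) $\lambda_0$ interior, so $n\in\Z_k(\lambda_0)^\bot$ and $n'\in\Z_k(\lambda_0)^-$; (ii) $\lambda_0=\lambda_{k,n}$, so in addition $n\in N$; (iii) $\lambda_0=\lambda_{k,n'}$, so in addition $n'\in N$. The cancellations come from Lemma \ref{relations_simples}: $S_1v\P_{n'}=0$ (hence $S_2v\P_{n'}=S_2S_1v\P_{n'}=0$) when $n'\notin N$, $S_0v\P_{n'}=0$ when $n'\in N$, and $\P_nvS_0=0$ (hence $\P_nvS_1=\P_nvS_2=0$) when $n\in N$. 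I would combine these with $C_{20}(\kappa),C_{21}(\kappa)\in\Oa(\kappa^2)$ (from the proof of Proposition \ref{propcont}), $C_{\ell m}(\kappa)\in\Oa(\kappa)$ (Lemma \ref{lemme_com} and the remark preceding Proposition \ref{propcont}), the boundedness and convergence of $S_2I_3(\kappa)^{-1}S_2$ (relation \eqref{notresecondsauveur} and the inversion formula for $I_3(\kappa)^{-1}$), and the elementary observation that, since every term of the Neumann series for $(I_0(\kappa)+S_0)^{-1}$ ends (resp. begins) with $(I_0(0)+S_0)^{-1}S_\ell=S_\ell$ (resp. $S_\ell(I_0(0)+S_0)^{-1}=S_\ell$), one has $(I_0(\kappa)+S_0)^{-1}S_\ell v\P_{n'}=0$ exactly when $S_\ell v\P_{n'}=0$, and $\P_nvS_\ell(I_0(\kappa)+S_0)^{-1}=0$ exactly when $\P_nvS_\ell=0$.

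Granting these inputs, the core of the argument is a term-by-term inspection of \eqref{eq_grosse} after sandwiching between $\P_nv$ on the left and $v\P_{n'}$ on the right: one verifies that every summand carrying a prefactor $\kappa^{-1}$ or $\kappa^{-2}$ is compensated by powers of $\kappa$ gained from the projections and commutator factors sitting next to $v\P_{n'}$ (and, in case (ii), also next to $\P_nv$), the only genuinely dangerous tail factor $(I_0(\kappa)+S_0)^{-1}S_0v\P_{n'}$ always appearing shielded on its left by a $C_{11}(\kappa)$ or a $C_{21}(\kappa)$. This gives that $\P_nv\M_k(\lambda_0,\kappa)v\P_{n'}$ — respectively $|\kappa|^{-1}\P_nv\M_k(\lambda_0,\kappa)v\P_{n'}=\beta_{k,n}(\lambda_0-\kappa^2)^{-2}\P_nv\M_k(\lambda_0-\kappa^2,0)v\P_{n'}$ in case (ii) — has a limit as $\kappa\to0$ in $\widetilde O(\varepsilon)$, which is in particular direction-independent in cases (i) and (iii); combined with the continuity of $\beta_{k,n}(\cdot)^{-2}$ near $\lambda_0$ in cases (i) and (iii), this produces the continuous extension at each point of $E$ and finishes the proof. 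The hard part is precisely this bookkeeping in \eqref{eq_grosse} — counting, summand by summand, the powers of $\kappa$ produced adjacent to $v\P_{n'}$ and $\P_nv$ — which I would carry out verbatim as in \cite[Sec.~3.3]{RT14}, to which the remaining routine details may be deferred.
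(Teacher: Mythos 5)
Your proof is correct and follows essentially the same approach as the paper: parametrize $\lambda=\lambda_0-\kappa^2$ near each exceptional $\lambda_0$, invoke the expansion \eqref{eq_expansion_2} when $\lambda_0\in\sigma_{\rm p}(H^V_k)\setminus\tau_k$ and \eqref{eq_grosse} when $\lambda_0\in\tau_k$, and compensate the singular prefactors term by term using Lemma \ref{relations_simples} together with the $\Oa(\kappa)$/$\Oa(\kappa^2)$ estimates on the commutators $C_{\ell m}$. The only difference is cosmetic case labeling (the paper groups $\lambda_0=\lambda_{k,n'}$ with the interior case, since both use the same cancellations), and your observation that the only dangerous tail factor $(I_0(\kappa)+S_0)^{-1}S_0v\;\!\P_{n'}$ always appears shielded by $C_{11}(\kappa)$ or $C_{21}(\kappa)$ is precisely what the paper's term-by-term bookkeeping relies on.
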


\begin{proof}
Since the function \eqref{defmap} is continuous on
$[\lambda_{k,n'},\lambda_{k,n}]\setminus\{\tau_k \cup \sigma_{\rm p}(H^V_k)\}$, one
only has to check that the function admits limits in $\B\big(\ltwo(\T)\big)$ as
$\lambda\to\lambda_0\in\{\tau_k\cup\sigma_{\rm p}(H^V_k)\}$.
However, in order to be able to use the asymptotic expansions of Proposition
\ref{Prop_Asymp}, we slightly change the point of view by considering values
$\lambda-\kappa^2\in\C$ with $\lambda\in\{\tau_k\cup\sigma_{\rm p}(H^V_k)\}$ and
$\kappa\to0$ in a suitable domain of $\C$ of diameter $\varepsilon>0$. Namely, we
consider the three following possible cases: when $\lambda=\lambda_{k,n'}$ and
$i\kappa\in(0,\varepsilon)$ (case 1), when $\lambda=\lambda_{k,n}$ and
$\kappa\in(0,\varepsilon)$ (case 2), and when
$\lambda\in(\lambda_{k,n'},\lambda_{k,n})\cap\{\tau_k\cup\sigma_{\rm p}(H^V_k)\}$ and
$\kappa\in(0,\varepsilon)$ or $i\kappa\in(0,\varepsilon)$ (case 3). In each case, we
choose $\varepsilon>0$ small enough so that
$\{z\in\C\mid|z|<\varepsilon\}\cap\{\tau_k\cup\sigma_{\rm p}(H^V_k)\}=\{\lambda\}$
(this is possible because $\tau_k$ is discrete and $\sigma_{\rm p}(H^V_k)$ has no
accumulation point).

(i) First, assume that $\lambda\in \sigma_{\rm p}(H^V_k)\setminus \tau_k$ and let
$\kappa\in(0,\varepsilon)$ or $i\kappa\in(0,\varepsilon)$ with $\varepsilon>0$ small
enough. Then, we know from \eqref{eq_expansion_2} that
$$
\P_nv\;\!\M_k(\lambda,\kappa)v\;\!\P_{n'}
=\P_nv\big(J_0(\kappa)+S\big)^{-1}v\;\!\P_{n'}
+\frac1{\kappa^2}\;\!\P_nv\big(J_0(\kappa)+S)^{-1}SJ_1(\kappa)^{-1}S
\big(J_0(\kappa)+S\big)^{-1}v\;\!\P_{n'}
$$
with $S$, $J_0(\kappa)$ and $J_1(\kappa)$ as in point (ii) of the proof of Proposition
\ref{Prop_Asymp}. Furthermore, point (ii) of the proof of Proposition \ref{Prop_Asymp}
implies that $[S,J_0(\kappa)]\in \Oa(\kappa^2)$, and Lemma \ref{relations_simples}(b)
(applied with $S$ instead of $S_1$) implies that $Sv\;\!\P_{n'}=0$. Therefore,
\begin{align*}
\P_nv\;\!\M_k(\lambda,\kappa)v\;\!\P_{n'}
&=\Oa(1)+\frac1{\kappa^2}\;\!\P_nv\big(J_0(\kappa)+S)^{-1}SJ_1(\kappa)^{-1}S
\;\!\big\{\big(J_0(\kappa)+S)^{-1}S+\Oa(\kappa^2)\big\}v\;\!\P_{n'}\\
&=\Oa(1).
\end{align*}
Since
$
\lim_{\kappa\to0}\beta_{k,n}(\lambda-\kappa^2)^{-2}
=|\lambda-\lambda_{k,n}|^{-1/2}
<\infty
$
for each $\lambda\in\sigma_{\rm p}(H^V_k)\setminus\tau_k$, we thus infer that the
function \eqref{defmap} (with $\lambda$ replaced by $\lambda-\kappa^2$) admits a limit
in $\B\big(\ltwo(\T)\big)$ as $\kappa\to0$.

(ii) Now, assume that $\lambda\in[\lambda_{k,n'},\lambda_{k,n}]\cap\tau_k$, and
consider the three above cases simultaneously. For this, we recall that
$i\kappa\in(0,\varepsilon)$ in case 1, $\kappa\in(0,\varepsilon)$ in case 2, and
$\kappa\in(0,\varepsilon)$ or $i\kappa\in(0,\varepsilon)$ in case 3. Also, we note
that the factor $\beta_{k,n}(\lambda-\kappa^2)^{-2}$ does not play any role in cases 1
and 3, but gives a singularity of order $|\kappa|^{-1}$ in case 2.

In the expansion \eqref{eq_grosse}, the first term (the one linear in $\kappa$) admits
a limit in $\B\big(\ltwo(\T)\big)$ as $\kappa\to0$, even in case 2. For the second
term (the one of order $\Oa(1)$ in $\kappa$) only case 2 requires a special attention:
in this case, the existence of the limit as $\kappa\to0$ follows from the inclusion
$C_{00}(\kappa)\in\Oa(\kappa)$ and the equality $\P_nvS_0=0$, which holds by Lemma
\ref{relations_simples}(a). For the third term (the one with prefactor
$\frac1{\kappa}$), in cases 1 and 3, it is sufficient to observe that
$C_{00}(\kappa),C_{10}(\kappa)\in\Oa(\kappa)$ and that $S_1v\;\!\P_{n'}=0$ by Lemma
\ref{relations_simples}(b). On the other hand, for case 2, one must take into account
the inclusions $C_{00}(\kappa),C_{10}(\kappa)\in\Oa(\kappa)$, the equality
$S_1v\;\!\P_{n'}=0$ of Lemma \ref{relations_simples}(b) and the equality
$\P_nv\;\!S_1=0$ of Lemma \ref{relations_simples}(a). For the fourth term (the one
with prefactor $\frac1{\kappa^2}$), in cases 1 and 3, it is sufficient to observe that
$C_{20}(\kappa),C_{21}(\kappa)\in\Oa(\kappa^2)$ and that
$S_2v\;\!\P_{n'}=0=S_1 v\;\!\P_{n'}$. On the other hand, for case 2, one must take
into account the inclusions $C_{20}(\kappa),C_{21}(\kappa)\in\Oa(\kappa^2)$, the
equalities $S_2v\;\!\P_{n'}=0=S_1 v\;\!\P_{n'}$, and the equality $\P_n v S_2=0$.
\end{proof}

Now, to obtain the desired formula for the term \eqref{remainderterm}, we define for
$\varepsilon>0$, $n\in\Z$, $\lambda\in\R$ and $\xi\in\D_k$ the vector
$$
g_\varepsilon(n,\lambda)
:=\pi^{-1/2}\;\!\P_n v\;\!M_k(\lambda+i\varepsilon)v\;\!\gamma_0\U_k^*
\delta_\varepsilon(L_k-\lambda)\xi\in\ltwo(\T),
$$
and we note from Proposition \ref{Prop_Asymp} and Lemma \ref{lem11}(b) that for each
$\lambda\in(k^2,\lambda_{k,n})\setminus\{\tau_k\cup\sigma_{\rm p}(H_k^V)\}$ we have
$$
g_0(n,\lambda)
:=\slim_{\varepsilon\searrow 0}g_\varepsilon(n,\lambda)
=\pi^{-1}\;\!\P_nv\;\!\M(\lambda,0)v\sum_{n'\in\Z_k(\lambda)}
\beta_{k,n'}(\lambda)^{-1}\xi_{n'}(\lambda).
$$
Then, we observe that \eqref{remainderterm} can be written as
\begin{equation}\label{termterm}
-\sum_{n\in\Z}\int_{k^2}^{\lambda_{k,n}}\d\lambda\,\lim_{\varepsilon\searrow0}
\left\langle g_\varepsilon(n,\lambda),
\int_{\lambda_{k,n}}^\infty\d\mu\,\frac{\beta_{k,n}(\mu)^{-1}}{\mu-\lambda+i\varepsilon}
\;\!\zeta_n(\mu)\right\rangle_{\ltwo(\T)}
\end{equation}
with
$$
\left|\left\langle g_\varepsilon(n,\lambda),\int_{\lambda_{k,n}}^\infty\d\mu\,
\frac{\beta_{k,n}(\mu)^{-1}}{\mu-\lambda+i\varepsilon}
\;\!\zeta_n(\mu)\right\rangle_{\ltwo(\T)}\right|
\le{\rm Const.}\left\|\int_{\lambda_{k,n}}^\infty\d\mu\,
\frac{\beta_{k,n}(\mu)^{-1}}{\mu-\lambda+i\varepsilon}
\;\!\zeta_n(\mu) \right\|_{\ltwo(\T)}.
$$
Since the r.h.s. can be bounded independently of $\varepsilon$, we infer from
Lebesgue dominated convergence theorem that \eqref{termterm} can be rewritten as
\begin{equation}\label{horreur1}
-\sum_{n\in\Z}\sum_{n'\in\Z_k(\lambda_{k,n})}\int_{\lambda_{k,n'}}^{\lambda_{k,n}}\d\lambda
\left\langle B_{nn'}(\lambda)\xi_{n'}(\lambda),\int_{\lambda_{k,n}}^\infty\d\mu\,
\frac{\beta_{k,n}(\lambda)^2\beta_{k,n}(\mu)^{-1}\beta_{k,n'}(\lambda)^{-1}}{\pi(\mu-\lambda)}
\;\!\zeta_n(\mu)\right\rangle_{\ltwo(\T)}
\end{equation}
with
\begin{equation}\label{def_Bn}
B_{nn'}(\lambda)
=\beta_{k,n}(\lambda)^{-2}\;\!\P_nv\;\!\M_k(\lambda,0)v\;\!\P_{n'}\in\B\big(\ltwo(\T)\big)
\quad\hbox{for a.e. $\lambda\in (\lambda_{k,n'},\lambda_{k,n})$.}
\end{equation}
But the map $\lambda\mapsto B_{nn'}(\lambda)$ coincides with the map \eqref{defmap}.
Therefore, Lemma \ref{lemcont} and Fubini's theorem imply that \eqref{horreur1} can be
written as $\big\langle Q_k\hspace{1pt}\xi,\zeta\big\rangle_{\Hrond_k}$, with
$Q_k:\Hrond_k\to\Hrond_k$ given for $\xi\in\D_k$, $n\in\Z$ and $\mu>\lambda_{k,n}$ by
$$
(Q_k\xi)_n(\mu)
:=-\sum_{n'\in\Z_k(\lambda_{k,n})}\int_{\lambda_{k,n'}}^{\lambda_{k,n}}
\d\lambda\,\frac{\beta_{k,n}(\lambda)^2\beta_{k,n}(\mu)^{-1}\beta_{k,n'}(\lambda)^{-1}}
{\pi(\mu-\lambda)}\;\!B_{nn'}(\lambda)\xi_{n'}(\lambda).
$$
To simplify the last formula, we define the operator
$
B_{nn'}\in\B\big(\Hrond_{k,n'};
\ltwo\big((\lambda_{k,n'},\lambda_{k,n});\P_n\;\!\ltwo(\T)\big)\big)
$
by
$$
\big(B_{nn'}\xi_{n'}\big)(\lambda):=B_{nn'}(\lambda)\xi_{n'}(\lambda)
\quad\hbox{for a.e. $\lambda\in(\lambda_{k,n'},\lambda_{k,n})$.}
$$
We also define the integral operator
$
C_{nn'}:\ltwo\big((\lambda_{k,n'},\lambda_{k,n});\P_n\;\!\ltwo(\T)\big)
\to\Hrond_{k,n}
$
with kernel
$$
C_{nn'}(\mu,\lambda)
:=\frac{\beta_{k,n}(\lambda)^2\beta_{k,n}(\mu)^{-1}\beta_{k,n'}(\lambda)^{-1}}
{\pi(\mu-\lambda)}\;\!,
\quad\mu>\lambda_{k,n},~\lambda\in (\lambda_{k,n'},\lambda_{k,n}),
$$
and show that $C_{nn'}$ is a Hilbert-Schmidt operator.

\begin{Lemma}
The operator $C_{nn'}$ is a Hilbert-Schmidt operator from
$\ltwo\big((\lambda_{k,n'},\lambda_{k,n});\P_n\;\!\ltwo(\T)\big)$ to $\Hrond_{k,n}$.
\end{Lemma}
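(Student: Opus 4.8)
The plan is to use the standard fact that an integral operator between $L^2$-spaces with values in a Hilbert space is Hilbert--Schmidt if and only if its (operator-valued) kernel is square-integrable over the product of the two measure spaces, in which case the Hilbert--Schmidt norm is the $L^2$-norm of the kernel. Since each $\P_n\ltwo(\T)$ is one-dimensional, the kernel $C_{nn'}(\mu,\lambda)$ acts on $\P_n\ltwo(\T)$ simply as multiplication by the scalar $C_{nn'}(\mu,\lambda)$, so that
$$
\|C_{nn'}\|_{\rm HS}^2
=\int_{\lambda_{k,n}}^\infty\d\mu\int_{\lambda_{k,n'}}^{\lambda_{k,n}}\d\lambda\,\big|C_{nn'}(\mu,\lambda)\big|^2,
$$
and the whole matter reduces to showing that this double integral is finite.

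Recalling that $\beta_{k,n}(\lambda)=|\lambda-\lambda_{k,n}|^{1/4}$ and using $\lambda_{k,n'}<\lambda<\lambda_{k,n}<\mu$, one has
$$
\big|C_{nn'}(\mu,\lambda)\big|^2
=\frac1{\pi^2}\,\frac{(\lambda_{k,n}-\lambda)\,(\mu-\lambda_{k,n})^{-1/2}\,(\lambda-\lambda_{k,n'})^{-1/2}}{(\mu-\lambda)^2}.
$$
I would then introduce the variables $s:=\lambda_{k,n}-\lambda\in(0,c)$, with $c:=\lambda_{k,n}-\lambda_{k,n'}>0$, and $t:=\mu-\lambda_{k,n}\in(0,\infty)$, so that $\mu-\lambda=s+t$ and $\lambda-\lambda_{k,n'}=c-s$, which turns the double integral into
$$
\|C_{nn'}\|_{\rm HS}^2
=\frac1{\pi^2}\int_0^c\d s\,s\,(c-s)^{-1/2}\int_0^\infty\d t\,\frac{t^{-1/2}}{(s+t)^2}.
$$

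For the inner integral, the rescaling $t=s\tau$ gives $\int_0^\infty t^{-1/2}(s+t)^{-2}\,\d t=s^{-3/2}\int_0^\infty\tau^{-1/2}(1+\tau)^{-2}\,\d\tau=\frac\pi2\,s^{-3/2}$, the $\tau$-integral being a convergent Beta integral (integrable near $\tau=0$, where the integrand $\sim\tau^{-1/2}$, and near $\tau=\infty$, where it $\sim\tau^{-5/2}$). Substituting this back leaves $\frac1{2\pi}\int_0^c s^{-1/2}(c-s)^{-1/2}\,\d s$, which is again a convergent Beta integral, integrable at the two endpoints $s=0$ and $s=c$, and equals $\tfrac12$; hence $\|C_{nn'}\|_{\rm HS}^2=\tfrac12<\infty$, which proves the claim. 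The only delicate point is to verify that all these boundary singularities are integrable simultaneously --- the factor $s\cdot s^{-3/2}=s^{-1/2}$ at $s=0$, the factor $(c-s)^{-1/2}$ at $s=c$, the factor $t^{-1/2}$ at $t=0$, and the decay of order $t^{-5/2}$ at $t=\infty$, with no singularity arising from $s+t=0$ since $s,t>0$ --- after which the computation is completely routine.
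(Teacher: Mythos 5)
Your proof is correct and follows essentially the same route as the paper's: both verify the Hilbert--Schmidt property by showing the operator-valued kernel is square-integrable and computing $\|C_{nn'}\|_{\rm HS}^2=1/2$ exactly. The only difference is cosmetic — the paper substitutes $x=(\mu-\lambda_{k,n})^{1/2}$, $y=(\lambda_{k,n}-\lambda)^{1/2}$ and evaluates via an arctan primitive, while you use the linear shifts $s,t$ and reduce to Beta integrals; either computation is fine.
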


\begin{proof}
Using the changes of variables $x:=(\mu-\lambda_{k,n})^{1/2}$,
$y:=(\lambda_{k,n}-\lambda)^{1/2}$, and the notation
$\alpha:=(\lambda_{k,n}-\lambda_{k,n'})^{1/2}$, one obtains that
\begin{align*}
&\int_{\lambda_{k,n}}^\infty\d\mu\int_{\lambda_{k,n'}}^{\lambda_{k,n}}\d\lambda\,
\frac{|\lambda-\lambda_{k,n}|\;\!|\mu-\lambda_{k,n}|^{-1/2}\;\!
|\lambda-\lambda_{k,n'}|^{-1/2}}{\pi^2(\mu-\lambda)^2}\\
&=\frac4{\pi^2}\int_0^\infty\d x\int_0^{\alpha}\d y\,
\frac{y^3(\alpha^2-y^2)^{-1/2}}{(x^2+y^2)^2}\\
&=\frac4{\pi^2}\int_0^\alpha\d y\,\frac{y^3}{(\alpha^2-y^2)^{1/2}}
\left(\frac x{2y^2(x^2+y^2)}+\frac{\arctan(x/y)}{2y^3}\right)\bigg|_{x=0}^{x=\infty}\\
&=\frac1\pi\int_0^\alpha\d y\,(\alpha^2-y^2)^{-1/2} \\
&=1/2.
\end{align*}
It follows that $C_{nn'}$ is a Hilbert-Schmidt operator from
$\ltwo\big((\lambda_{k,n'},\lambda_{k,n});\P_n\;\!\ltwo(\T)\big)$ to $\Hrond_{k,n}$
with Hilbert-Schmidt norm equal to $1/\sqrt2$.
\end{proof}

Therefore, the remainder term \eqref{remainderterm} in the expression for
$(W_{k,-}-1)$ can be written as
$\big\langle Q_k\hspace{1pt}\xi,\zeta\big\rangle_{\Hrond_k}$ with
$Q_k:\Hrond_k\to\Hrond_k$ given by
\begin{equation}\label{def_Q_k}
(Q_k\xi)_n=-\sum_{n'\in \Z_k(\lambda_{k,n})}C_{nn'}B_{nn'}\xi_{n'},
\quad\xi\in\D_k,~n\in\Z,
\end{equation}
where each summand $C_{nn'}B_{nn'}:\Hrond_{k,n'}\to\Hrond_{k,n}$ belongs to the
Hilbert-Schmidt class.

We close the section with two observations which show that the remainder term $Q_k$ is
always small in some suitable sense. First, we consider the case of a constant function $V:$

\begin{Remark}\label{rem_constant}
If the function $V$ is constant, then the remainder term $Q_k$ vanishes. Indeed, in
such a case one can easily check that the operator $\M_k(\lambda,0)$ is diagonal in
the basis $\big\{\frac1{\sqrt{2\pi}}\e^{in(\,\cdot\,)}\big\}_{n\in\Z}\subset\ltwo(\T)$.
As a result, one obtains that $B_{nn'}(\lambda)=0$, which in turn implies that $Q_k=0$
(see \eqref{def_Bn} and \eqref{def_Q_k}).
\end{Remark}

Second, we consider the case of a general function $V:$

\begin{Lemma}\label{lemma_limit}
Assume that $V\in\linf(\R;\R)$ is $2\pi$-periodic. Then, the remainder term $Q_k$
vanishes asymptotically along the free evolution, that is,
$$
\slim_{t\to\pm\infty}\e^{itH_k^0}\U_kQ_k\U_k^*\e^{-itH_k^0}=0
\quad\hbox{in}\quad\ltwo(\Pi).
$$
\end{Lemma}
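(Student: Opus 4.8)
The plan is to exploit the Hilbert--Schmidt structure of $Q_k$ together with a standard Riemann--Lebesgue type argument. Since $\U_k$ is unitary and intertwines $H^0_k$ with the multiplication operator $L_k$ on $\Hrond_k$, one has $\e^{itH^0_k}\U_kQ_k\U_k^*\e^{-itH^0_k}=\U_k^*\big(\e^{itL_k}Q_k\e^{-itL_k}\big)\U_k$ (up to the obvious placement of $\U_k$'s), so it suffices to show that $\e^{itL_k}Q_k\e^{-itL_k}\to0$ strongly on $\Hrond_k$ as $t\to\pm\infty$. Because $\{\e^{\pm itL_k}\}$ is a bounded family of operators and $\D_k$ is dense in $\Hrond_k$, it is enough to prove the convergence on vectors $\xi\in\D_k$; and since each $Q_k$ is a finite sum over $n$ and $n'$ of the operators $-C_{nn'}B_{nn'}$, it reduces to showing that for each fixed pair $n,n'$ with $\lambda_{k,n'}<\lambda_{k,n}$,
$$
\slim_{t\to\pm\infty}\e^{itL_k}C_{nn'}B_{nn'}\e^{-itL_k}=0.
$$

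Next I would unfold this summand explicitly. The operator $\e^{-itL_k}$ acts on the component indexed by $n'$ as multiplication by $\e^{-it\lambda}$ on $(\lambda_{k,n'},\infty)$, and $B_{nn'}$ first restricts to $(\lambda_{k,n'},\lambda_{k,n})$ and multiplies pointwise by the bounded operator-valued function $B_{nn'}(\lambda)$ (which is continuous on the closed interval by Lemma \ref{lemcont}), then $C_{nn'}$ is the integral operator with kernel $C_{nn'}(\mu,\lambda)$, and finally $\e^{itL_k}$ multiplies the $n$-component by $\e^{it\mu}$. Thus, on $\xi\in\D_k$ the $n$-component of $\e^{itL_k}C_{nn'}B_{nn'}\e^{-itL_k}\xi$ at the point $\mu>\lambda_{k,n}$ equals
$$
\e^{it\mu}\int_{\lambda_{k,n'}}^{\lambda_{k,n}}\d\lambda\,
\e^{-it\lambda}\,C_{nn'}(\mu,\lambda)\,B_{nn'}(\lambda)\,\xi_{n'}(\lambda).
$$
The key point is that the $\ltwo$-norm of this vector-valued function of $\mu$ tends to $0$ as $|t|\to\infty$. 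Since the Hilbert--Schmidt kernel $C_{nn'}$ is square-integrable and $\lambda\mapsto B_{nn'}(\lambda)\xi_{n'}(\lambda)$ lies in $\ltwo\big((\lambda_{k,n'},\lambda_{k,n});\P_{n'}\ltwo(\T)\big)$ (boundedness of $B_{nn'}$ plus $\xi_{n'}\in C^\infty_{\rm c}$), I would approximate in Hilbert--Schmidt norm by a finite-rank kernel of the form $\sum_j a_j(\mu)\overline{b_j(\lambda)}$ with smooth compactly supported $a_j,b_j$; the corresponding contribution is $\sum_j\big(\e^{it\mu}a_j(\mu)\big)\int\e^{-it\lambda}\overline{b_j(\lambda)}\langle\dots\rangle\,\d\lambda$, and each inner integral is the Fourier transform at $t$ of a fixed $\lone$ (indeed smooth, compactly supported) function, hence vanishes as $|t|\to\infty$ by Riemann--Lebesgue, while $\|\e^{it\mu}a_j\|_{\ltwo}=\|a_j\|_{\ltwo}$ stays bounded. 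The approximation error in Hilbert--Schmidt norm controls the remainder uniformly in $t$, so a standard $\varepsilon/2$ argument gives the claim.

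Collecting: for each $\xi\in\D_k$ one obtains $\e^{itL_k}Q_k\e^{-itL_k}\xi\to0$ in $\Hrond_k$ as $t\to\pm\infty$, and density of $\D_k$ together with the uniform bound $\|\e^{itL_k}Q_k\e^{-itL_k}\|\le\|Q_k\|$ upgrades this to strong convergence on all of $\Hrond_k$. Transporting back through $\U_k$ yields $\slim_{t\to\pm\infty}\e^{itH^0_k}\U_kQ_k\U_k^*\e^{-itH^0_k}=0$ in $\ltwo(\Pi)$, as stated. The only genuinely delicate point is making the Riemann--Lebesgue argument uniform across the (infinitely many, but in any application only finitely many relevant) summands and handling the vector-valued nature of the kernel; this is dealt with cleanly by the Hilbert--Schmidt approximation, which is why establishing that $C_{nn'}$ is Hilbert--Schmidt (done just above in the excerpt) is exactly the input one needs. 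I do not anticipate any obstacle beyond this bookkeeping.
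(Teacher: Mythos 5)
Your approach is genuinely different from the paper's. The paper does not attack $Q_k$ directly through its kernel structure; instead it exploits the already-established identity $W_{k,-}-1-\big(1\otimes R(A_+)\big)(S_k-1)=\U_k^*Q_k\U_k$ together with the known asymptotics under free conjugation of the other three terms: $\e^{itH_k^0}W_{k,-}\e^{-itH_k^0}\to S_k$ (resp.\ $1$) as $t\to+\infty$ (resp.\ $-\infty$), $\e^{itH_k^0}S_k\e^{-itH_k^0}\to S_k$, and---via Proposition~\ref{prop_asymp} in the Appendix, proved by an invariance-principle and Mellin-transform argument---$\e^{itH_k^0}\big(1\otimes R(A_+)\big)\e^{-itH_k^0}\to1$ (resp.\ $0$). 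Subtracting, the limit of $\e^{itH_k^0}\U_k^*Q_k\U_k\e^{-itH_k^0}$ is forced to be $0$ in both directions, with no need to control the channel structure of $Q_k$ at all.

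There is a genuine gap in your argument, and it is precisely that channel structure. You assert that ``$Q_k$ is a finite sum over $n$ and $n'$ of the operators $-C_{nn'}B_{nn'}$,'' but the sum over $n$ is not finite: even for $\xi\in\D_k$, the output $(Q_k\xi)_n$ is generically nonzero for infinitely many $n$, because each input channel $n'$ with $\xi_{n'}\ne0$ feeds into every output channel $n$ with $\lambda_{k,n}>\lambda_{k,n'}$. Your Riemann--Lebesgue/finite-rank approximation cleanly gives $\|(\e^{itL_k}C_{nn'}B_{nn'}\e^{-itL_k}\xi)_n\|\to0$ for each fixed pair, and $\|\e^{itL_k}Q_k\e^{-itL_k}\xi\|\le\|Q_k\|\,\|\xi\|$ bounds the total mass, but neither yields the uniform-in-$t$ tail bound $\sum_{|n|>N}\|(\e^{itL_k}Q_k\e^{-itL_k}\xi)_n\|^2<\varepsilon$ needed to pass from componentwise to full convergence. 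That tail bound would follow if $Q_k$ restricted to the finitely many relevant input channels were compact, i.e.\ if $\sum_{n}\|C_{nn'}B_{nn'}\|_{\rm HS}^2<\infty$ for each $n'$; but the paper only establishes that each individual $C_{nn'}B_{nn'}$ is Hilbert--Schmidt (with $\|C_{nn'}\|_{\rm HS}=1/\sqrt2$ independent of $n$, and no decay of $\|B_{nn'}\|$ in $n$ is proved), and the introduction explicitly flags compactness of the remainder as an expectation under stronger hypotheses on $V$, not a consequence of $V\in\linf(\R;\R)$. Your route therefore needs a supplementary decay estimate that is not available here, and the paper's indirect argument is designed to avoid exactly this difficulty.
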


\begin{proof}
The equations \eqref{leadingterm}-\eqref{remainderterm}, Proposition \ref{prop_leading}
and the results of this section imply that
$$
\big\langle\U_k\big(W_{k,-}-1\big)\U_k^*\xi,\zeta\big\rangle_{\Hrond_k}
=\big\langle\U_k\big(1\otimes R(A_+)\big)(S_k-1)\U_k^*\xi,
\zeta\big\rangle_{\Hrond_k}
+\big\langle Q_k\;\!\xi,\zeta\big\rangle_{\Hrond_k},
\quad\xi,\zeta\in\D_k.
$$
Therefore, we deduce from the density of $\D_k$ in $\Hrond_k$ and the unitarity of
$\U_k:\ltwo(\Pi)\to\Hrond_k$ that
\begin{equation}\label{eq_limit_1}
W_{k,-}-1-\big(1\otimes R(A_+)\big)(S_k-1)=\U_k^*Q_k\U_k.
\end{equation}
Also, we know from the existence and completeness of the wave operators
$W_{k,\pm}$ that
\begin{equation}\label{eq_limit_2}
\slim_{t\to\pm\infty}\e^{itH_k^0}S_k\e^{-itH_k^0}=S_k,
\quad
\slim_{t\to\infty}\e^{itH_k^0}W_{k,-}\e^{-itH_k^0}=S_k
\quad\hbox{and}\quad
\slim_{t\to-\infty}\e^{itH_k^0}W_{k,-}\e^{-itH_k^0}=1.
\end{equation}
Furthermore, the definition of the function $R$ (see \eqref{def_R}) and Proposition
\ref{prop_asymp} imply in $\ltwo(\R_+)$ the relations
$$
\slim_{t\to\infty}\e^{it(-\triangle_{\rm N})}R(A_+)\e^{-it(-\triangle_{\rm N})}=1
\quad\hbox{and}\quad
\slim_{t\to-\infty}\e^{it(-\triangle_{\rm N})}R(A_+)\e^{-it(-\triangle_{\rm N})}=0,
$$
which in turn imply in $\ltwo(\Pi)$ the relations
\begin{equation}\label{eq_limit_3}
\slim_{t\to\infty}\e^{itH_k^0}\big(1\otimes R(A_+)\big)\e^{-itH_k^0}=1
\quad\hbox{and}\quad
\slim_{t\to-\infty}\e^{itH_k^0}\big(1\otimes R(A_+)\big)\e^{-itH_k^0}=0.
\end{equation}
Then, one can conclude by combining the equations
\eqref{eq_limit_1}-\eqref{eq_limit_3}.
\end{proof}

\subsection{New formula for the wave operators}

In this final section, we collect the information on the wave operators obtained so
far. The results are stated in two corollaries.

\begin{Corollary}\label{cor_wave_k}
Assume that $V\in\linf(\R;\R)$ is $2\pi$-periodic. Then, we have in $\ltwo(\Pi)$ the
equalities
\begin{equation}\label{eq_W_k_moins}
W_{k,-}-1=\big(1\otimes R(A_+)\big)(S_k-1)+\U_k^*Q_k\U_k
\end{equation}
and
\begin{equation}\label{eq_W_k_plus}
W_{k,+}-1=\big(1-1\otimes R(A_+)\big)(S_k^*-1)+\U_k^*Q_k\U_k S_k^*,
\end{equation}
with $R$ and $Q_k$ given in \eqref{def_R} and \eqref{def_Q_k}. In addition, the term
$Q_k$ satisfies
$$
\slim_{t\to\pm\infty}\e^{itH_k^0}\U_k Q_k\U_k^*\e^{-itH_k^0}=0.
$$
\end{Corollary}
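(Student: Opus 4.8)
The plan is to assemble the results of Sections~\ref{section_leading} and \ref{section_remainder} into the identity for $W_{k,-}-1$, and then to obtain the identity for $W_{k,+}-1$ from it by an elementary algebraic manipulation based on the completeness of the wave operators; the last assertion is already available as a lemma.

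First I would establish \eqref{eq_W_k_moins}. Starting from the decomposition of $\big\langle\U_k(W_{k,-}-1)\U_k^*\xi,\zeta\big\rangle_{\Hrond_k}$ into the two terms \eqref{leadingterm} and \eqref{remainderterm}, I would insert the evaluation of \eqref{leadingterm} provided by Proposition~\ref{prop_leading} and the identification of \eqref{remainderterm} with $\big\langle Q_k\xi,\zeta\big\rangle_{\Hrond_k}$ obtained in Section~\ref{section_remainder} (with $Q_k$ given by \eqref{def_Q_k}). This gives
$$
\big\langle\U_k(W_{k,-}-1)\U_k^*\xi,\zeta\big\rangle_{\Hrond_k}
=\big\langle\U_k\big(1\otimes R(A_+)\big)(S_k-1)\U_k^*\xi,\zeta\big\rangle_{\Hrond_k}
+\big\langle Q_k\xi,\zeta\big\rangle_{\Hrond_k}
$$
for all $\xi,\zeta\in\D_k$, and since $\D_k$ is dense in $\Hrond_k$ and $\U_k$ is unitary, \eqref{eq_W_k_moins} follows as an operator identity in $\B(\ltwo(\Pi))$; this is exactly \eqref{eq_limit_1} from the proof of Lemma~\ref{lemma_limit}.

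Next I would derive \eqref{eq_W_k_plus} from \eqref{eq_W_k_moins}. Since the wave operators $W_{k,\pm}$ exist and are complete, and since $H^0_k$ has purely absolutely continuous spectrum, each $W_{k,\pm}$ is an isometry of $\ltwo(\Pi)=\H_{\rm ac}(H^0_k)$ onto $\H_{\rm ac}(H^V_k)$; hence $W_{k,+}W_{k,+}^*$ is the orthogonal projection onto $\H_{\rm ac}(H^V_k)$, which contains the range of $W_{k,-}$, so that $W_{k,-}=W_{k,+}W_{k,+}^*W_{k,-}=W_{k,+}S_k$, and, $S_k$ being unitary, $W_{k,+}=W_{k,-}S_k^*$. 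Writing $W_{k,+}-1=(W_{k,-}-1)S_k^*+(S_k^*-1)$ and inserting \eqref{eq_W_k_moins} gives
$$
W_{k,+}-1=\big(1\otimes R(A_+)\big)(S_k-1)S_k^*+\U_k^*Q_k\U_k S_k^*+(S_k^*-1),
$$
and using once more the unitarity of $S_k$ in the form $(S_k-1)S_k^*=1-S_k^*=-(S_k^*-1)$, the right-hand side collapses to $\big(1-1\otimes R(A_+)\big)(S_k^*-1)+\U_k^*Q_k\U_k S_k^*$, which is \eqref{eq_W_k_plus}.

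Finally, the asymptotic relation $\slim_{t\to\pm\infty}\e^{itH_k^0}\U_k Q_k\U_k^*\e^{-itH_k^0}=0$ is precisely the statement of Lemma~\ref{lemma_limit}, so nothing further is required. I do not expect a genuine obstacle here: the whole argument rests on the limit--integral interchanges already justified in Sections~\ref{section_leading} and \ref{section_remainder} and on the completeness of $W_{k,\pm}$, which is what legitimises the identity $W_{k,+}=W_{k,-}S_k^*$; the only point to watch is the careful bookkeeping of the various terms and the placement of $S_k^*$ when passing from $W_{k,-}$ to $W_{k,+}$.
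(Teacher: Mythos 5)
Your proof is correct and follows essentially the same route as the paper: combine Proposition~\ref{prop_leading} and the results of Section~\ref{section_remainder} to get \eqref{eq_W_k_moins}, pass to \eqref{eq_W_k_plus} via $W_{k,+}=W_{k,-}S_k^*$, and invoke Lemma~\ref{lemma_limit} for the asymptotic vanishing of $Q_k$. The only difference is that you spell out the (correct) algebraic collapse $(S_k-1)S_k^*=-(S_k^*-1)$ and the completeness argument behind $W_{k,+}=W_{k,-}S_k^*$, both of which the paper leaves implicit.
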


\begin{proof}
As already mentioned in the proof of Lemma \ref{lemma_limit}, the equations
\eqref{leadingterm}-\eqref{remainderterm}, Proposition \ref{prop_leading} and the
results of Section \ref{section_remainder} imply the formula \eqref{eq_W_k_moins} for
$W_{k,-}$. The formula \eqref{eq_W_k_plus} for $W_{k,+}$ follows from
\eqref{eq_W_k_moins} and from the relation $W_{k,+}=W_{k,-}S_k^*$. Finally, the
properties of the term $Q_k$ follow directly from Lemma \ref{lemma_limit}.
\end{proof}

Now, we know from \cite[Sec.~2.4]{Fra03} that the wave operators
$W_\pm\equiv W_\pm(H^0,H^V)$ and the scattering operator $S\equiv S(H^0,H^V)$ for the
pair $\{H^0,H^V\}$ admit direct integral decompositions
$$
\G\;\!W_\pm\;\!\G^{-1}=\int_{[-1/2,1/2]}^\oplus W_{k,\pm}\,\d k
\qquad\hbox{and}\qquad
\G\;\!S\;\!\G^{-1}=\int_{[-1/2,1/2]}^\oplus S_k\,\d k,
$$
with $\G:\ltwo(\R\times\R_+)\to\int_{[-1/2,1/2]}^\oplus\ltwo(\Pi)\,\d k$ the Gelfand
transform of Section \ref{section_direct}. Therefore, one directly infers from
Corollary \ref{cor_wave_k} the following new formulas for $W_\pm:$

\begin{Corollary}\label{cor_full_wave}
Assume that $V\in\linf(\R;\R)$ is $2\pi$-periodic. Then, we have in
$\ltwo(\R\times \R_+)$ the equalities
$$
W_--1=\big(1\otimes R(A_+)\big)(S-1)+Q
\qquad\hbox{and}\qquad
W_+-1=\big(1-1\otimes R(A_+)\big)(S^*-1)+QS^*,
$$
with $Q:=\G^{-1}\left(\int_{[-1/2,1/2]}^\oplus\U_k^*Q_k\U_k\,\d k\right)\G$.
\end{Corollary}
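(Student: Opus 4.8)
The plan is to obtain the full identities by conjugating the fibered formulas of Corollary~\ref{cor_wave_k} with the Gelfand transform $\G$ and integrating over $k$. The key structural observation is that $\G$ acts only on the periodic variable: writing $\ltwo(\R\times\R_+)=\ltwo(\R)\otimes\ltwo(\R_+)$ and $\ltwo(\Pi)=\ltwo(\T)\otimes\ltwo(\R_+)$, the transform $\G$ is of the form $\G_1\otimes 1$, with $\G_1$ the scalar Gelfand transform on $\ltwo(\R)$. Since the operator $R(A_+)$ acts on the transverse factor $\ltwo(\R_+)$, it is untouched by $\G$, and one has $\G\big(1\otimes R(A_+)\big)\G^{-1}=\int_{[-1/2,1/2]}^\oplus\big(1\otimes R(A_+)\big)\,\d k$, a decomposable operator with $k$-independent fiber.

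Granting this, I would proceed as follows. First recall from \cite[Sec.~2.4]{Fra03} the decompositions $\G W_\pm\G^{-1}=\int^\oplus W_{k,\pm}\,\d k$, $\G S\G^{-1}=\int^\oplus S_k\,\d k$, and note that $\G S^*\G^{-1}=\int^\oplus S_k^*\,\d k$ by taking adjoints. Applying \eqref{eq_W_k_moins} in each fiber and using that the direct integral of a product of decomposable operators is the product of the direct integrals, we get
\[
\int_{[-1/2,1/2]}^\oplus\big(1\otimes R(A_+)\big)(S_k-1)\,\d k
=\G\big(1\otimes R(A_+)\big)\G^{-1}\cdot\G(S-1)\G^{-1}
=\G\Big(\big(1\otimes R(A_+)\big)(S-1)\Big)\G^{-1},
\]
so that $\G(W_--1)\G^{-1}=\G\big((1\otimes R(A_+))(S-1)\big)\G^{-1}+\int^\oplus\U_k^*Q_k\U_k\,\d k$; conjugating back by $\G$ yields $W_--1=(1\otimes R(A_+))(S-1)+Q$ with $Q$ as stated. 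For $W_+$, apply \eqref{eq_W_k_plus} fiberwise and use $\G S^*\G^{-1}=\int^\oplus S_k^*\,\d k$ together with the same product rule to get $W_+-1=(1-1\otimes R(A_+))(S^*-1)+QS^*$.

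The one point requiring a word of care is that the object $Q=\G^{-1}\big(\int^\oplus\U_k^*Q_k\U_k\,\d k\big)\G$ is well defined, i.e.\ that $k\mapsto\U_k^*Q_k\U_k\in\B\big(\ltwo(\Pi)\big)$ is a bounded measurable field of operators. This follows from the fiber identity \eqref{eq_W_k_moins}, which gives $\U_k^*Q_k\U_k=W_{k,-}-1-\big(1\otimes R(A_+)\big)(S_k-1)$, combined with the known measurability of $k\mapsto W_{k,\pm}$ and $k\mapsto S_k$ entering the decompositions of \cite{Fra03}; the norm of $\U_k^*Q_k\U_k$ is bounded uniformly in $k$ for the same reason. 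I expect this measurability/decomposability bookkeeping, together with the verification that $\G$ leaves $1\otimes R(A_+)$ invariant, to be the only non-immediate ingredients; the rest is a direct fiberwise transcription of Corollary~\ref{cor_wave_k}.
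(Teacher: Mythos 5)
Your proposal is correct and takes the same route as the paper: cite the direct integral decompositions $\G W_\pm\G^{-1}=\int^\oplus W_{k,\pm}\,\d k$ and $\G S\G^{-1}=\int^\oplus S_k\,\d k$ from \cite[Sec.~2.4]{Fra03}, observe that $\G$ acts only in the periodic variable so that $1\otimes R(A_+)$ is carried to the constant-fiber decomposable operator $\int^\oplus(1\otimes R(A_+))\,\d k$, and then read off the full formulas by applying Corollary~\ref{cor_wave_k} in each fiber. The paper states this in a single sentence; your write-up simply fills in the bookkeeping (product of decomposable operators, measurability and uniform boundedness of $k\mapsto\U_k^*Q_k\U_k$) that the authors leave implicit.
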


\section{Appendix}\label{Sec_Appendix}
\setcounter{equation}{0}

We present in this appendix a proposition of independent interest on the asymptotic
behaviour of functions of the generator of dilations $A_+$ under the time evolution
generated by the Neumann Laplacian $-\triangle_{\rm N}$. Before this, we recall that
the usual weighted $\ltwo$-spaces are defined by
$$
\H_t(\R):=\left\{\varphi\in\ltwo(\R)
\mid\int_\R\big(1+|x|^2\big)^t|\varphi(x)|^2<\infty\right\},
\quad t\ge0.
$$

\begin{Proposition}\label{prop_asymp}
Let $f\in C^1(\R)$ satisfy $f'\in\H_t(\R)$ for some $t>1/2$ and
$\lim_{x\to\pm\infty}f(x)=f_\pm$ for some $f_\pm\in\C$. Then, one has
\begin{equation}\label{eq_limits}
\slim_{t\to\pm\infty}\e^{it(-\triangle_{\rm N})}f(A_+)\e^{-it(-\triangle_{\rm N})}
=f_\pm.
\end{equation}
\end{Proposition}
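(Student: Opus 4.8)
The plan is to conjugate the whole expression by the cosine transform $\Fc$ of \eqref{eq_cosine}, for which $\Fc(-\triangle_{\rm N})\Fc^*$ is the operator of multiplication by $x\mapsto x^2$ on $\ltwo(\R_+)$ (denote it $X^2$) and $\Fc A_+\Fc^*=-A_+$ (the cosine transform intertwines the dilation group on $\R_+$ with its inverse, as a one-line computation with \eqref{eq_cosine} shows). Since $\Fc$ is independent of $t$, it suffices to prove
\[
\slim_{t\to\pm\infty}\e^{itX^2}f(-A_+)\e^{-itX^2}=f_\pm\quad\text{in }\ltwo(\R_+).
\]
Because $f'\in\H_t(\R)$ with $t>1/2$, Cauchy--Schwarz gives $f'\in\lone(\R)$, so from $\lim_{x\to\pm\infty}f(x)=f_\pm$ one has, for every $y\in\R$,
\begin{equation*}
f(y)-f_+=-\int_\R f'(s)\,\mathbf 1_{(-\infty,s]}(y)\,\d s,
\qquad
f(y)-f_-=\int_\R f'(s)\,\mathbf 1_{(s,\infty)}(y)\,\d s.
\end{equation*}
Feeding these into the functional calculus of $-A_+$ and conjugating by $\e^{\pm itX^2}$, the uniform bound $\|\e^{itX^2}E(\,\cdot\,)\e^{-itX^2}\|\le1$ together with $f'\in\lone$ lets dominated convergence reduce the two limits to the statements that, for every $s\in\R$ and every $\varphi$ in a dense subset of $\ltwo(\R_+)$,
\begin{equation*}
\big\|E_{(-\infty,s]}(-A_+)\,\e^{-itX^2}\varphi\big\|\xrightarrow[t\to+\infty]{}0
\qquad\text{and}\qquad
\big\|E_{(s,\infty)}(-A_+)\,\e^{-itX^2}\varphi\big\|\xrightarrow[t\to-\infty]{}0,
\end{equation*}
where $E_\bullet(-A_+)$ are the spectral projections of $-A_+$ and the outer unitary $\e^{itX^2}$ has been dropped; the extension from a dense set to all of $\ltwo(\R_+)$ then uses the uniform bound $\|\e^{itX^2}f(-A_+)\e^{-itX^2}-f_\pm\|\le2\|f\|_\infty$.

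For this last step I would diagonalize $A_+$ by the Mellin transform $\M:\ltwo(\R_+)\to\ltwo(\R)$, $(\M\varphi)(\mu)=(2\pi)^{-1/2}\int_0^\infty x^{-i\mu-1/2}\varphi(x)\,\d x$, which turns $A_+$ into multiplication by $\mu$, so that $E_{(-\infty,s]}(-A_+)=\M^*\mathbf 1_{[-s,\infty)}\M$. Taking $\varphi\in C^\infty_{\rm c}(\R_+)$ with $\operatorname{supp}\varphi\subset[a,b]\subset(0,\infty)$ (a dense family), one gets
\begin{equation*}
\big\|E_{(-\infty,s]}(-A_+)\,\e^{-itX^2}\varphi\big\|^2
=\int_{-s}^\infty\Big|\tfrac1{\sqrt{2\pi}}\int_a^b\e^{-i(\mu\log x+tx^2)}\,x^{-1/2}\varphi(x)\,\d x\Big|^2\d\mu,
\end{equation*}
an oscillatory integral whose phase has $x$-derivative $(\mu+2tx^2)/x$. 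Once $t$ is large enough that $2ta^2>s$, this derivative is bounded below by $(\mu+2ta^2)/b>0$ on $[a,b]$ uniformly for $\mu\ge-s$; a single integration by parts (using that there $|\mathrm{phase}''|/|\mathrm{phase}'|^2\le 2/(\mu+2ta^2)$) bounds the inner integral by $C_\varphi/(\mu+2ta^2)$, hence the whole expression by $C_\varphi^2\int_{-s}^\infty(\mu+2ta^2)^{-2}\,\d\mu=C_\varphi^2/(2ta^2-s)\to0$ as $t\to+\infty$. The limit for $t\to-\infty$, with $E_{(s,\infty)}(-A_+)=\M^*\mathbf 1_{(-\infty,-s)}\M$ and phase $\mu\log x+tx^2$ with $t<0$, is entirely symmetric.

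The only genuinely delicate point is this last estimate: the phase $\mu\log x+tx^2$ is stationary at $x=\sqrt{-\mu/(2t)}$ whenever $\mu$ and $t$ have opposite signs, so before invoking non-stationary phase one must check that, for the half-line of $\mu$ actually seen by $E_{(-\infty,s]}(-A_+)$ (resp.\ $E_{(s,\infty)}(-A_+)$) and for $|t|$ large, this stationary point has left $\operatorname{supp}\varphi$; this is exactly what forces the ``for $t$ large enough'' and the factor $2ta^2$ in the bound above. Everything else --- the conjugation by $\Fc$, the $f'\in\lone$ identity, the dominated-convergence reduction, and the Mellin diagonalization --- is routine.
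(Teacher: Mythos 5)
Your proof is correct, but it takes a genuinely different route from the paper's. Both proofs begin with the same observation that conjugation by the cosine transform $\Fc$ reduces the claim to a statement about $\e^{itX^2}f(\mp A_+)\e^{-itX^2}$; from there they diverge. The paper first introduces the auxiliary operator $B$ of multiplication by $\frac12\ln(x^2)$, uses the exact commutation relation $\e^{itB}A_+\e^{-itB}=A_+-t$ to get the limits along the group $\e^{itB}$ for free, and then transfers the result to $\e^{it\e^{2B}}=\e^{itX^2}$ via the Baumg\"artel--Wollenberg invariance principle. That step is where the hypothesis $f'\in\H_t(\R)$, $t>1/2$, is really used: it guarantees (after Mellin conjugation) that $f'(X)\eta(-P)$ is trace class, which is the admissibility condition for the invariance principle. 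Your argument instead integrates $f'$ against spectral projections of $-A_+$, reduces by dominated convergence (using only $f'\in\lone$, a consequence of $f'\in\H_t$, $t>1/2$, by Cauchy--Schwarz) to the decay of $\|E_{(-\infty,s]}(-A_+)\e^{-itX^2}\varphi\|$, and then proves that decay by an explicit non-stationary phase estimate in the Mellin variable, after noting that the stationary point $x=\sqrt{-\mu/(2t)}$ eventually leaves $\operatorname{supp}\varphi$. Your route is more elementary and self-contained (no invariance principle, no trace-class verification) at the cost of a hands-on oscillatory-integral estimate, and it even slightly weakens the required hypothesis to $f'\in\lone(\R)$; the paper's route is shorter once one accepts the abstract machinery, and the $B$-translation trick is quite elegant. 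Both are valid proofs.
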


\begin{proof}
The operator of multiplication in $\ltwo(\R_+)$ given by
$$
(B\varphi)(x):=\frac12\ln(x^2)\;\!\varphi(x),
\quad\varphi\in C^\infty_{\rm c}(\R_+),
$$
is essentially self-adjoint \cite[Ex.~5.1.15]{Ped89}, with self-adjoint extension
denoted by the same symbol. Also, a direct calculation shows that $B$ and $A_+$
satisfy for $t\in\R$ and $\varphi\in C^\infty_{\rm c}(\R_+)$ the relation
$$
\e^{itB}A_+\e^{-itB}\varphi=(A_+-t)\;\!\varphi.
$$
Since $C^\infty_{\rm c}(\R_+)$ is a core for $A_+$, this implies that
$\e^{itB}A_+\e^{-itB}=(A_+-t)$ as self-adjoint operators. Therefore, one obtains that
$$
\slim_{t\to\pm\infty}\e^{itB}f(A_+)\e^{-itB}
=\slim_{t\to\pm\infty}f\big(\e^{itB}A_+\e^{-itB}\big)
=\slim_{t\to\pm\infty}f(A_+-t)
=f_\mp.
$$
Now, one can apply to the last relation the invariance principle for wave operators
as presented in \cite[Sec.~16.1.1]{BW83} to obtain for each
$\eta\in C^\infty_{\rm c}(\R)$ the relation
\begin{equation}\label{rel_eta}
\slim_{t\to\pm\infty}\e^{it\e^{2B}}f(A_+)\e^{-it\e^{2B}}\eta(B)
=f_\mp\;\!\eta(B).
\end{equation}
For this, one has to check that the function $x\mapsto\e^{2x}$ is
admissible in the sense of \cite[Def.~8.1.16]{BW83} and that the commutator
$Bf(A_+)\eta(B)-f(A_+)\eta(B)B$, defined as a quadratic form on $\dom(B)$,
extends to a trace class operator. The first condition is trivially verified. For the
second condition, we have the following equalities in the form sense on
$C^\infty_{\rm c}(\R_+):$
\begin{align*}
Bf(A_+)\eta(B)-f(A_+)\eta(B)B
&=-i\left(\hbox{s-}\frac\d{\d t}\e^{itB}f(A_+)\e^{-itB}\right)_{t=0}\eta(B)\\
&=-i\left(\hbox{s-}\frac\d{\d t}f(A_+-t)\right)_{t=0}\eta(B)\\
&=if'(A_+)\eta(B).
\end{align*}
Therefore, the commutator $Bf(A_+)\eta(B)-f(A_+)\eta(B)B$ extends to the bounded
operator $if'(A_+)\eta(B)$ by density of $C^\infty_{\rm c}(\R_+)$ in $\dom(B)$.
On another hand, if $\mathcal M:\ltwo(\R_+)\to\ltwo(\R)$ denotes the Mellin
transform as given in \cite[Sec.~1.5]{BE11}, then it is known that
$\mathcal MA_+\mathcal M^{-1}=X$ and $\mathcal MB\mathcal M^{-1}=-P$,
with $X$ the multiplication operator by the variable in $\ltwo(\R)$ and $P$ the
differentiation operator $-i\nabla$ in $\ltwo(\R)$. It follows that
$$
f'(A_+)\eta(B)
=\mathcal M^{-1}f'(X)\eta(-P)\mathcal M,
$$
with $f'(X)\eta(-P)$ of trace class due to the decay assumption on $f'$
(see \cite[Cor.~4.1.4]{ABG96}). Therefore, the operator $f'(A_+)\eta(B)$ is also trace
class, and the second condition is verified. So, the relation \eqref{rel_eta} holds
and implies that
$
\slim_{t\to\pm\infty}\e^{it\e^{2B}}f(A_+)\e^{-it\e^{2B}}\varphi=f_\mp\;\!\varphi
$
for each vector $\varphi\in\ltwo(\R_+)$ such that $\varphi=\eta(B)\varphi$
for some $\eta\in C^\infty_{\rm c}(\R)$. Since this set of vectors $\varphi$
is dense in $\ltwo(\R_+)$, one infers that
$$
\slim_{t\to\pm\infty}\e^{it\e^{2B}}f(A_+)\e^{-it\e^{2B}}=f_\mp.
$$
Finally, using the fact that $\e^{2B}=\Fc\;\!(-\triangle_{\rm N})\Fc^{-1}$ and
$A_+=-\Fc\;\!A_+\Fc^{-1}$ with $\Fc$ the cosine transform \eqref{eq_cosine}, one obtains
from the last relation that
$$
\slim_{t\to\pm\infty}\e^{it(-\triangle_{\rm N})}f(-A_+)\e^{-it(-\triangle_{\rm N})}
=f_\mp,
$$
which is equivalent to \eqref{eq_limits}.
\end{proof}



\end{document}